\setlist{nosep,leftmargin=\parindent}
\newif\ifdispComments
\newtheorem*{rep@theorem}{\rep@title}
\newcommand{\newreptheorem}[2]{%
\newenvironment{rep#1}[1]{%
 \def\rep@title{#2 \ref{##1}}%
 \begin{rep@theorem}}%
 {\end{rep@theorem}}}
\newcommand{\mypar}[1]{\vspace{1mm}\noindent\textit{#1.}}
\newcommand*\circled[1]{\tikz[baseline=(char.base)]{
            \node[shape=circle,fill,inner sep=2pt] (char) {\textcolor{white}{#1}};}}
\newcommand{\rone}{(\emph{i})~}
\newcommand{\rtwo}{(\emph{ii})~}
\newcommand{\rthree}{(\emph{iii})~}
\newcommand{\lbbar}{\{\kern-0.5ex|}
\newcommand{\rbbar}{|\kern-0.5ex\}}
\newcommand{\biglbbar}{\left\{\kern-0.5ex\left|}
\newcommand{\bigrbbar}{\right|\kern-0.5ex\right\}}
\newcommand{\ex}{\mathsf{e}}
\def\messy{\textsc{MESSY}\xspace}
\def\nope{\textsc{nope}\xspace}
\def\nope{\textsc{Nope}\xspace}
\def\nay{\textsc{Nay}\xspace}
\newcommand{\semgus}{\textsc{SemGuS}\xspace}
\def\sygus{\textsc{SyGuS}\xspace}
\newcommand{\type}{\tau}
\newcommand{\RHS}{\textit{RHS}}           
\newcommand{\Etrue}{{{\mathsf{true}}}}
\newcommand{\Efalse}{{{\mathsf{false}}}}
\newcommand{\Et}{{{\mathsf{t}}}}
\newcommand{\Ef}{{{\mathsf{f}}}}
\newcommand{\Eifthenelse}[3]{{\mathsf{if}}\ {#1}\ {\mathsf{then}}\ {#2}\ {\mathsf{else}}\ {#3}}
\newcommand{\Eseq}[2]{{#1}{\mathsf{;}}\ {#2}}
\newcommand{\sem}[1]{\llbracket{#1}\rrbracket}       
\newcommand{\Eassign}[2]{{{#1}\ \mathsf{:=}\ {#2}}}
\newcommand{\Ewhile}[2]{\mathsf{while} \; {#1} \; \mathsf{do} \; {#2}}
\newcommand{\semantics}[1]{\llbracket {#1} \rrbracket}
\newcommand{\Omit}[1]{}
\newenvironment{mybox}[1][gray!20]{
	\begin{tcolorbox}[   
		breakable,
		left=0pt,
		right=0pt,
		top=0pt,
		bottom=-1pt,
		colback=#1,
		colframe=#1,
		width=\dimexpr\textwidth\relax,
		boxsep=2pt,
		arc=0pt,outer arc=0pt,
		]
	}{
\end{tcolorbox}
}
\newcounter{resq}
\newcommand{\triple}[3]{\{{#1}\} \: #2 \: \{#3\}}
\newcommand{\bigutriple}[3]{\biglbbar {#1} \bigrbbar \: #2 \: \biglbbar #3 \bigrbbar}
\newcommand{\bigutripletight}[3]{\biglbbar {#1} \bigrbbar  #2 \biglbbar #3 \bigrbbar}
\newcommand{\utriple}[3]{\lbbar {#1} \rbbar \: #2 \: \lbbar #3 \rbbar}
\newcommand{\utriplenospace}[3]{\lbbar {#1} \rbbar  #2 \lbbar #3 \rbbar}
\def\state{r}
\def\stmt{\mathsf{s}}
\def\exp{\mathsf{e}}
\def\bexp{\mathsf{b}}
\newcommand{\Eskip}{\mathsf{Skip}}
\newcommand{\aP}{\mathcal{P}}
\newcommand{\aQ}{\mathcal{Q}}
\newcommand{\aR}{\mathcal{R}}
\newcommand{\aI}{\mathcal{I}}
\newcommand{\aB}{\mathcal{B}}
\newcommand{\vars}{\mathit{vars}}
\newcommand{\zero}{\mathsf{Zero}}
\newcommand{\one}{\mathsf{One}}
\newcommand{\varrule}{\mathsf{Var}}
\newcommand{\truerule}{\mathsf{True}}
\newcommand{\falserule}{\mathsf{False}}
\newcommand{\conj}{\mathsf{Conj}}
\newcommand{\notrule}{\mathsf{Not}}
\newcommand{\hp}{\mathsf{HP}}
\newcommand{\applyhp}{\mathsf{ApplyHP}}
\newcommand{\inv}{\mathsf{Inv}}
\newcommand{\weaken}{\mathsf{Weaken}}
\newcommand{\seqrule}{\mathsf{Seq}}
\newcommand{\assignrule}{\mathsf{Assign}}
\newcommand{\subone}{\mathsf{Sub1}}
\newcommand{\subtwo}{\mathsf{Sub2}}
\newcommand{\grmdisj}{\mathsf{GrmDisj}}
\newcommand{\while}{\mathsf{While}}
\newcommand{\whilerule}{\while}
\newcommand{\plusrule}{\mathsf{Plus}}
\newcommand{\minusrule}{\mathsf{Minus}}
\newcommand{\multrule}{\mathsf{Mult}}
\newcommand{\divrule}{\mathsf{Div}}
\newcommand{\andrule}{\mathsf{And}}
\newcommand{\siterule}{\mathsf{ITE}}
\newcommand{\ltrule}{\mathsf{LT}}
\newcommand{\eqrule}{\mathsf{Eq}}
\newcommand{\set}[1]{\{{#1}\}}
\newcommand{\stateset}[1]{\{{#1}\}}
\newcommand{\exset}[1]{\{{#1}\}}
\newcommand{\lrangle}[1]{\langle {#1} \rangle}
\newcommand{\syfirst}{\mathit{sy}_{\mathsf{first}}}
\newcommand{\gfirst}{G_{\mathsf{first}}}
\newcommand{\nstart}{\mathit{Start}}
\newcommand{\mymod}{\text{mod }}
\newcommand{\smod}[3]{{#1} \equiv_{#3} {#2}}
\newcommand{\snmod}[3]{{#1} \not \equiv_{#3} {#2}}
\newcommand{\ginf}{G_{\mathsf{inf}}}
\newcommand{\syu}{sy_{\mathsf{u}}}
\newcommand{\gu}{G_{\mathsf{u}}}
\newcommand{\syfin}{sy_{\mathsf{fin}}}
\newcommand{\gfin}{G_{\mathsf{fin}}}
\newcommand{\cinc}[1]{\mathsf{INC}\ {#1}}
\newcommand{\cdec}[1]{\mathsf{DEC}\ {#1}}
\newcommand{\cjez}[2]{\mathsf{JEZ}\ {#1}\ {#2}}
\newcommand{\fb}{\vec{\beta}}
\newcommand{\eq}{=}
\renewcommand{\vec}[1]{\bold{#1}}
\newcommand{\ext}[2]{\mathsf{ext}({#1}, {#2})}
\newcommand{\proj}[2]{\mathsf{proj_{e}}({#1}, {#2})}
\newcommand{\proji}[2]{\mathsf{proj_{i}}({#1}, {#2})}
\newcommand{\tbranch}{\semantics{s_1}(\semantics{b}(\pi))}
\newcommand{\fbranch}{\semantics{s_2}(\semantics{b}(\pi))}
\newcommand{\tbranchbeta}{\semantics{s_1}(\beta)}
\newcommand{\fbranchbeta}{\semantics{s_2}(\beta)}
\newcommand{\tbranchext}{\proj{\semantics{s_1}(\ext{\beta}{\beta_1})}{\vec{v_1}}}
\newcommand{\fbranchext}{\proj{\semantics{s_2}(\ext{\beta}{\beta_2})}{\vec{v_1}}}
\newcommand{\ostar}{\stackMath\mathbin{\stackinset{c}{0ex}{c}{0ex}{\star}{\bigcirc}}}
\newcommand{\semext}[1]{\semantics{{#1}}_{\mathsf{ext}}}
\newcommand{\subsubsubsection}[1]{\vspace{2pt plus 1pt minus 1pt}\noindent{\bf #1}}
\newcommand{\colormath}[2]{\begingroup\color{#1}#2\endgroup}
\newcommand{\syite}{\mathit{sy}_{\mathsf{id\_ite}}}
\newcommand{\gite}{\mathit{G}_{\mathsf{id\_ite}}}
\newcommand{\syconst}{\mathit{sy}_{\mathsf{id\_const}}}
\newcommand{\gconst}{\mathit{G}_{\mathsf{id\_const}}}
\newcommand{\sysum}{\mathit{sy}_{\mathsf{sum}}}
\newcommand{\gsum}{\mathit{G}_{\mathsf{sum}}}
\newcommand{\yaux}{y_{\mathit{aux}}}
\newcommand{\vaux}{v_{\mathit{aux}}}
\definecolor{dgreen}{RGB}{0,128,0}
\definecolor{dred}{RGB}{200,0,0}
\newcommand{\vx}{\vec{x}}
\newcommand{\vy}{\vec{y}}
\newcommand{\sstate}{a}
\begin{document}

\title[Unrealizability Logic]{Unrealizability Logic}         


\author{Jinwoo Kim}                                        
\authornote{Part of work done while a student at the University of Wisconsin-Madison.}
\affiliation{
  \institution{Seoul National University}            
  \city{Seoul}
  \country{Republic of Korea}                  
}
\email{pl@cs.wisc.edu}          

\author{Loris D'Antoni}                                        
\affiliation{
  \institution{University of Wisconsin-Madison}            
  \city{Madison}
  \country{USA}                  
}
\email{loris@cs.wisc.edu}          

\author{Thomas Reps}                                        
\affiliation{
  \institution{University of Wisconsin-Madison}            
  \city{Madison}
  \country{USA}                  
}
\email{reps@cs.wisc.edu}          


\begin{abstract}
We consider the problem of establishing that a program-synthesis problem is
\emph{unrealizable} (i.e., has no solution in a given search space of
programs).
Prior work on unrealizability has developed some automatic techniques
to establish that a problem is unrealizable;
however, these techniques are all \emph{black-box}, meaning that they conceal the
reasoning behind \emph{why} a synthesis problem is unrealizable.

In this paper, we present a Hoare-style reasoning system, called \emph{unrealizability logic}
for establishing that a program-synthesis problem is unrealizable.
To the best of our knowledge, unrealizability logic is the first proof system for
overapproximating the execution of an infinite set of imperative programs.
The logic provides a general, logical system 
for building checkable proofs about
unrealizability.
Similar to how Hoare logic distills the fundamental concepts 
behind algorithms and tools to prove the correctness of programs,
unrealizability logic distills into a single logical system the
fundamental concepts that were hidden within prior tools capable
of establishing that a program-synthesis problem is unrealizable.
\end{abstract}


\begin{CCSXML}
<ccs2012>
   <concept>
       <concept_id>10003752.10003790.10002990</concept_id>
       <concept_desc>Theory of computation~Logic and verification</concept_desc>
       <concept_significance>500</concept_significance>
       </concept>
   <concept>
       <concept_id>10003752.10003790.10011741</concept_id>
       <concept_desc>Theory of computation~Hoare logic</concept_desc>
       <concept_significance>500</concept_significance>
       </concept>
   <concept>
       <concept_id>10011007.10011074.10011092.10011782</concept_id>
       <concept_desc>Software and its engineering~Automatic programming</concept_desc>
       <concept_significance>500</concept_significance>
       </concept>
 </ccs2012>
\end{CCSXML}

\ccsdesc[500]{Theory of computation~Logic and verification}
\ccsdesc[500]{Theory of computation~Hoare logic}
\ccsdesc[500]{Software and its engineering~Automatic programming}

\keywords{Unrealizability Logic, Unrealizability, Program Synthesis}

\maketitle

\section{Introduction}
\label{Se:Introduction}

Program synthesis refers to the task of discovering a program, within 
a given search space, that satisfies 
a behavioral specification (e.g., a logical formula, 
or a set of input-output examples).
While there have been many advances in program synthesis, especially in domain-specific 
settings~\cite{flashfill,swizzle,lambda2}, program synthesis 
remains a challenging task with many properties that are not yet well understood.

While tools are becoming better at synthesizing programs, one property that remains difficult to reason about is the 
\emph{unrealizability} of a synthesis problem, i.e., the non-existence of a solution that 
satisfies the behavioral specification within the search space of possible programs.
One of the ultimate goals behind studying unrealizability 
is to prune the search space when synthesizing programs, by showing that a certain  
subset of the search space does not contain the desired solution 
(see \S\ref{Se:RelatedWork} for a further discussion of this idea).
Unrealizability also has many applications;
for example, one can show that a certain synthesized solution is optimal with respect to some 
metric by proving that a better solution to the synthesis problem does not 
exist---i.e., by proving that the synthesis problem where the search space 
contains only programs of lower cost is unrealizable~\cite{qsygus}.
Unrealizability is also used to prune program paths in 
symbolic-execution engines for program repair~\cite{unrealse}.


\begin{example}
  \label{ex:syfirst}
  Consider the synthesis problem $\syfirst$ where the goal is to 
  synthesize a function $f$ that takes as input a state $(x, y)$, 
  and returns a state where $y = 10$.
  Assume, however, that the search space of possible programs in $\syfirst$ 
  is defined using the following grammar 
  $\gfirst$:
\[
  \begin{array}{ccc}
    \nstart  \rightarrow  \Eassign{y}{E}  & &
    E        \rightarrow  x \mid E + 1
  \end{array}
\]
  Clearly $\Eassign{y}{10} \not \in L(\mathit{Start})$; moreover, all programs in $L(\mathit{Start})$  are incorrect on at least one input.
  For example, 
  on the input $x = 15$
  every program in the grammar sets $y$ to a value greater than $15$.
  Consequently, $\syfirst$ is unrealizable.
\end{example}

While it is trivial for a human to establish that $\syfirst$
is indeed unrealizable, only a small number of known techniques 
can prove this fact automatically~\cite{nay,semgus,nope}.
However, a common drawback of these techniques is that they do not produce a proof artifact.
In particular, \citet{semgus} and \citet{nope} rely on external 
constraint solvers and program 
verifiers to prove unrealizability.
While a solver-based approach has worked well in practice, 
without extensive knowledge of the internals
of the external solvers, it is difficult to understand exactly
\emph{why} a synthesis problem is unrealizable.

This paper presents \emph{unrealizability logic},  a proof system
for reasoning about the unrealizability of synthesis problems.
In addition to the main goal of reasoning about unrealizability, 
unrealizability logic is designed with the following goals in mind:
\begin{itemize}
  \item to be a \emph{general} logic, capable of dealing with various synthesis problems;
  \item to be amenable to \emph{machine reasoning}, as to enable both automatic proof checking 
    and to open future opportunities for automation;
  \item to \emph{provide insight} into why certain synthesis problems are unrealizable through 
    the process of completing a proof tree.
\end{itemize}
Via unrealizability logic, one is able to
\rone reason about unrealizability in a principled, explicit fashion, and 
\rtwo produce concrete proofs about unrealizability.

In this paper, unrealizability logic is formulated for synthesis
problems over a deterministic, imperative programming language with
statements involving Boolean and integer expressions.
There are several challenges to creating unrealizability logic, 
which are illustrated in \S\ref{Se:MotivatingExample}.
One such challenge is already illustrated in Example~\ref{ex:syfirst}:
because of the recursive definition of nonterminal $E$, the search space
of programs $L(\gfirst)$
is an \emph{infinite} set.
In unrealizability logic, one reasons about infinite sets of programs
 en masse---as opposed to reasoning about each program in the set separately.
Proofs in unrealizability logic must thus establish judgements of the following kind:
\begin{mybox}
  For a given a set of input-output examples,
  no matter which program is chosen (out of a possibly infinite set of
  programs), there is at least one input-output example that is
  handled incorrectly.
\end{mybox}

The proof system for unrealizability logic has sound underpinnings,
and provides a way to build proofs of unrealizability similar to the way
Hoare logic~\cite{hoare} 
provides a way
to build proofs that a given program cannot reach a set of bad states.


\vspace{1mm}\noindent\textit{Contributions.} In summary, this paper makes the following contributions:
\begin{itemize}
  \item \emph{Unrealizability logic}, the first logical proof system for 
    overapproximating the execution of an infinite set of 
    imperative programs (\S\ref{Se:UnrealizabilityLogic}).
  \item A proof of soundness and relative completeness for unrealizability logic 
    (\S\ref{Se:SoundnessAndCompleteness}).
  \item
    Examples that illustrate the proving power of
    unrealizability logic: in particular, unrealizability logic can be used to prove 
    unrealizability for problems out of reach for previous methods; e.g., 
    those for which the proof requires reasoning about an infinite number of examples 
    (\S\ref{Se:Examples}).
\end{itemize}

\noindent
\S\ref{Se:RelatedWork} discusses related work.
\S\ref{Se:Conclusion} concludes.
Proofs for theorems in the paper may be found in the full version of this paper, 
available on arXiv~\cite{ularxiv}.


\section{Motivating Examples}
\label{Se:MotivatingExample}

In this section, we give several key examples that describe how proof trees in 
unrealizability logic work, 
and also illustrate two key challenges 
behind unrealizability logic
(\S\ref{SubSe:InfiniteSetsOfPrograms}, \S\ref{SubSe:InputOutputPairs}).
Unrealizability logic shares much of the intuition 
behind Hoare logic and its extension toward recursive programs.
However, as we will illustrate in \S\ref{SubSe:WhyNotHoare}, 
these concepts alone are insufficient to model unrealizability, which motivated us to develop
the new concepts that we introduce in this paper.

Hoare logic is based on triples that overapproximate the set of states that can be reached 
by a program $\stmt$; i.e., the Hoare triple
$$
  \triple{P}{\stmt}{Q}
$$
asserts that $Q$ is an \emph{overapproximation} 
of all states that may be reached by executing $\stmt$, starting from a state in $P$.
The intuition in Hoare logic is that one will often attempt to prove 
a triple like $\triple{P}{\stmt}{\lnot X}$ for a set of bad states $X$, which 
ensures that execution of $\stmt$ cannot reach $X$.

Unrealizability logic operates on the same overapproximation principle, 
but differs in two main ways from standard Hoare logic.
The differences are motivated by how synthesis problems are typically defined,
using two components: \rone a search space $S$ (i.e., a set of programs), and 
\rtwo a (possibly infinite) set of related input-output pairs 
$\set{(i_1, o_1), (i_2, o_2), \cdots }$.

To reason about \emph{sets} of programs,
in unrealizability logic, the central element (i.e., the program $\stmt$) 
is changed to a \emph{set} of programs $S$.
The unrealizability-logic triple 
$$
  \utriple{P}{S}{Q}
$$
thus asserts that $Q$ is an overapproximation of all states that are 
reachable by executing \emph{any possible combination} of 
a pre-state $p \in P$ \emph{and} a program $\stmt \in S$.
More formally, the following theorem holds for any 
unrealizability triple $\utriple{P}{S}{Q}$:

\begin{theorem}
  \label{thm:unreal_hoare}
  The \emph{unrealizability triple} $\utriple{P}{S}{Q}$ for a precondition $P$, 
  postcondition $Q$, and set of programs $S$, 
  holds iff for each program $\stmt \in S$, 
  the Hoare triple $\triple{P}{\stmt}{Q}$ holds.
\end{theorem}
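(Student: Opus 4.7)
The plan is to prove the biconditional by unpacking the semantic definitions on both sides and showing each direction follows directly from how the unrealizability triple is defined. By the discussion preceding the theorem, the semantics of $\utriple{P}{S}{Q}$ is that $Q$ overapproximates every post-state reachable by executing \emph{some} $\stmt \in S$ on \emph{some} pre-state $p \in P$; equivalently, the set of reachable post-states is indexed by the pairs $(p, \stmt) \in P \times S$. The semantics of $\triple{P}{\stmt}{Q}$, for a fixed $\stmt$, is the analogous statement where the second coordinate is pinned to $\stmt$. The proof is thus essentially the observation that a universal quantifier over $S$ can be pushed in and out of the ``overapproximation'' condition.

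For the forward direction ($\Rightarrow$), I would fix an arbitrary $\stmt \in S$ and argue $\triple{P}{\stmt}{Q}$ directly: given any $p \in P$ and any state $q$ reachable by executing $\stmt$ from $p$, the pair $(p, \stmt)$ is one of the combinations quantified over by $\utriple{P}{S}{Q}$, so the assumption yields $q \in Q$. Since $\stmt$ was arbitrary, this gives a Hoare triple for every $\stmt \in S$.

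For the reverse direction ($\Leftarrow$), I would establish the overapproximation by picking an arbitrary pair $(p, \stmt) \in P \times S$ together with an arbitrary state $q$ reachable from $p$ by $\stmt$. The hypothesis supplies the Hoare triple $\triple{P}{\stmt}{Q}$ for this particular $\stmt$, which immediately gives $q \in Q$. Since $(p, \stmt)$ ranges over all such pairs, $Q$ overapproximates the union of the reachable state sets, which is exactly $\utriple{P}{S}{Q}$.

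The main obstacle, to the extent there is one, is not a deep argument but rather a bookkeeping check: one must confirm that the formal semantics of the unrealizability triple (pinned down in \S\ref{Se:UnrealizabilityLogic}) really does factor as ``for every $(p, \stmt) \in P \times S$, every post-state is in $Q$,'' and that it treats nontermination of individual $\stmt \in S$ in the same (vacuous) way as standard partial-correctness Hoare logic. Once those semantic details line up, the biconditional reduces to the propositional equivalence between $\forall (p,\stmt).\, \varphi(p,\stmt)$ and $\forall \stmt.\, \forall p.\, \varphi(p,\stmt)$, and both implications follow by the direct unpacking above.
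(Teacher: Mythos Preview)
Your proposal is correct and takes essentially the same approach as the paper's proof: both directions are obtained by unpacking the definition of validity $\semantics{S}(P) = \bigcup_{\stmt \in S}\semantics{\stmt}(P) \subseteq Q$ and commuting the universal quantifier over $S$. The only detail the paper adds that you leave implicit is the remark that, since $P$ and $Q$ here are ordinary (non-vectorized) predicates, they may be read as vector-states of length~1 so that the vector-state semantics coincides with the standard one.
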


The second difference concerns \emph{input-output pairs}: 
in unrealizability logic, we wish to place the input states 
in the precondition, and overapproximate the set of states reachable from the input states 
(through a set of programs) as the postcondition.
Unfortunately, the input-output pairs of a synthesis problem
cannot be tracked using standard pre- and postconditions;
nor can they be tracked using auxiliary variables, 
because of a complication arising from the fact that unrealizability 
logic must reason about a \emph{set} of programs 
(see~\S\ref{SubSe:InputOutputPairs}).

To keep the input-output relations in check, the predicates of unrealizability logic 
talk instead about (potentially infinite) \emph{vector-states}, which are sets of states in which
each individual state is associated with a unique index.
Variable $x$ of the state with index $i$ is referred to as $x_i$.

\begin{example}[Vector-States]
\label{ex:vector_state}
Assume we are given a set of input-output state pairs
$\{(x = i_1, x = o_1)$, $(x = i_2, x = o_2)$, $\cdots$, $(x = i_n, x = o_n)\}$.
Then the input vector-state $I$ is denoted by
$(x_1 = i_1) \wedge (x_2 = i_2) \wedge \cdots \wedge (x_n = i_n)$.
The output vector-state $O$ is denoted by
$(x_1 = o_1) \wedge (x_2 = o_2) \wedge \cdots \wedge (x_n = o_n)$.
Once $I$ and $O$ have been constructed this way, the proof steps of
an unrealizability-logic proof would relate $x_i$ in the precondition
to $x_i$ in the postcondition;
one has effectively expressed the relation given by
the input-output pairs by \emph{renaming} each 
input-output pair to be unique.
\end{example}

Keeping these two differences in mind, let us turn to proving unrealizability.
Recall that a synthesis problem is given as a search space $S$ and a set of input-output 
pairs $\mathsf{Ex} = \set{(i_1, o_1), \cdots}$.
To prove unrealizability, one must prove that for every
program $\stmt \in S$, there 
is at least \emph{one} input-output pair $(i_k, o_k) \in \mathsf{Ex}$ such that 
$\stmt$ does not map $i_k$ to $o_k$.
Note that $\lnot O$ denotes the set of all output vector-states for which at 
least \emph{one} input-output pair does not hold.
Then, because of Theorem~\ref{thm:unreal_hoare}, proving 
$\triple{I}{\stmt}{\lnot O}$ for all $\stmt \in S$ is equivalent to proving the
unrealizability triple
$$
  \utriple{I}{S}{\lnot O}.
$$
The goal of unrealizability logic is to prove such triples in a principled logical system, 
\emph{without} having to descend to the level of individual programs or 
input-output pairs. 

\begin{figure}[tb!]
  \centering
  \[
    {
      \underbrace{\overbrace{
      \biglbbar
      \begin{array}{@{\hspace{0ex}}l}
      	x_1 = i_1 \land x_2 = i_2 \land \\ 
      	x_3 = i_3 \land \cdots 
		\end{array}			
		\bigrbbar
  }^{\textit{\mbox{\begin{tabular}{c}
     Precondition:\\ multiple possible\\ inputs\end{tabular}}}}}_{\textit{Vector-state} \: (I)}
    }\: 
    {
      \overbrace{
        L\underbrace{\left(\begin{array}{@{\hspace{0ex}}l@{\hspace{0.75ex}}c@{\hspace{0.75ex}}l@{\hspace{0ex}}}
                 \nstart & \rightarrow & \cdots  \\
                 E       & \rightarrow & \cdots 
               \end{array}\right)}_{\textit{Grammar}}
             }^{\textit{\mbox{\begin{tabular}{c} Body:\\ \textit{Set of programs} \end{tabular}}}}
    }\: 
    {
	\underbrace{\overbrace{
	\biglbbar
      \begin{array}{@{\hspace{0ex}}l}
      	x_1 \neq o_1 \vee x_2 \neq o_2 \vee \\ 
      	x_3 \neq o_3 \vee \cdots
		\end{array}	
	\bigrbbar				
      	}^{\textit{\mbox{\begin{tabular}{c}Postcondition:
        \\ at least one\\ incorrect output\end{tabular}}}}
          }_{\textit{Negated vector-state} \: (\lnot O)}
    }
  \]
  \caption{An unrealizability triple with a negated post-vector-state asserts that
    a synthesis problem is unrealizable.
    Observe how the vectorized version of $\lnot O$ asserts that there is at \emph{least one} 
    output example that does not satisfy the desired input-output relation.
  }
  \label{Fi:EssenceOfUnrealizabilityTriples}
\end{figure}

\figref{EssenceOfUnrealizabilityTriples} summarizes what we have
discussed so far.
In general, in unrealizability logic
the input vector-state need not be finite, and the spec need not be 
functional as well---for example, a synthesis problem 
that should assign some value $v > x$ to $x$,
for all possible $x$,
can be proved unrealizable via the triple 
$\utriple{\forall i. x_i = i}{L(G)}{\exists i. x_i \leq i}$.

\subsection{Challenge 1: Infinite Sets of Programs}
\label{SubSe:InfiniteSetsOfPrograms}

Compared to ordinary Hoare logic, the first challenge is that unrealizability logic needs to reason about infinite sets of programs. 
Fortunately, the set of programs in a program-synthesis problem is typically
formulated as a regular tree grammar (RTG), 
which defines the set of programs in an inductive manner.
Unrealizability logic uses the structure of the RTG to develop proof trees that mimic structural induction.

As illustrated earlier, the unrealizability triple
$\utriple{P}{S}{Q}$ 
captures information about the set of programs $S$.
If $S$ is specified in a recursive manner via an RTG, 
a triple $\utriple{P}{S}{Q}$ 
can also be used as an induction hypothesis in a proof tree for unrealizability logic,
as we show in Example~\ref{ex:basic_unreal}.

\begin{example}
  \label{ex:basic_unreal}
  Consider a simple synthesis problem $sy_{\ex}$ with the following grammar:
$$
  \begin{array}{ccccc}
    \nstart  \rightarrow  S2 \mid S3  &&
    S2       \rightarrow  \Eseq{S2}{S2} \mid \Eassign{x}{x + 2} &&
    S3       \rightarrow  \Eseq{S3}{S3} \mid \Eassign{x}{x + 3}
  \end{array}
$$
  That is, $sy_{\ex}$ consists of programs that either \rone repeatedly add $2$ to $x$, 
  or \rtwo repeatedly add $3$ to $x$.

  Now suppose that the input specification to $sy_{\ex}$ was given as $\smod{x}{0}{6}$ (we use $\smod{x}{r}{p}$ as shorthand  for $x \equiv r (\mymod p)$; i.e., that $x$ is equivalent to $r$ modulo $p$) and the 
  output specification given as $\smod{x}{1}{6}$; that is, the goal is to reach a number of the
  form $6k' + 1$ when starting from a number $6k$.
  This problem is unrealizable, because one can only reach $6k'$, $6k' + 2$, $6k' + 3$, or $6k' + 4$ 
  by repeatedly adding 2 or 3  (but not both)  to a multiple of $6$.
  We formalize this reasoning as a proof tree in unrealizability logic.

  \subsubsubsection{Starting the Proof Tree.}
  To prove the synthesis problem $sy_{\ex}$ unrealizable, one wishes to prove the following triple,
  where the output is negated from the specification:
  $$
    \utriple{\smod{x}{0}{6}}{\nstart}{\snmod{x}{1}{6}}
  $$
  From this point on, a nonterminal as the center element of an unrealizability triple refers to the 
  language of that nonterminal; e.g., $\utriple{\smod{x}{0}{6}}{\nstart}{\snmod{x}{1}{6}}$ refers to 
  $\utriple{\smod{x}{0}{6}}{L(\nstart)}{\snmod{x}{1}{6}}$.

  Note that the postcondition $\snmod{x}{1}{6}$ is bigger than our 
  previously discussed reachable states of $6k', 6k' + 2, 6k' + 3, $ and $6k' + 4$. 
  Hence, the target triple can be proved by proving the following
  triple and then weakening it ($\weaken$ in Fig.~\ref{fig:struct_rules}):
  $$
    \utriple{\smod{x}{0}{6}}{\nstart}{\smod{x}{0}{6} \vee \smod{x}{2}{6} \vee \smod{x}{3}{6} \vee \smod{x}{4}{6}}
  $$
 
  To prove this triple in unrealizability logic, we must first introduce the concept of a 
  \emph{context} $\Gamma$, which is a set of triples that stores 
  all the induction hypotheses that have been introduced
  up to some point in a proof tree. 
  Consequently, the judgements of the proof tree have the form:
  $$
    \Gamma \vdash \utriple{\aP}{S}{\aQ}
  $$
  The idea is that instead of reasoning about the provability of triples directly, 
  we wish to reason about the provability of a triple \emph{assuming} that every hypothesis inside $\Gamma$ is true.

  In our case, we wish to prove that 
  $\utriple{\smod{x}{0}{6}}{\nstart}{\smod{x}{0}{6} \vee \smod{x}{2}{6} \vee \smod{x}{3}{6} \vee \smod{x}{4}{6}}$ 
  without assuming anything (i.e., starting from the empty context).
  This triple can be established by proving that the following judgement holds (where the blank LHS denotes an empty context):
  $$
    \vdash \utriple{\smod{x}{0}{6}}{\nstart}{\smod{x}{0}{6} \vee \smod{x}{2}{6} \vee \smod{x}{3}{6} \vee \smod{x}{4}{6}}
  $$

  The key point in taking the next step is to notice that the language of the nonterminal $\nstart$, $L(\nstart)$, 
  is the \emph{union} of $L(S2)$ and $L(S3)$.
  Because unrealizability logic deals with sets of programs, it is equipped with rules for merging triples over 
  different sets of programs.
  In particular, we can apply the grammar-disjunction 
  rule of unrealizability logic ($\grmdisj$  in Fig.~\ref{fig:struct_rules}), which states that if 
  two program sets satisfy the same pre- and postcondition pair, 
  their union also satisfies the aforementioned pair.
  In this case, $\grmdisj$ is applied on our target triple as follows:
  {\small
  $$
    \infer[\grmdisj]
    {\vdash \utriple{\smod{x}{0}{6}}{\nstart}{\smod{x}{0}{6} \vee \smod{x}{2}{6} \vee \smod{x}{3}{6} \vee \smod{x}{4}{6}}}
    {
      \begin{aligned}
      \vdash \utriple{\smod{x}{0}{6}}{S2}{\smod{x}{0}{6} \vee \smod{x}{2}{6} \vee \smod{x}{3}{6} \vee \smod{x}{4}{6}} \\
      \vdash \utriple{\smod{x}{0}{6}}{S3}{\smod{x}{0}{6} \vee \smod{x}{2}{6} \vee \smod{x}{3}{6} \vee \smod{x}{4}{6}}
      \end{aligned}
    }
  $$
  }
  \noindent We are now faced with having to prove triples over the nonterminals $S2$ and $S3$.
  Because $S2$ and $S3$ are defined recursively, taking a naive 
  consideration of the productions from $S2$ and $S3$ will result in an infinite proof tree.
  To avoid this problem, one must introduce the triples one wishes to prove as
  \emph{hypotheses} in the context, and validate them in a procedure 
  similar to structural induction.

  \begin{figure*}[!tb]
  {\tiny
  $$
    \infer[\hp]{\vdash \utriple{\smod{x}{0}{2}}{S2}{\smod{x}{0}{2}}}{
      \infer[\weaken]{\Gamma_{S2} \vdash \utriple{\smod{x}{0}{2}}{\Eassign{x}{x + 2}}{\smod{x}{0}{2}}  \hspace{2mm} \circled{1}}{
        \infer[\assignrule]{\Gamma_{S2} \vdash \bigutripletight{\smod{x}{0}{2}}{\Eassign{x}{x + 2}}
        {\begin{array}{c}
          \exists x', e_t'. (\smod{x'}{0}{2} \wedge \\
          e_t = x' + 2) \wedge \\
          x = e_t
        \end{array}}}{
          \infer[]{\Gamma_{S2} \vdash \bigutripletight{\smod{x}{0}{2}}
          {x + 2}{\begin{array}{c}\exists e_t'. \: (\smod{x}{0}{2} ) \\ \wedge e_t = x + 2 \end{array}}}{\cdots}
        }
      }
      &
      \infer[\seqrule]{\Gamma_{S2} \vdash \utriple{\smod{x}{0}{2}}{\Eseq{S2}{S2}}{\smod{x}{0}{2}} \hspace{2mm} \circled{2}}{
        \infer[\applyhp]{\Gamma_{S2} \vdash \utriple{\smod{x}{0}{2}}{S2}{\smod{x}{0}{2}}}{} 
        &
        \infer[\applyhp]{\Gamma_{S2} \vdash \utriple{\smod{x}{0}{2}}{S2}{\smod{x}{0}{2}}}{}       
      }
    }
  $$
  }
    \caption[Simplified proof tree that proves the sub-goal 
             $\utriple{\smod{x}{0}{2}}{S2}{\smod{x}{0}{2}}$ in unrealizability logic.
    ]{Simplified proof tree that proves the sub-goal 
      $\utriple{\smod{x}{0}{2}}{S2}{\smod{x}{0}{2}}$ in unrealizability logic. $\Gamma_{S2}$ denotes the context $\stateset{\utriple{\smod{x}{0}{2}}{S2}{\smod{x}{0}{2}}}$.
      Labels $\circled{1}$ and $\circled{2}$ are names for the triples they are associated with.
     
    }
    \label{fig:s2_prooftree}
  \end{figure*}
  
  \subsubsubsection{Introducing Hypotheses.}
  To see how hypotheses are introduced, consider the nonterminal $S2$.
  The idea is that one wishes to introduce the target triple 
  $\utriple{\smod{x}{0}{6}}{S2}{\smod{x}{0}{6} \vee \smod{x}{2}{6} \vee \smod{x}{3}{6} \vee \smod{x}{4}{6}}$
  as an \emph{induction hypothesis} about nonterminal $S2$, and prove that this triple holds in a way similar to 
  structural induction.
  Introducing a new hypothesis can be done using the $\hp$ rule in Fig.~\ref{fig:struct_rules}, which
  splits the proof according to nonterminal $N$'s productions:
  in this case, nonterminal $S2$ is split into $\Eassign{x}{x + 2}$ and $\Eseq{S2}{S2}$ 
  (the first application of $\hp$ in Fig.~\ref{fig:s2_prooftree}).

  As the hypothesis for $S2$, we introduce the triple $\utriple{\smod{x}{0}{2}}{S2}{\smod{x}{0}{2}}$: 
  observe that this triple may be used to prove the target triple 
  $\utriple{\smod{x}{0}{6}}{S2}{\smod{x}{0}{6} \vee \smod{x}{2}{6} \vee \smod{x}{3}{6} \vee \smod{x}{4}{6}}$ 
  via an application of $\weaken$
  (in fact, $\utriple{\smod{x}{0}{6}}{S2}{\smod{x}{0}{6} \vee \smod{x}{2}{6} \vee \smod{x}{3}{6} \vee \smod{x}{4}{6}}$ 
  will not work as a hypothesis directly for $S2$, as we will explain later in this section).
  
  We will use $\Gamma_{S2}$ to denote the singleton context $\stateset{\utriple{\smod{x}{0}{2}}{S2}{\smod{x}{0}{2}}}$.
  Fig.~\ref{fig:s2_prooftree} depicts the proof tree for
  $\vdash \utriple{\smod{x}{0}{2}}{S2}{\smod{x}{0}{2}}$, 
  where the application of $\hp$ at the root of the proof tree
  introduces  the $\Gamma_{S2}$ context in the two premises.

  \subsubsubsection{Proving Hypotheses.}
  If one were proving a property about $S2$ using standard structural induction,
  one would proceed to show that the property
  holds on the two sub-cases, $\Eassign{x}{x + 2}$ 
  and $\Eseq{S2}{S2}$---i.e., the two productions from $S2$---while assuming that 
  the property holds as a hypothesis.
  The same approach is used here: 
  we assume $\utriple{\smod{x}{0}{2}}{S2}{\smod{x}{0}{2}}$
  as a hypothesis (in context $\Gamma_{S2})$, and attempt to prove that 
  $\smod{x}{0}{2}$ is a valid pre- and postcondition for both $\Eassign{x}{x + 2}$ and $\Eseq{S2}{S2}$.
  
  First, consider proof goal $\circled{1}$ for $\Eassign{x}{x + 2}$.
  The set of programs generated by the right-hand side of this production 
  is completely independent of how $S2$ is defined; 
  thus, for the purpose of performing structural induction on $S2$, it is a \emph{base case}.
  Here, we invoke the basic rule for assignment to prove $\circled{1}$.
  Auxiliary variable $e_t$ stores the result of the expression $x + 2$, which is then assigned to 
  $x$ through the $\assignrule$ rule 
  (in which the postcondition of the conclusion mimics that of the forwards-based assignment rule in Hoare logic).
  The resulting postcondition can then be weakened to obtain $\smod{x}{0}{2}$, which completes the proof that 
  $\utriple{\smod{x}{0}{2}}{\Eassign{x}{x + 2}}{\smod{x}{0}{2}}$.

  Next, consider proof goal $\circled{2}$ for $\Eseq{S2}{S2}$.
  Here, nonterminal $S2$ appears directly, and provides us with a chance to apply the
  induction hypothesis via the rule $\applyhp$ in Fig.~\ref{fig:struct_rules}.
  In Figure~\ref{fig:s2_prooftree}, the derivation for $\circled{2}$ is the sub-proof tree on the right.
  Notice how $\Eseq{S2}{S2}$ is first decomposed using the $\seqrule$ rule for sequential composition 
  (identical to that in Hoare logic), which requires us to prove two instances of 
  the proof goal $\circled{2}$; in turn, these two instances are proved by directly
  applying the induction hypothesis through $\applyhp$.

  Returning to nonterminal $\nstart$, the second premise
  $\utriple{\smod{x}{0}{6}}{S3}{\smod{x}{0}{6} \vee \smod{x}{2}{6} \vee \smod{x}{3}{6} \vee \smod{x}{4}{6}}$ 
  can be similarly proved by introducing the triple 
  $\utriple{\smod{x}{0}{3}}{S3}{\smod{x}{0}{3}}$ as a hypothesis for nonterminal $S3$, 
  and weakening it to get the target triple about $S3$.
  This step concludes our proof.
\end{example}

One take-away from Example~\ref{ex:basic_unreal} is the importance of choosing an 
appropriate induction hypothesis.
For instance, attempting to use the proof goal 
$\utriple{\smod{x}{0}{6}}{S2}{\smod{x}{0}{6} \vee \smod{x}{2}{6} \vee \smod{x}{3}{6} \vee \smod{x}{4}{6}}$
directly as a hypothesis for $S2$ would make the proof fail on the 
production $\Eseq{S2}{S2}$.
This example indicates that, similar to how identifying appropriate
invariants for loops is a key component of writing Hoare logic proofs, identifying
appropriate hypotheses for nonterminals is an essential part in
completing a proof in unrealizability logic.


\subsection{Challenge 2: Tracking (Infinite) Input-Output Relations}
\label{SubSe:InputOutputPairs}

The second challenge in unrealizability logic is that 
the specification of a synthesis problem is typically given as a set of 
input-output \emph{pairs}: a specific input value is associated with 
a specific output value.
(Even if the specification is given as a universally quantified formula, one can 
understand such a formula as an infinite set of input-output pairs.)
  The standard way to address this problem in Hoare logic is to introduce 
  auxiliary variables that freeze the values of program variables in the precondition, 
  and to allow the postcondition to refer to these auxiliary variables.
  However, this approach alone is unsuitable for unrealizability logic,
  as illustrated by the following example.

\begin{example}
  \label{ex:univ_problem}
  Consider the identity assignment $\Eassign{x}{x}$.
  The Hoare triple for this program is typically given as $\triple{x = x_{aux}}{\Eassign{x}{x}}{x = x_{aux}}$;
  that is, starting from $x = x_{aux}$, where $x_{aux}$ is an \emph{auxiliary variable}, one ends up in 
  $x = x_{aux}$.
  Strictly speaking, the auxiliary variable $x_{aux}$ is quantified \emph{outside} the Hoare triple:
  $\forall x_{aux}. \triple{x = x_{aux}}{\Eassign{x}{x}}{x = x_{aux}}$; 
  this position for the quantifier indicates that the triple should hold for \emph{every} value of $x_{aux}$.

  Now suppose that we are given a synthesis problem $sy_{\mathsf{id}}$, 
  where the goal is to synthesize a program equivalent to $\Eassign{x}{x}$,
  using the grammar $G_{\mathsf{id}}$ below:
  $$
  \begin{array}{ccc}
    \nstart  \rightarrow  \Eassign{x}{E}&&
    E       \rightarrow 0 \mid E + 1 \mid E - 1 \\
  \end{array}
  $$
  For every integer $i$, the set $L(E)$ contains a constant expression that evaluates
  to $i$, and thus $L(\nstart)$ consists of exactly the set of all constant assignments.
  However, synthesizing a statement that is computationally equivalent
  to $\Eassign{x}{x}$ is \emph{impossible} because
  the set only contains constant assignments---i.e., $sy_{\mathsf{id}}$ is unrealizable.

  Suppose that one tries to specify $sy_{\mathsf{id}}$ using an auxiliary
  variable, so that the goal is to prove $\utriple{x = x_{aux}}{\nstart}{x \neq x_{aux}}$.
  Unfortunately, this triple is \emph{invalid} in unrealizability logic, 
  in the sense that starting from the precondition $x = x_{aux}$, 
  one can actually reach a state where still, $x = x_{aux}$!

  To understand this seemingly counterintuitive fact, 
  we will attempt to prove the triple 
  $\utriple{x = x_{aux}}{\nstart}{\exists k. x = k}$ 
  (where $k$ indicates that $x$ may have any value in the post-state).
  To do so,
  let us first characterize the behavior of all programs in 
  the language of $E$.
  In doing so, one will generate the following hypothesis
  (where, again, $e_t$ is an auxiliary variable for storing the value
  obtained from executing $E$):
  $$
    \utriple{x = x_{aux}}{E}{x = x_{aux} \wedge \exists k. e_t = k}.
  $$
  The best one can say about terms in $E$ 
  is that there exists an integer $k$ whose value is $e_t$.
  Applying $\assignrule$ and $\weaken$ to the triple, one can derive the following 
  (precise) triple:
  $$
    \utriple{x = x_{aux}}{\nstart}{\exists k. x = k}.
  $$
  Like in Hoare logic, the quantification for $x_{aux}$ is \emph{outside} the 
  unrealizability triple, which yields:
  $$
    \forall x_{aux}. \utriple{x = x_{aux}}{\nstart}{\exists k. x = k}.
  $$
This triple says that for every value of
  $x_{aux}$, there is \emph{some} $k$ (and a corresponding program)
  that works; thus, this triple actually asserts that the problem
  is realizable (which is clearly not true)!

  The problem is that, while for every \emph{individual} value of $x_{aux}$
  there does indeed exist a suitable $k$ (corresponding to a specific expression
  $t_k \in L(E)$ that always evaluates to $k$),
  any such value of $k$ (and expression $t_k$)
  fails to work for other values of $x_{aux}$---i.e., the use of a single auxiliary variable fails to capture the fact that 
  the multiple input-output pairs must all be satisfied \emph{simultaneously}.
  In terms of synthesis, for each input, some program in the 
  search space will work:
  but this does not guarantee the existence of 
  a program that works for \emph{all} inputs
  (such a program does not exist).
\end{example}

The issue illustrated in Example~\ref{ex:univ_problem}---where 
one must ensure that
\emph{each} input must map to some \emph{specific} output, but all via
the \emph{same} program---has appeared in other work on unrealizability.
For example, Nay~\cite{nay} uses semi-linear sets over LIA to
capture this relation (albeit for a specific class of synthesis problems), 
while Nope~\cite{nope} constructs a program that executes
each example in lockstep.

In this paper, we show that these relations can be elegantly expressed as
unrealizability triples by \rone merely renaming the set of variables to which each example
refers to be unique; \rtwo conjoining all the renamed states into a single, big state, and 
\rthree modifying the semantics of programs to execute over the renamed variables.
As discussed in Example~\ref{ex:vector_state}, we 
refer to the renamed, conjoined big states as \emph{vector-states} because the renaming and 
conjoining process can be intuitively understood as 
vectorizing the input states, and then executing the program on the single vector.
As shown in Example~\ref{ex:univ_problem}, and as we will later show in \S\ref{Se:Examples}, 
having a simple logical representation allows us to extend the vector-states toward 
infinite examples, which allows us to prove unrealizability for 
problems beyond the reach of previous work~\cite{nay, nope, semgus}.

\begin{example}[Two Input-Output Pairs]
  \label{ex:cartesian2}
  Consider again the synthesis problem $sy_{\mathsf{id}}$ and grammar $G_{\mathsf{id}}$
  from Example~\ref{ex:univ_problem}.
  Suppose that the specification for $sy_{\mathsf{id}}$ is given as a set of two input-output 
  pairs: $\exset{[x = 1 \mapsto x = 1], [x = 2 \mapsto x = 2]}$.
  Then the input and output specifications can both be expressed as the vector-state 
  $\stateset{x_1 = 1 \wedge x_2 = 2}$, where $x_1$ corresponds to the first example 
  and $x_2$ corresponds to the second example.
  The following triple is valid in unrealizability logic, 
  i.e., the two examples suffice to demonstrate that $sy_{\mathsf{id}}$ is unrealizable:
  $$
  \utriple{x_1 = 1 \wedge x_2 = 2}{L(\nstart)}{x_1 \neq 1 \vee x_2 \neq 2}.
  $$
  In turn, this triple may be derived from the triple:
  $$
  \utriple{x_1 = 1 \wedge x_2 = 2}{L(\nstart)}{\exists k. x_1 = k \wedge x_2 = k},
  $$
  which states that there must exist a $k$ for which both $x_1$ and $x_2$ are 
  equivalent to $k$ in the post-state (i.e., that $L(\nstart)$ is a constant program).
  This condition implies that $x_1$ and $x_2$ are identical, and thus implies that 
  $x_1 \neq 1 \vee x_2 \neq 2$.
\end{example}

Example~\ref{ex:cartesian2} illustrates
how unrealizability can be proved using two examples;
in \S\ref{Se:Examples}, we present problems for which proving unrealizability 
requires infinitely many examples.

The fact that the input-output examples can be packed into a \emph{single} 
vector-state is important:
because the starting precondition contains only a single vector-state, 
\textit{all} examples inside the single 
vector-state are guaranteed to be executed on the same program in the grammar.
Likewise, because all the output examples are packed into a single vector-state, the negation 
of this vector-state is guaranteed to contain all vector-states that are wrong on \textit{at least 
one} input example.

\subsection{Why Not Recursive Hoare Logic?}
\label{SubSe:WhyNotHoare}

At this point, readers familiar with Hoare logic extended toward recursive procedures~\cite{10years, nipkow}
may notice that the proof structure described in \S\ref{SubSe:InfiniteSetsOfPrograms} is similar 
to proofs in recursive Hoare logic.
In fact, the similarity between program-synthesis problems and nondeterministic, 
recursive procedures has already been exploited in Nope~\cite{nope}, which 
constructs a recursive program from a synthesis problem, then relies on an 
external verifier to check whether the problem is 
unrealizable.

Then what is the problem with applying recursive Hoare logic to programs translated in this way?
The answer is that some features of synthesis problems are difficult to 
model as a nondeterministic recursive program---e.g., loops and infinite examples 
(both of which Nope cannot support).

With while loops, the problem is that a nondeterministic modelling as described in~\cite{nope} 
must have a way of \emph{recording} the unbounded choices that were selected when 
synthesizing the while-loop body, due to the fact that a loop body must stay \emph{fixed} 
throughout multiple iterations.

\begin{example}[Multiple Loop Bodies]
  \label{ex:loops_bad}
  Consider a variant of $sy_{\ex}$ from Example~\ref{ex:basic_unreal}, 
  where the grammar has been modified to contain while loops instead of 
  sequential composition:
$$
  \begin{array}{ccccc}
    \nstart  \rightarrow  \Ewhile{B}{S}  &&
    B      \rightarrow  x < 100 &&
    S      \rightarrow  \Eassign{x}{x + 2} \mid \Eassign{x}{x + 3} \\
  \end{array}
$$
  $sy_{\ex}$ still consists of programs that repeatedly add either $2$ or $3$ to $x$.
  However, the repeated addition is performed within a loop (until $x \geq 100$).
  The search space consists of exactly two programs:
  one in which the loop body is $\Eassign{x}{x + 2}$, and
  one in which the loop body is $\Eassign{x}{x + 3}$.
  Starting from $\smod{x}{0}{6}$, $\smod{x}{0}{2}$ is an invariant of
  the first program, and $\smod{x}{0}{3}$ is an invariant of the second program.
\end{example}  

\begin{wrapfigure}{r}{0.49\textwidth}
\vspace{-6mm}
\begin{lstlisting}[linewidth=\linewidth]
int x, e_t, b_t; // Vars store results of execution
void Start() { While(); }
void B() { b_t = x < 100; }
void S() {
  int select = nondet(); // Nondeterminsitic choice
  if (select = 1) x = x + 2; else x = x + 3; 
}
void While() {
  B();
  if (b_t) S(); While(); else skip;
}
\end{lstlisting}
\vspace{-4mm}
  \caption{Encoding of $sy_{\ex}$ as a nondeterministic program.}
\vspace{-3mm}    
  \label{fig:while_naive}
\end{wrapfigure}
  Nope~\cite{nope} constructs a nondeterministic, recursive program from a synthesis problem, 
  where each nonterminal is translated into a procedure.
  The basic idea is that executing the translated procedure returns 
  the result of executing that nonterminal---where the 
  nondeterministic choices of an RTG are mimicked by nondeterminsitic choices 
  in the procedure.

  A naive translation of $sy_{\ex}$ into a nondeterminsitic program 
  following this idea 
  would result in a program like the one in Figure~\ref{fig:while_naive}.
  The problem with Figure~\ref{fig:while_naive} is that there is no machinery 
  present to ensure that the \emph{same} loop body is repeated for each iteration---thus, 
  an analysis of 
  Figure~\ref{fig:while_naive}  
  would report that states in which, e.g., $\smod{x}{1}{6}$ holds are possible!
    To address this issue, Figure~\ref{fig:while_naive} 
    would require some data structure capable of recording the 
    $\texttt{nondet()}$ choices in $\texttt{S()}$ directly embedded in the program.
    While a single $\mathsf{Int}$ may suffice for 
    Figure~\ref{fig:while_naive}, as the loop body is decided via a single nondeterministic 
    choice, other synthesis problems in general would require 
    data structures such as lists, as constructing their loop bodies 
    may require the application of multiple productions.
    In particular, 
    such a list must be of \emph{unbounded} size, because, in general, a term within an RTG may be 
  of unbounded size.


  In unrealizability logic, one does not need such complex machinery---instead, 
  one can perform simple invariant-based reasoning.
  Because unrealizability logic is tailored for synthesis problems,
  one can first split the search space through $\grmdisj$, 
  as illustrated in Example~\ref{ex:basic_unreal},
  to consider the two loops ``$\Ewhile{B}{\Eassign{x}{x + 2}}$'' and 
  ``$\Ewhile{B}{\Eassign{x}{x + 3}}$'' separately.
  At this point, it becomes clear that 
  $\utriple{\smod{x}{0}{2}}{\Eassign{x}{x + 2}}{\smod{x}{0}{2}}$ and 
  $\utriple{\smod{x}{0}{3}}{\Eassign{x}{x + 3}}{\smod{x}{0}{3}}$ are invariants for 
  each of the loops.
  An additional application of $\weaken$ yields the desired triple 
  $\utriple{\smod{x}{0}{6}}{\nstart}{\smod{x}{0}{2} \vee \smod{x}{0}{3}}$.
  (While it is possible to reason about both loops via a single invariant, as 
  suggested in \S\ref{SubSe:NoWhileLogic},  
  such an invariant is likely to be too complex; 
  see \S\ref{SubSe:NoWhileLogic} and Lemma~\ref{thm:precision} for details.)

The second problem with trying to apply recursive Hoare logic has to do
with dealing with infinitely many examples, or more generally, with vector-states themselves:
vector-states provide unrealizability logic a way to ensure that multiple examples 
are executed along the same program.
Nope~\cite{nope}  creates copies of 
sets of variables for each different example when translating the 
synthesis problem into a program.
However, creating copies of variables for each example 
requires an \emph{infinite} number of variables in the program to support 
infinite examples.\footnote{
Unlike the case with while loops, where an \emph{unbounded} list suffices 
(e.g., an OCaml list), infinite examples require the use of a 
truly infinite number of variables.
}
Programming languages, much less 
program verifiers, typically do not support---or struggle to support---programs 
with infinite variables,
which is why previous attempts such as Nope fail to support cases with infinite examples.


It is true that, if one is willing to develop a logic supporting all necessary 
features: recursion, nondeterminism, both local and global variables, 
infinite data structures, and infinite vector-states,
then it would be possible to prove unrealizability via an extended Hoare logic.
However, to the best of our knowledge, there is no comprehensive 
study of a system containing 
\emph{all} of these features at once, 
whereas this paper proves soundness, relative completeness, 
and some other decidability results related to unrealizability and 
unrealizability logic.

Moreover, such a logical system would be \emph{more} complex than required to 
reason about unrealizability---for example, a record-and-replay encoding of while loops 
as suggested in Example~\ref{ex:loops_bad} completely hides the fact that one is processing 
a while-loop production, and thus \emph{prevents} simple 
invariant-based reasoning about loops.
Such drawbacks are
in direct conflict with the goal of unrealizability logic, 
which---as stated in \S\ref{Se:Introduction}---is to provide a simple logical system 
that distills the essence of proving unrealizability.
True to this goal, the $\whilerule$ rule for loops in unrealizability logic 
does allow one to perform invariant-based reasoning for loops, and provides 
similar direct reasoning principles for other constructs as well.


\section{Unrealizability Logic}
\label{Se:UnrealizabilityLogic}

In this section, we formally define unrealizability logic and some necessary 
preliminaries.

\subsection{Preliminary Definitions}

In this paper, we consider synthesis problems where the search space 
is a subset of all terms producible by the imperative grammar 
$G_{imp}$ (Figure~\ref{fig:base_grammar})---that is, 
deterministic, expression-based or imperative synthesis problems defined over 
integer arithmetic expressions.
We assume a 
function $\semantics{\cdot}$ that defines the standard semantics of every term $t \in L(G)$.
We leave supporting synthesis problems involving more complex operations, e.g., 
recursive data structures, to future work;
this paper lays the foundations for making such extensions possible.

\begin{figure}[bt!]
  {\small
	\[
	\begin{array}{llcl}
		\mathit{Stmt} & S & ::= &
			\Eassign{x}{E} \mid \Eseq{S}{S} \mid \Eifthenelse{B}{S}{S}  \mid \Ewhile{B}{S}\\
		\mathit{IntExpr} & E & ::= &
			f \mid 1 \mid x \mid E + E \mid E - E \mid E \cdot E \mid E / E \mid \Eifthenelse{B}{E}{E}\\
		\mathit{BoolExpr} & B & ::= &
			\Et \mid \Ef \mid \lnot B \mid B \wedge B \mid E < E \mid E == E
	\end{array}\]}
	\vspace{-2mm}
  \caption{The base grammar $G_{imp}$, which defines a universe of terms that we are 
  interested in for this paper.}
  \label{fig:base_grammar}
\end{figure}

\def\alphabet{\mathcal{A}}
\def\nonterminals{\mathcal{N}}

\begin{definition}[Regular Tree Grammar]
A \emph{(typed) regular tree grammar} (RTG) is a tuple $G = (\nonterminals,\alphabet,S,a,\delta)$, 
where $\nonterminals$ is a finite set of nonterminals;
$\alphabet$ is a ranked alphabet; $S \in \nonterminals$ is an initial nonterminal; 
$a$ is a type assignment that gives types for members of $\alphabet \cup \nonterminals$; 
and $\delta$ is a finite set of productions of the form
$N_0 \rightarrow \alpha^{(i)}(N_1,...,N_i)$, 
where for $0 \leq j \leq i$, each $N_j \in \nonterminals$ is a nonterminal 
such that if $a_{\alpha^{(i)}} = (\type_0,\type_1,...,\type_i)$ then $a_{N_j} = \type_j$.
\end{definition}

In our setting, the alphabet consists of 
constructors for each of the constructs of $G_{imp}$ 
(although for simplicity we write, e.g., $\Eassign{x}{0}$, rather 
than ${:=^{(2)}}(x^{(0)},0^{(0)})$).
$G_{imp}$ contains three types of nonterminals:
statement nonterminals, expression nonterminals, and Boolean-expression nonterminals.

We assume grammars use productions that differ from the ones in 
$G_{imp}$ only due to the nonterminal names.
We say that a production $N_0 ::= \alpha^{(i)}(N_1, \ldots, N_i)$ 
is \textit{valid with respect to} $G_{imp}$ iff 
replacing each nonterminal $N_j$ with the nonterminal of the same type 
in the set $\{S,E,B\}$ yields a production in $G_{imp}$,
e.g., production $S_1::= \Eseq{S_2}{S_3}$ is valid with respect to $G_{imp}$.\footnote{
In this paper, we sometimes write a production in which some terminals are in-lined in place of 
nonterminals---e.g., we write $S::=\Eassign{x}{x}$ 
in place of the two productions $S::=\Eassign{x}{E_x}, E_x::=x$.}

\begin{definition}[Synthesis Problem]
  \label{def:program_synthesis}
  A synthesis problem is a $4$-tuple $\textit{sy} = \langle G, f, I, \psi \rangle$, 
  where
  \begin{itemize}
    \item
      $G$ is a regular-tree grammar consisting of productions that 
      are valid with respect to $G_{imp}$,
    \item
      $f$ is the name of the function to synthesize,
    \item
      $I$ is the set of allowed input states of $f$ (i.e., the domain of $f$),
    \item
      $\psi$ is a Boolean formula that describes a behavioral specification that $f$ must satisfy.
  \end{itemize}
  The goal of a synthesis problem is to find $f^* \in L(G)$ such that 
  $\forall\state \in I.\ \psi(\semantics{f^*}(\state),\state)$.
  If such an $f^* \in L(G)$ exists, synthesis problem $sy$ is \emph{realizable};
  otherwise, $sy$ is \emph{unrealizable}.
\end{definition}
In this paper, $f$ may be thought as of a state transformer that has type of 
$\mathsf{State} \rightarrow \mathsf{State}$.

A key point from Definition~\ref{def:program_synthesis} is that the search space 
is defined as an RTG.
As noted in \S\ref{SubSe:InfiniteSetsOfPrograms}, this property is used 
by unrealizability logic to build proof trees that mimic structural induction.
  Also, note that the specification $\psi$ is given as a Boolean formula, i.e., a logical
  specification.
  So far in the paper,
  we have only considered synthesis problems where the specification 
  is given as a set of input-output examples, which are subsumed by 
  logical specifications.
  However, unrealizability logic is also capable of dealing with logical specifications 
  directly, because the pre- and postcondition of an unrealizability triple are 
  logical predicates.

\subsection{Basic Definitions for Unrealizability Logic}

We now introduce some concepts that are specific to unrealizability logic.

In this paper, we will use (and have used) the following notation for different 
kinds of states:
\begin{itemize}
  \item Lowercase letters (e.g., $p, q, \state$) for ordinary states,
  \item Uppercase letters (e.g., $P, Q$) for sets of ordinary states 
    (and also predicates over states),
  \item Lowercase Greek letters (e.g., $\sigma, \pi$) for vector-states 
    (as shown in \S\ref{Se:MotivatingExample}, defined in Definition~\ref{def:vectorstate}),
  \item Calligrahic uppercase letters (e.g., $\aP, \aQ$) 
    for sets of (and predicates describing) vector-states.
\end{itemize}
Sometimes, uppercase letters are used also to denote nonterminals, i.e., sets of programs, 
and lowercase letters used to denote single programs.
It should be clear from context, in these cases, what the upper- and lowercase letters denote.

In unrealizability logic, the semantics of expressions are extended such that an expression 
does not evaluate to a value $v$, but evaluates to a \emph{state} that stores 
the value $v$ inside a (reserved) auxiliary variable $e_t$ (or $b_t$, for 
Boolean expressions).
This non-standard approach is taken because, as we will see in
\S\ref{SubSe:NoWhileLogic}, simply generating values will lose some
information that is required to precisely compute the set of states
that a nonterminal can generate.

\begin{definition}[Semantics of Expressions]
  \label{def:alt_semantics}
  Let $\semext{\cdot}$ be a semantics for expressions that 
  evaluates an expression to a state instead of a value.
  Intuitively, we will store the value resulting from computing an expression 
  to a reserved variable $e_t$ in the state.

  Given a `standard' semantics $\semantics{\cdot}$  
  that evaluates expressions to values and an arbitrary state $\state$,
  let $\semext{e}(\state) = \state[e_t \mapsto \sem{e}(\state)]$ if 
  $e$ is an atomic expression ($0, 1,$ or $x$).
  (If $e$ is an atomic Boolean expression, the assignment is to $b_t$ 
  instead.)

  If $e$ is a $n$-ary operation, with $e = e_1 + e_2$ as an example,
  let $\semext{e_1 + e_2}(\state) = 
  \state[e_t \mapsto \semext{e_1}(\state)[e_t] + \semext{e_2}(\state)[e_t]]$; 
  that is, $\semext{\cdot}$ is defined recursively such that 
  the extended semantics of operators such as $+$, i.e., $\semext{+}$, 
  are state transformers as well.
  These semantics work
  by referencing the value of $e_t$ stored in the state that 
  each sub-expression $e_1$ and $e_2$ evaluates to 
  (likewise, if $e$ is a $n$-ary Boolean operation, the final assignment is to $b_t$ instead; 
  Boolean sub-expressions will also reference $b_t$ instead of $e_t$).
\end{definition}

We define the extended semantics recursively, instead of simply stating that 
$\semext{e}(\state) = \state[e_t \mapsto \sem{e}(\state)]$, because we wish the 
rules of unrealizability logic to be recursive, and thus need to define operations 
such as addition on states.
In the rest of this paper, we will take $\semext{\cdot}$ from Definition~\ref{def:alt_semantics} to 
be the standard semantics given to a term, and thus drop the subscript $\mathsf{ext}$.

In this paper, we consider a state to be a finite mapping from variables to values,
which can also be understood as a finite set of $\langle$variable, value$\rangle$ pairs 
(where the variables are unique).

As we showed in \S\ref{Se:MotivatingExample}, unrealizability logic relies on 
\emph{vector-states} and a semantics modified to run on vector-states
to capture input-output relations.
Vector-states may simply be understood as ordinary states  
defined over an extended set of variables; we formalize this notion below.

\begin{definition}[Vector-State]
  \label{def:vectorstate}
  Let $\state_1, \state_2, \cdots $ be a finite or countably infinite collection of 
  states indexed by the natural numbers, that are defined over the 
  same set of variables $V$.
  Then the vector-state $\sigma$ is a state defined over the set of variables 
  $\bigcup_{i \in D} \set{v_i \mid v \in V}$, 
  such that for all $v \in V$ and $i \in D$, we have that $\sigma(v_i) = \state_i(v)$.
\end{definition}

Intuitively, given a finite collection of states $\state_1, \cdots, \state_n$, 
the vector-state $\sigma = \lrangle{\state_1[v_1 / v], \cdots, \state_n[v_n / v]}$ is 
simply a `vectorization' of the renamed individual states. 
Executing a program $t$ on this vector-state produces a vector-state 
equivalent to the vector obtained by running each individual state 
through $t$ separately: that is, 
$\semantics{t}(\sigma) = \lrangle{\semantics{t}(\state_1)[v_1 / v], \cdots, 
\semantics{t}(\state_n)[v_n / v]}$.
This intuition carries over
to when the given collection of states is infinite, in which case 
the resulting vector-state will simply be of infinite length.
Def.~\ref{def:vectorsemantics} formalizes this intuition, 
and extends the program semantics to a semantics 
that works on possibly infinite vector-states, and also to a semantics of a set of programs 
on a set of possibly infinite vector-states.


\begin{definition}[Vector-State Semantics]
  \label{def:vectorsemantics}
  Let $\sigma = \lrangle{\state_1[v_1 / v], \state_2[v_2 / v], \cdots}$ be a vector-state 
  indexed by the natural numbers, 
  where $\state_1, \state_2, \cdots$ are defined over variables $v \in V$.

  Then the semantics of a term $t$ on
  ${\sigma}$ is defined as 
  $\semantics{t}(\sigma) = \bigcup_{i \in N, v \in V} 
  \set{v_i \mid \sigma(v_i) = \semantics{t}(\state_i)(v)}$.
%
%
  The semantics of $t$ on a \emph{set} 
  of vector-states $\aP$ is defined as 
  $\semantics{t}(\aP) = \bigcup_{\sigma \in \aP} \{\semantics{t}(\sigma)$\}.
  The semantics of a \emph{set} of programs $T$ on a \emph{set} of vector-states $\aP$ 
  is defined as
  $\semantics{T}(\aP) = \bigcup_{t \in T} \semantics{t}(\aP)$. 
\end{definition}
The semantics of a set of programs on a set of input states 
can be understood as producing the set of all states that 
may arise from taking \emph{any} combination of a program $t \in T$ and a 
(vector-) state $\sigma \in \aP$.
This semantics allows us to define the validity of unrealizability triples:
a triple $\utriple{\aP}{T}{\aQ}$ 
is valid iff $\aQ$ overapproximates the states that may result from 
any combination of $\sigma \in \aP$ and $t \in T$.

\begin{definition}[Validity]
  \label{def:triple}
  Given a precondition over vector-states $\aP$, and a postcondition over vector-states $\aQ$, 
  an unrealizability triple
  $\utriple{\aP}{T}{\aQ}$ is \emph{valid} for a set of programs $T$,
  denoted by $\models \utriple{\aP}{T}{\aQ}$, iff $\semantics{T}(\aP) \subseteq \aQ$.
\end{definition}

As previously stated as Theorem~\ref{thm:unreal_hoare},
$\models \utriple{\aP}{T}{\aQ}$ iff for all $t \in T$, $\triple{\aP}{\stmt}{\aQ}$
is a valid Hoare triple.
Connecting all the definitions, we state that:
\begin{mybox}
A synthesis problem $\textit{sy} = \langle G, f, I, \psi \rangle$ is unrealizable
if $\models \utriple{I(\state)}{N}{\neg\psi(f(\state),\state)}$ 
is a valid unrealizability triple.
Here, $N$ is the initial nonterminal of $G$, and $I$ and $\psi$ are predicates
that describe the vectorized input and output conditions
of the synthesis problem.
\end{mybox}

\subsection{The Rules of Unrealizability Logic}
\label{SubSe:NoWhileLogic}

In this section, we present the inference rules of unrealizability logic.

\mypar{The Assertion Language}
One issue to discuss
before presenting the rules is the choice of 
\emph{assertion language}---that is, the language used for the pre- and 
postconditions of unrealizability triples.
Unrealizability logic is parametric in the choice of assertion language, 
as long as the assertion language is as least as expressive 
as first-order Peano arithmetic (FO-PA).
This requirement is due to the fact that the addtional 
predicates that unrealizability logic 
adds to the given pre- and postconditions when constructing a proof tree 
require FO-PA---an assertion language less expressive than FO-PA will fail to encode 
these additional predicates. 

This characteristic makes FO-PA a natural choice for the assertion language 
in unrealizability-logic proofs, especially as it is well-studied and 
provides opportunities for automation.
However, some synthesis problems will require a stronger assertion language 
than FO-PA for a proof of unrealizability to be completed in unrealizability logic.
For example, proofs that require an infinite number of 
examples will require a logic capable of supporting an infinite number of 
variables (such an example is given in \S\ref{Se:DealingWithAnInfiniteNumberOfExamples}).
One example of such a logic is FO-PA extended with infinite arrays.

\begin{example}[FO-PA with Infinite Arrays]
  Consider the assertion language of FO-PA, extended with axioms from the 
  first-order theory of arrays~\cite{mccarthyarray}. 
  Variables may be arrays ranging over an 
  infinite range of indices (indexed by the natural numbers), 
  and $\vec{x}[i]$ indicates reading the array $\vec{x}$ at index $i$.

  Then one can encode vector-states of infinite length by introducing an 
  array for each variable:
  the array $\vec{x}$ encodes an infinite-length 
  vector-variable $\lrangle{x_1, x_2, \cdots}$, 
  where $\vec{x}[i]$ corresponds to the variable $x_i$.
  For example, $\forall i. \vec{x}[i] = i$ encodes the condition that 
  $x_i = i$, i.e., that the value of $x$ in the $i$-th example is $i$ 
  (we will use the two notations $x_i$ and $\vec{x}[i]$ interchangably in the rest of this paper).
\end{example}

As stated above, unrealizability logic is parametric in the choice 
of assertion language; if one wishes to write proofs that necessitate the 
use of predicates that cannot be expressed in FO-PA, it is entirely possible 
for one to use an assertion language capable of expressing the required 
predicates.
The examples used in this paper rely on two assertion languages: standard FO-PA for 
proofs that do not require infinite examples, and 
FO-PA extended with infinite arrays for cases that do require 
an infinite number of examples to prove unrealizability (Example~\ref{ex:inf_example}).


The goal of unrealizability logic is to prove unrealizability for 
synthesis problems defined over an RTG.
Therefore, the rules that we state in Figures \ref{fig:exp_rules}, 
\ref{fig:stmt_rules}, and \ref{fig:struct_rules}
are for sets of programs defined as $L(N)$ for some RTG nonterminal $N$, 
or as the language of the right-hand side of an RTG production.



\subsubsection{Rules for Expressions}

  {\footnotesize
  \begin{figure}[bt!]
	$$ 
  \begin{array}{c}
    \infer[\mathsf{Zero}]{\Gamma \vdash \utriple{\aP}{0}
    {\exists \vec{e_t '}. \aP[\vec{e_t '} / \vec{e_t}] \wedge \vec{e_t} \eq \vec{0}}}
		{} \quad
    \infer[\mathsf{One}]{\Gamma \vdash \utriple{\aP}{1}
    {\exists \vec{e_t '}. \aP[\vec{e_t '} / \vec{e_t}] \wedge \vec{e_t} \eq \vec{1}}}
		{} \\[3mm]
    \infer[\mathsf{True}]{\Gamma \vdash \utriple{\aP}{\Et}
    {\exists \vec{b_t '}. \aP[\vec{b_t '} / \vec{b_t}] \wedge \vec{b_t} \eq \vec{\Et}}}
    {} \quad
    \infer[\mathsf{False}]{\Gamma \vdash \utriple{\aP}{\Ef}
    {\exists \vec{b_t '}. \aP[\vec{b_t '} / \vec{b_t}] \wedge \vec{b_t} \eq \vec{\Ef}}}
    {} \\[3mm]
		\infer[\mathsf{Var}]{\Gamma \vdash \utriple{\aP}{x}
    {\exists \vec{e_t '}. \aP[\vec{e_t '} / \vec{e_t}] \wedge \vec{e_t} \eq \vec{x}}}
		{} \quad
    \infer[\mathsf{Not}]
    {\Gamma \vdash 
      \utriple{\aP}{!B}{\exists \vec{b_t '}. 
      \aQ[ \vec{b_t '} / \vec{b_t}] \wedge \vec{b_t} = \lnot \vec{b_t '}}}
    {
      \Gamma \vdash \utriple{\aP}{B}{\aQ}
    }
    \\[3mm]

    \infer[\mathsf{Bin}]
    {\Gamma \vdash \utriple{\aP}{(E_1 \oplus E_2)}
    {
      \exists \vec{e_t'}, \vec{v_1}, \vec{v_2}, \vec{v_1'}, \vec{v_2'}
      .
       (      \aP
       \wedge \aQ_1[\vec{v_1'} / \vec{v}]
       \wedge \aQ_2[\vec{v_2'} / \vec{v}]
       \wedge (\vec{v} = \vec{v_1} = \vec{v_2}))[\vec{e_t'} / \vec{e_t}]
       \wedge \vec{e_t} = \vec{e_{t_1}'} \oplus \vec{e_{t_2}'}
    }}
    {
      \begin{array}{l}
      \Gamma \vdash \utriple{
        \aP \wedge (\vec{v_1} = \vec{v}) }
      {E_1}{\aQ_1} \\
      \Gamma \vdash \utriple{
        \aP \wedge (\vec{v_2} = \vec{v})}
      {E_2}{\aQ_2}
      \end{array} \quad
      \begin{array}{l}
        \vec{v_1}, \vec{v_2}, \vec{v_1'}, \vec{v_2'} \ \text{fresh renames of} \ \vec{v} \\
        \vec{v} \ \text{refers to full set of vars in} \ \aP
      \end{array}
    }\\[3mm]

    \infer[\mathsf{Comp}]
    {\Gamma \vdash \utriple{\aP}{(E_1 \odot E_2)}
    {
      \exists \vec{b_t'}, \vec{v_1}, \vec{v_2}, \vec{v_1'}, \vec{v_2'} . 
      (\aP \wedge \aQ_1[\vec{v_1'} / \vec{v}] \wedge \aQ_2[\vec{v_2'} / \vec{v}] \wedge 
      (\vec{v} = \vec{v_1} = \vec{v_2}))[{\vec{b_t'} / \vec{b_t}}] \wedge 
      \vec{b_t} = {\vec{e_{t_1}'} \odot \vec{e_{t_2}'}}
    }}
    {
      \begin{array}{l}
      \Gamma \vdash \utriple{
        \aP \wedge (\vec{v_1} = \vec{v}) }
      {E_1}{\aQ_1} \\
      \Gamma \vdash \utriple{
        \aP \wedge (\vec{v_2} = \vec{v})}
      {E_2}{\aQ_2}
      \end{array} \quad
      \begin{array}{l}
        \vec{v_1}, \vec{v_2}, \vec{v_1'}, \vec{v_2'} \ \text{fresh renames of} \ \vec{v} \\
        \vec{v} \ \text{refers to full set of vars in} \ \aP
      \end{array}
    } \\[3mm]
    \infer[\mathsf{And}]
    {\Gamma \vdash \utriple{\aP}{(B_1 \wedge B_2)}
    {
      \exists \vec{b_t'}, \vec{v_1}, \vec{v_2}, \vec{v_1'}, \vec{v_2'} . 
      (\aP \wedge \aQ_1[\vec{v_1'} / \vec{v}] \wedge \aQ_2[\vec{v_2'} / \vec{v}] \wedge 
      (\vec{v} = \vec{v_1} = \vec{v_2}))[\vec{b_t'} / \vec{b_t}] \wedge 
      \vec{b_t} = \vec{b_{t_1}'} \wedge \vec{b_{t_2}'}
    }}
    {
      \begin{array}{l}
      \Gamma \vdash \utriple{
        \aP \wedge (\vec{v_1} = \vec{v}) }
      {B_1}{\aQ_1} \\
      \Gamma \vdash \utriple{
        \aP \wedge (\vec{v_2} = \vec{v})}
      {B_2}{\aQ_2}
      \end{array} \quad
      \begin{array}{l}
        \vec{v_1}, \vec{v_2}, \vec{v_1'}, \vec{v_2'} \ \text{fresh renames of} \ \vec{v} \\
        \vec{v} \ \text{refers to full set of vars in} \ \aP
      \end{array}
    }  
  \end{array}
	$$
\caption{Inference rules for expressions in unrealizability logic. $\mathsf{Bin}$ 
represents rules for binary expressions, where the operator $\oplus$ is one of 
$+$ ($\plusrule$), $-$ ($\minusrule$), $\cdot$ ($\multrule$), or $/$ ($\divrule$).
$\mathsf{Comp}$ represents rules for binary comparators, where the operator 
$\odot$ is one of $<$ ($\ltrule$) or $==$ ($\eqrule$).
}
\label{fig:exp_rules}
  \end{figure}
}

Recall that in Definition~\ref{def:alt_semantics} we defined the semantics of 
expressions to update the value of a reserved auxiliary variable ${e_t}$ instead of 
generating a value.
The rules for expressions reflect this fact: in every 
expression rule, 
the postcondition of the conclusion takes the form 
$\exists \vec{e_t'}. \aP[\vec{e_t '} / \vec{e_t}] \wedge \vec{e_t} = \vec{e}$, 
where $\aP$ is some predicate and $\vec{e}$ is a vector 
that encodes the values that may be generated by an expression in the set of expressions considered.
Observe that the form of this predicate mimics the postcondition in the 
forwards-assignment rule of Hoare logic~\cite{floyd} applied to
  $\vec{e_t} = \vec{e}$ (e.g., $\vec{e_t} = \vec{0}$ in the case of $\mathsf{Zero}$):
the assignment is to a vector-variable because we are working with vector-states
(as opposed to single states, as in Definition~\ref{def:alt_semantics}).

The intuition that $\vec{e_t}$ contains the value generated by the expression nonterminals
can be formalized into an invariant about expression nonterminals:
\begin{lemma}
  \label{lem:expression_invariant}
  Given an expression nonterminal $E$,
  if an unrealizability triple $\Gamma \vdash \utriple{\aP}{E}{\aQ}$ is derivable using the rules of unrealizability logic 
  under some context $\Gamma$,
  then for every $\exp\in L(E)$ and for every $\sigma \in \aP$, 
  the formula $\aQ[\semantics{\exp}(\sigma)/\vec{e_t}]$ is true 
  assuming all triples in $\Gamma$ are true
  (where in this case, $\semantics{\exp}(\sigma)$ refers to the standard semantics that evaluates 
  $\exp$ to a value).
\end{lemma}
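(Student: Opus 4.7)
The plan is to proceed by structural induction on the derivation of $\Gamma \vdash \utriple{\aP}{E}{\aQ}$. The target property is that for any $\exp \in L(E)$ and any $\sigma \models \aP$, the state $\sigma[\vec{e_t} \mapsto \semantics{\exp}(\sigma)]$ satisfies $\aQ$, under the standing hypothesis that every triple in $\Gamma$ is valid in the sense of Definition~\ref{def:triple}. We will have to cover three groups of rules: (i) the axiom rules for atomic expressions (\textsf{Zero}, \textsf{One}, \textsf{Var}, \textsf{True}, \textsf{False}); (ii) the compositional rules (\textsf{Not}, \textsf{Bin}, \textsf{Comp}, \textsf{And}); and (iii) the structural rules (\textsf{Weaken}, \textsf{GrmDisj}, \textsf{Hp}, \textsf{ApplyHp}).

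For the axiom cases, I would verify each rule directly from the shape of its conclusion. For example, \textsf{Zero} produces $\aQ = \exists \vec{e_t'}.\, \aP[\vec{e_t'}/\vec{e_t}] \wedge \vec{e_t} = \vec{0}$; substituting $\semantics{0}(\sigma) = \vec{0}$ for $\vec{e_t}$ and witnessing the existential by the original value $\sigma(\vec{e_t})$ reduces the obligation to $\sigma \models \aP$, which holds by assumption. The cases \textsf{One}, \textsf{Var}, \textsf{True}, \textsf{False} follow the same template with $\vec{0}$ replaced by $\vec{1}$, $\vec{x}$, $\vec{\Et}$, $\vec{\Ef}$ respectively, each of which is exactly the value produced by the standard semantics on the corresponding atomic expression.

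The compositional cases are the main source of bookkeeping. Consider \textsf{Bin} with $\exp = \exp_1 \oplus \exp_2 \in L(E_1 \oplus E_2)$ and $\sigma \models \aP$. The precondition of the first premise is $\aP \wedge (\vec{v_1} = \vec{v})$; I would extend $\sigma$ to a state $\sigma_1$ by setting $\vec{v_1} \mapsto \sigma(\vec{v})$ so that $\sigma_1$ satisfies it, then use the induction hypothesis to conclude $\sigma_1[\vec{e_t} \mapsto \semantics{\exp_1}(\sigma_1)] \models \aQ_1$; analogously for $\exp_2$ via a state $\sigma_2$. I would then witness the big existential in the \textsf{Bin} conclusion by taking $\vec{e_t'}$ to be $\sigma(\vec{e_t})$, $\vec{v_1}, \vec{v_2}$ to be $\sigma(\vec{v})$, and $\vec{v_1'}, \vec{v_2'}$ to be the pre-$\vec{e_t}$-update $\vec{v}$ projections used by the two IH instances. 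The final conjunct $\vec{e_t} = \vec{e_{t_1}'} \oplus \vec{e_{t_2}'}$ holds precisely because $\semantics{\exp_1 \oplus \exp_2}(\sigma) = \semantics{\exp_1}(\sigma) \oplus \semantics{\exp_2}(\sigma)$ and the standard semantics on $\sigma$ and $\sigma_i$ agree on the original variables. The cases \textsf{Comp}, \textsf{And}, and \textsf{Not} are handled in exactly the same way, modulo using $\vec{b_t}$ and the appropriate operator.

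For the structural rules, \textsf{Weaken} is immediate: if the invariant holds for $\aQ$ and $\aQ$ implies $\aQ'$, then it holds for $\aQ'$. \textsf{GrmDisj} follows because every $\exp \in L(E_1) \cup L(E_2)$ belongs to one of the two subproblems to which the IH applies. For \textsf{ApplyHp}, the stated assumption that all triples in $\Gamma$ are true, together with Theorem~\ref{thm:unreal_hoare} (relating the validity of an unrealizability triple to the validity of the Hoare triples for each program in the set), directly yields the conclusion. For \textsf{Hp}, the invariant transfers from the premises because the added hypothesis is among those assumed to be true, so the IH is applicable. The main obstacle I anticipate is keeping the renamings $\vec{v_1}, \vec{v_2}, \vec{v_1'}, \vec{v_2'}$ and the substitutions $[\vec{e_t'}/\vec{e_t}]$ in \textsf{Bin}/\textsf{Comp}/\textsf{And} disentangled; this is not conceptually deep but requires a careful lemma stating that evaluating the shared subformula $\aP$ under either the shifted vectors or the original ones agrees on $\sigma$, which is exactly what the ``fresh renames of $\vec{v}$'' side condition supplies.
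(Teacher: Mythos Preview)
Your overall strategy---structural induction on the derivation with case analysis on the last rule---is exactly the paper's approach, and your treatment of the atomic and compositional expression rules is correct (and in fact more detailed than the paper's own sketch, which only spells out \textsf{Zero} and \textsf{Plus}).

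There is, however, a real gap in your \textsf{Hp} case. You claim ``the added hypothesis is among those assumed to be true,'' but this is circular. The conclusion of \textsf{Hp} is $\Gamma \vdash \utriple{\aP}{N}{\aQ}$, while each premise lives in the extended context $\Gamma' = \Gamma, \utriple{\aP}{N}{\aQ}$. Invoking your outer induction hypothesis on a premise requires assuming every triple in $\Gamma'$ is valid---in particular that $\utriple{\aP}{N}{\aQ}$ itself is valid, which is precisely what you are trying to conclude from the weaker assumption that only $\Gamma$ is valid. The standard repair (used in recursive Hoare logic, and also glossed over in the paper's own soundness proof, which dispatches \textsf{Hp}/\textsf{ApplyHp} with the one-line ``by soundness of mathematical structural induction'') is an inner induction on the grammar-derivation depth of $\exp \in L(N)$: one shows the property holds for all terms in the depth-$d$ approximation of $L(N)$, using the inner hypothesis at depth $<d$ in place of the added context assumption whenever \textsf{ApplyHp} is reached on a recursive occurrence of $N$. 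As a minor point, you should also list the easy cases \textsf{Conj}, \textsf{Inv}, \textsf{Sub1}, and \textsf{Sub2}, all of which can be the last rule in a derivation whose center is an expression nonterminal.
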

Lemma~\ref{lem:expression_invariant} shows that, 
$\aQ$ at the very least contains states where $\vec{e_t} = \semantics{\exp}(\sigma)$: the variable $\vec{e_t}$ 
overapproximates the set of values obtainable by executing an expression $\exp \in L(E)$.
A similar lemma also holds for Boolean-expression nonterminals,
where the variable $\vec{e_t}$ is replaced with $\vec{b_t}$.

Lemma~\ref{lem:expression_invariant} is important, 
because it allows the postcondition of the conclusion to refer to the 
values computed by expression nonterminals through use of the variable $\vec{e_t}$ (or $\vec{b_t}$).
In particular, we assume that $\vec{e_t}$ and $\vec{b_t}$ 
are \emph{reserved} for the purposes of storing these values.
(More precisely, we assume that $\vec{e_t}$ and $\vec{b_t}$ are included in the set of 
program variables $\vars(N)$, and thus rules such as $\subone$ cannot
substitute these variables.)

Figure~\ref{fig:exp_rules} presents the rules for each expression-based operator in unrealizability logic.
Having articulated the key intuition behind the rules, let us now look at each rule in detail.

\mypar{0-ary operators ($0, 1, x, \Et, \Ef$)}
Rules for 0-ary operators 
do not rely on the result of another set of programs, and can be computed directly by 
assigning the correct values ($0, 1, x,$ etc.) to the auxiliary (vector-)variable $\vec{e_t}$.
The symbol $\vec{x}$ in the rule $\mathsf{Var}$ refers to the vector of values generated by taking $x$ from 
each sub-state of $\sigma = \lrangle{\state_1, \state_2, \cdots}$;
i.e., $\langle \state_1(x_1), \state_2(x_2), \cdots \rangle$.

\mypar{Unary Operators ($!B$)}
$G_{imp}$ only contains a single instance of a unary operator, Not ($!$), which 
takes the value produced by a nonterminal $B$ and negates it.
The rule $\mathsf{Not}$ mirrors this behavior: 
it first requires that the behavior of the nonterminal $B$ is captured by the premise 
$\utriple{\aP}{B}{\aQ}$.

Following Lemma~\ref{lem:expression_invariant}, the result of executing $B$ is 
stored in the vector-variable $\vec{b_t}$; that is, by referring to the 
vector-variable $\vec{b_t}$ in $\aQ$, one can refer to the (set of possible) 
values created by $B$.
The conclusion of the rule negates the value in $\vec{b_t}$ 
and re-assigns it to $\vec{b_t}$ (captured in the postcondition predicate 
$\exists \vec{b_t '}. \aQ[ \vec{b_t '} / \vec{b_t}] \wedge \vec{b_t} = \lnot \vec{b_t '}$).

\mypar{Binary Operators ($E_1 + E_2, E_1 - E_2, E_1 \cdot E_2, E_1 / E_2$, $B_1 \wedge B_2$, $E_1 < E_2$, $E_1 == E_2$)}
Binary operators operate in a similar manner to unary operators in that 
they rely on the premise triples to refer to the values generated by the 
sub-nonterminals $E_1$ and $E_2$.
However, they also pose a unique challenge in that one needs to be careful 
in how these values are composed, as (partially) described in 
\S\ref{Se:MotivatingExample}.

Take $\plusrule$ as an example.
A naive version of $\plusrule$ might be written as following:
$$
  \infer[\plusrule\textsf{-bad}]{\utriple{\aP}{E_1 + E_2}{\aQ_1 + \aQ_2}}
  {
    \utriple{\aP}{E_1}{\aQ_1} \quad
    \utriple{\aP}{E_2}{\aQ_2}
  }
$$
The problem arises in resolving the set corresponding to $\aQ_1 + \aQ_2$: 
the simplest way is to generate pairs by 
taking the Cartesian product of the two sets, then add the values in each 
pair to produce a set of values.
However, this approach is imprecise, because it also generates pairs generated by \emph{different} 
vector-states in the precondition.
\begin{example}[Plus-bad]
  \label{ex:why_not_plus}
  Consider an application of the $\plusrule$ rule where 
  $E_1 ::= x$, $E_2 ::= y$, and the
  input precondition is $\aP = \set{\lrangle{(1, 3), (2, 4)}, \lrangle{(3, 1), (4, 2)}}$;
  i.e., there are two possible two-example vector-states:
  $(\vx[1], \vy[1]) = (1, 3) \wedge (\vx[2], \vy[2]) = (2, 4)$ and 
  $(\vx[1], \vy[1]) = (3, 1) \wedge (\vx[2], \vy[2]) = (4, 2)$,
  where $\vec{v}[i]$ denotes the $i$-th entry of $\vec{v}$.
  (We omit $e_t$ in input states; its value is irrelevant.)

  Because the only term in $E_1 + E_2$ is $x + y$, the postcondition 
  should only contain states in which $e_t = \lrangle{4, 6}$.
  However, because there are two possible input states in $\aP$, the predicate
  $\aQ_1$ contains two states---one in which $e_t = \lrangle{1, 2}$ and another
  in which $e_t = \lrangle{3, 4}$.
  Likewise, the predicate $\aQ_2$ contains two states, one in which $e_t$ is $\lrangle{3, 4}$ and
  one in which it is $\lrangle{1, 2}$.
  If one takes the Cartesian product of the states in $\aQ_1$ and 
  $\aQ_2$, then the predicate $\aQ$ ends up having $3$ 
  possible values of $e_t$: $\lrangle{4, 6}$, $\lrangle{2, 4}$, or $\lrangle{6, 8}$.
  The problem is that the two possible input states $\lrangle{(1, 3), (2, 4)}$ and 
  $\lrangle{(3, 1), (4, 2)}$ are 
  \emph{mixed} by taking the Cartesian product: for example, 
  $x = \lrangle{1, 2}$ from the first input vector-state and $y = \lrangle{1, 2}$ from the 
  second input vector-state are added, these come from different input states!
\end{example}

To remedy this problem, in the $\plusrule$ rule in unrealizability logic, 
  each original state in the precondition 
  is \emph{appended} with a \emph{copy} of itself, 
  where we refer to the copied part as the `\emph{ghost (vector)-state}', 
  and the original part as the original (vector)-state.
This idea is captured in the preconditions of the premises, e.g.,
$\aP \wedge (\vec{v_1} = \vec{v})$;
by taking a set of fresh variables $\vec{v_1}$ and setting $\vec{v_1} = \vec{v}$, 
one has essentially extended each state in $\aP$ with a copy of each variable.

\begin{example}[Extended States]
  \label{ex:extended_state}
  Consider the precondition $\aP = \set{\lrangle{(1, 3), (2, 4)}, \lrangle{(3, 1), (4, 2)}}$, the 
  same set of vector-states used in Example~\ref{ex:why_not_plus}.
  Observe that $\aP \wedge (\vec{v_1} = \vec{v})$ 
  is the set of vector-states 
  $(\vx[1], \vy[1], \vx_1[1], \vy_1[1], \vx[2], \vy[2], \vx_1[2], \vy_1[2]) = 
  \set{\lrangle{(1, 3, 1, 3, 2, 4, 2, 4)}, \lrangle{(3, 1, 3, 1, 4, 2, 4, 2)}}$; note that 
  states such as $\lrangle{(1, 3, 3, 1, 2, 4, 4, 2)}$ 
  are not included in $\aP \wedge (\vec{v_1} = \vec{v})$
  because they violate the conjunct ``$(\vec{v_1} = \vec{v})$.''
\end{example}

Because the variables of the ghost state have been \emph{renamed}, 
execution through a program in $E$ will leave these variables \emph{unchanged} 
according to the semantics of $E$.
Thus, one can say that a state in, e.g., $\aQ_1$ is an extended state---where the 
`original' part of the state has executed through $E$, and the `ghost' part 
is left unchanged.
By this device, one can check \emph{via the ghost parts} whether two
extended states from $\aQ_1$ and $\aQ_2$ should be added.
Because the ghost-state portion of each extended state remains unchanged,
one can deduce that two extended states must be added 
if and only if the ghost-state portions of the states are identical.

In the $\plusrule$ rule, the postcondition of the conclusion illustrates the
idea of matching vector-states on ghost-state portions.
In particular, the constraint $\vec{v_1} = \vec{v_2}$ filters out those
pairs of extended states in which the ghost-state portions are different!
The vocabulary shifts $\aQ_1[\vec{v_1'} / \vec{v}]$ and $\aQ_2[\vec{v_2'} / \vec{v}]$
move the `original' 
program variables in $\aQ_1$ and $\aQ_2$ to $\vec{v_1'}$ and $\vec{v_2'}$, respectively. 
The values of the original variables are unneeded, 
except for the value of $\vec{e_t}$ inside these states
(available as $\vec{e_{t_1}'}$ and $\vec{e_{t_2}'}$, respectively, after the vocabulary shifts.)
The conjunction with $\aP \land (\vec{v} = \vec{v_1} = \vec{v_2})$
restores the values of the original program variables.
The value of $\vec{e_t}$ is established by the conjunction 
with $\vec{e_t} = \vec{e_{t_1}'} + \vec{e_{t_2}'}$.
Finally, all of the additionally introduced variables are quantified out, leaving
(vector-)states over the original variables, plus $\vec{e_t}$.

\begin{example}[Correcting Plus with Extended States]
  \label{ex:correcting_plus}
  Consider again the grammar from Example~\ref{ex:why_not_plus}.
  In this example, we will remove the second example from Examples~\ref{ex:extended_state}
  and~\ref{ex:why_not_plus} for simplicity: that is, 
  the input precondition $\aP$ consists of the length-1 vector-states 
  $\set{\lrangle{(1, 3)}, \lrangle{(3, 1)}}$.

  As illustrated in Example~\ref{ex:extended_state}, the extended precondition
  for $E_1$ is a set of vector-states of the form 
  $(x, y, x_1, y_1)$, namely, 
  $\set{\lrangle{(1, 3, 1, 3)}, \lrangle{(3, 1, 3, 1)}}$;
  Similarly, the extended precondition for $E_2$ is a set of vector-states of the form
  $(x, y, x_2, y_2)$, namely, 
  $\set{\lrangle{(1, 3, 1, 3)}, \lrangle{(3, 1, 3, 1)}}$.

  Because $E_1 ::= x$, 
  the predicate $\aQ_1$---the result of executing $E_1$---has vector-states of the form
  $(x, y, x_1, y_1, e_t)$, 
  and equals 
  $\set{\lrangle{(1, 3, 1, 3, 1)}, \lrangle{(3, 1, 3, 1, 3)}}$.\footnote{
    As in Example~\ref{ex:why_not_plus}, $e_{t_1}$ is omitted in input vector-states
    because its value is irrelevant in the example.
  }
  Similarly, because $E_2 ::= y$, the predicate $\aQ_2$---the 
  result of executing $E_2$---has vector-states of the form
  $(x, y, x_2, y_2, e_{t})$ and equals $\set{\lrangle{(1, 3, 1, 3, 3)}, \lrangle{(3, 1, 3, 1, 1)}}$.

  The predicate $\aQ_1[\vec{v} / \vec{v_1'}]$ creates a set of vector-states of the form
  $(x_1', y_1', x_1, y_1, e_{t_1}')$, i.e., 
  $\set{\lrangle{(1, 3, 1, 3, 1)}, \lrangle{(3, 1, 3, 1, 3)}}$
  (\emph{original} program variables $\vec{v}$ are shifted to $\vec{v_1'}$).
  Similarly, the predicate
  $\aQ_2[\vec{v_2'} / \vec{v}]$ creates a set of vector-states of the form
  $(x_2', y_2', x_2, y_2, e_{t_2}')$, i.e., 
  $\set{\lrangle{(1, 3, 1, 3, 3)}, \lrangle{(3, 1, 3, 1, 1)}}$.

  These are conjoined with $\aP \land (\vec{v} = \vec{v_1} = \vec{v_2})$,
  leaving $\vec{e_t}$ unconstrained (as $\vec{e_t}$ was already unconstrained in $\aP$),
  and produces a set of vector-states of the form
  $(\colormath{dgreen}{x, y,} \colormath{cyan}{x_1', y_1', x_1, y_1,} 
    \colormath{dred}{e_{t_1}',} \colormath{cyan}{x_2', y_2', x_2, y_2,} \colormath{dred}{e_{t_2}'})$,
  i.e.,
    $\set{\lrangle{(\colormath{dgreen}{1, 3,} \colormath{cyan}{1, 3, 1, 3,} 
          \colormath{dred}{1,} \colormath{cyan}{1, 3, 1, 3,} \colormath{dred}{3})}, 
          \lrangle{(\colormath{dgreen}{3, 1,} \colormath{cyan}{3, 1, 3, 1,} 
          \colormath{dred}{3,} \colormath{cyan}{3, 1, 3, 1,} \colormath{dred}{1})}
    }$.
  Original program variables are colored {\color{dgreen} green}, 
  introduced ghost variables are colored {\color{cyan} cyan}, 
  and the results of executing $E_1$ and $E_2$ ($e_{t_1}'$ and $e_{t_2}$', respectively)
  are colored {\color{dred} red}.

  Once the set of states is conjoined with $\vec{e_t} = \vec{e_{t_1}'} + \vec{e_{t_2}'}$,
  and the auxiliary variables are quantified out, 
  we obtain exactly the set of states that $E_1 + E_2$ can produce: $(x, y, e_t) = 
    \set{\lrangle{(1, 3, 4)}, 
         \lrangle{(3, 1, 4)}}$.
\end{example}

At this point, it becomes possible to answer the question: why do we not simply produce 
sets of values as the postcondition for expression nonterminals, and instead introduce states with an 
auxiliary variable $\vec{e_t}$?
The reason is that, as illustrated in Example~\ref{ex:why_not_plus},  
a scheme that merely returns a set of values is unable to track the originating state, 
as done in the $\plusrule$ rule 
(and other instantiations of the $\mathsf{Bin}$ rule of Figure~\ref{fig:exp_rules}), 
and hence imprecise.
Extending the semantics of expressions to return the
extended state allows us to keep track of this information by extending the input and 
output states with ghost variables as illustrated in 
Example~\ref{ex:correcting_plus}, and leads to a precise rule---which is why we opt to 
update the (vector-)variable $\vec{e_t}$ in our rules.

{\small
  \begin{figure}[bt!]
  $$
  \begin{array}{c}
  \\[3mm]
    \infer[\assignrule]{\Gamma \vdash \utriple{\aP}{\Eassign{x}{E}}
    {\exists \vec{x'}. \aQ [\vec{x'} / \vec{x}] \wedge \vec{x} = \vec{e_t}}}
  {
    \Gamma \vdash \utriple{\aP}{E}{\aQ}
  }\quad
    \infer[\seqrule]{\Gamma \vdash \utriple{\aP}{\Eseq{S_1}{S_2}}{\aR}}
  {
    \Gamma \vdash \utriple{\aP}{S_1}{\aQ} \quad 
    \Gamma \vdash \utriple{\aQ}{S_2}{\aR}
  }
  \\[3mm]
    \infer[\siterule]{\Gamma \vdash \bigutriple{\aP}{\Eifthenelse{B}{S_1}{S_2}}{
      \exists \vec{v_1}, \vec{v_2}, \vec{v_1'}, \vec{v_2'}.
      \begin{array}{l}
        \aQ_1[\vec{v_1'}[i] / \vec{v}[i] \ \text{where} \ \vec{b_{t_1}}[i] = \Efalse] \wedge \\
        \aQ_2[\vec{v_2'}[i] / \vec{v}[i] \ \text{where} \ \vec{b_{t_2}}[i] = \Etrue] 
      \end{array}
      \wedge (\vec{v_1} = \vec{v_2})
    }}
  {
    \begin{array}{l}
    \Gamma \vdash \utriple{\aP}{B}{\aP_B} \\
    \Gamma \vdash \utriple{
      \aP_B \wedge (\vec{v_1} = \vec{v})
    }{S_1}{\aQ_1} \\
    \Gamma \vdash \utriple{
      \aP_B \wedge (\vec{v_2} = \vec{v})
    }{S_2}{\aQ_2}
    \end{array} \quad 
    \begin{array}{l}
      \vec{v_1}, \vec{v_2}, \vec{v_1'}, \vec{v_2'} \ \text{fresh renames of} \ \vec{v} \\
      \vec{v} \ \text{refers to full set of vars in} \ \aP
    \end{array}
  }\\[5mm]

    \hspace{-10mm}\infer[\mathsf{While}] {\Gamma \vdash \utriple{\aI}{\Ewhile{B}{S}}{\aI_B \wedge \vec{b_t} = \Ef}}
    {
      \begin{array}{l}
      \vec{b_{loop}}, \vec{v_1}, \vec{v_2} \ \text{fresh} \\
      \Gamma \vdash \utriple{\aI}{B}{\aI_B} \\
      \Gamma \vdash \utriple{\aI_B \wedge \vec{b_{loop}} = 
        \vec{b_t} \wedge \vec{v} = \vec{v_1}}{S}{\aI_B'}
      \end{array} \quad
      \begin{array}{l}
        (\exists \vec{v_1}, \vec{v_2}, \vec{v_1'}, \vec{v_2'}, \vec{b_t}. \aI_B'[\vec{v_{1}'}[i] / \vec{v}[i] \ \text{where} \  \vec{b_{loop}}[i] = \Efalse] \wedge \\
        (\aI_B \wedge \vec{b_{loop}} = \vec{b_t}\wedge \vec{v} = \vec{v_2}) 
        [\vec{v_2'}[i] / \vec{v}[i] \ \text{where} \ \vec{b_{loop}}[i] = \Etrue] \wedge \\
        (\vec{v_1} = \vec{v_2}))
        \implies 
        (\exists \vec{b_t}. \aI_B)
      \end{array}
    }\\[5mm]

    \infer[\mathsf{While\_Exact}]
    {\Gamma \vdash \bigutriple{\aP}{\Ewhile{B}{S}}{
      \begin{array}{l}
        \exists \sigma_{\mathit{start}} \in \aP. b \in B. s \in S. 
          \: \forall j. 
          \: \exists l. \: \exists \sstate_0, \cdots, \sstate_l. \\
          \sigma_{\mathit{start}}[j] = \sstate_0 \wedge \\
          \forall i, 0 \leq i < l.
          \sem{b}(\sstate_i) = \Etrue \wedge \sem{s}(\sstate_i) = \sstate_{i + 1} \wedge \\
          \sigma[j] = \sstate_l \wedge \sem{b}(\sstate_l) = \Efalse
      \end{array}
      }
    }
    {}
  \end{array}
  $$
  \caption{Inference rules for statements in unrealizability logic.
  The postcondition of $\mathsf{While\_Exact}$ is not in the standard assertion language 
  of FO-PA extended with arrays, but there exists an encoding to FO-PA based on the Gödel 
  $\beta$-function~\cite{godelbeta}.
}
  \label{fig:stmt_rules}
\end{figure}
}

\subsubsection{Rules for Statements}
Unrealizability logic considers four kinds of statements: 
assignment, sequential composition, branches, and loops,
as shown in Figure~\ref{fig:stmt_rules}.

$\assignrule$ and $\seqrule$ can be explained 
using the same principles that we used to explain rules for expressions.
In $\assignrule$, the result of the nonterminal $E$ is stored in the variable $\vec{e_t}$ of 
$\aQ$; this value gets referenced and assigned to $\vec{x}$.
$\seqrule$ is the same as in Hoare logic, where the sequentiality of the statements 
are captured by sharing the pre/postcondition $\aQ$.


\mypar{If-Then-Else}
Branches pose a similar challenge to operators such as $\plusrule$: 
one requires a mechanism for composing the states generated by two different 
nonterminals while ensuring that they come from the same `origin' state.
Similarly to what we did for $\plusrule$, the solution is to extend the input states 
with a ghost state: observe that the premise triples 
are of the same form 
$\aP_B \wedge (\vec{v_1} = \vec{v})$, 
where $\aP_B$ is identical to $\aP$ except that $\vec{b_t}$ now stores the 
result of the branch condition.

The additional complication in the $\siterule$ rule is that in a given vector-state, 
certain examples must pass through the $\Etrue$ branch, while others 
must pass through the $\Efalse$ branch.
In the $\siterule$ rule, we achieve this synchronization by first passing 
an input state through \emph{both} the true and false branches (separately)---this
step is captured by the two premises about $S_1$ and $S_2$, which simply push the 
ghost-state extended preconditions through the respective branches
to obtain $\aQ_1$ and $\aQ_2$.

When \emph{merging} the states from $\aQ_1$ and $\aQ_2$, 
we project out the examples that took the wrong branch---e.g., 
in $\aQ_1$, the postcondition of the $\Etrue$ branch, 
the examples that should go through the $\Efalse$ branch are projected out.

This projecting out is captured by the conditional substitution 
$[\vec{v_1'}[i] / \vec{v}[i] \ \text{where} \ \vec{b_{t_1}}[i] = \Efalse]$, 
which substitutes a variable $v[i]$ with a fresh variable 
${v_1[i]'}$ \emph{only if} 
$\vec{b_{t_1}}[i] = \Efalse$---that is, if the branch condition 
evaluated to $\Efalse$ for that example.
(We reference $\vec{b_{t_1}}$, the ghost-version of $\vec{b_t}$, 
because  the value of $\vec{b_t}$ may change through the execution of $S_1$.)

The quantifier-free segment of the postcondition 
$\aQ_1[\vec{v_i'}[i] / \vec{v}[i] \ \text{where} \ \vec{b_{t_1}}[i] = \Efalse] \wedge 
 \aQ_2[\vec{v_i'}[i] / \vec{v}[i] \ \text{where} \ \vec{b_{t_2}}[i] = \Etrue] \wedge 
 (\vec{v_1} = \vec{v_2})
$
is thus an extended state with five `copies' of state, of which 
only one is retained:
the examples that took the correct 
branches (named via $\vec{v}$, and the ones that will be retained), 
the examples that took the incorrect branches (named via 
$\vec{v_1'}$ and $\vec{v_2'}$), and the two ghost states (named via $\vec{v_1}$ and $\vec{v_2}$).
Quantifying out the auxiliary variables 
$\vec{v_1'}, \vec{v_2'}, \vec{v_1},$ and $\vec{v_2}$ leaves just the examples that took the 
correct branches.

\begin{example}[If-then-Else with Extended States]
  \label{ex:eite_example}
  Consider the set of states $\aP$ given by 
  $(x_1, x_2) = \set{\lrangle{(-1, 1)}, \lrangle{(2, -2)}}$; that is, 
  $\aP$ denotes a set of two-example configurations, 
  which could be specified by the formula $ (x_1 = -1 \land x_2 = 1) \lor (x_1 = 2 \land x_2 = -2)$.
  Consider the If-Then-Else statement
  $\Eifthenelse{x > 0}{\Eassign{x}{x}}{\Eassign{x}{0-x}}$ (we use a single program as 
  it is sufficient to illustrate the $\siterule$ rule);
  this program sets $x$ to its absolute value.
  Observe that after executing $x > 0$ (corresponding to $B$), 
  $\aP_B$ becomes the set of states 
  $(x_1, x_2, b_{t_1}, b_{t_2}) = 
    \set{\lrangle{(-1, 1, \Efalse, \Etrue)}, \lrangle{(2, -2, \Etrue, \Efalse)}}$.
  After conjoining the ghost state and running through $S_1$, we get the following set
  (the ghost state is marked in {\color{cyan}cyan}):
  {\small
  $$
    (x_1, x_2, b_{t_1}, b_{t_2}, \colormath{cyan}{x_{1_1}, x_{2_1}, b_{t_{1_1}}, b_{t_{2_1}}}) 
    \! = \!
    \set{\lrangle{(-1, 1, \Efalse, \Etrue, \colormath{cyan}{-1, 1, \Efalse, \Etrue})}, 
         \lrangle{(2, -2, \Etrue, \Efalse, \colormath{cyan}{2, -2, \Etrue, \Efalse})}
    }
  $$
  }
  Applying the conditional substitution to this set yields the following set of states, where 
  non-substituted variables are {\color{dgreen} green}, variables 
  substituted with $\vec{v_1'}$ are {\color{dred} red}, and the ghost state is {\color{cyan} cyan}:
  $$
    \set{\lrangle{(\colormath{dred}{-1}, \colormath{dgreen}{1}, \colormath{dred}{\Efalse}, \colormath{dgreen}{\Etrue}, \colormath{cyan}{-1, 1, \Efalse, \Etrue})}, 
         \lrangle{(\colormath{dgreen}{2}, \colormath{dred}{-2}, \colormath{dgreen}{\Etrue}, \colormath{dred}{\Efalse}, \colormath{cyan}{2, -2, \Etrue, \Efalse})}}
  $$
  Similarly, executing $S_2$ and applying substitution yields the following set:
   $$
    \set{\lrangle{(\colormath{dgreen}{1}, \colormath{dred}{-1}, \colormath{dgreen}{\Efalse}, \colormath{dred}{\Etrue}, \colormath{cyan}{-1, 1, \Efalse, \Etrue})}, 
         \lrangle{(\colormath{dred}{-2}, \colormath{dgreen}{2}, \colormath{dred}{\Etrue}, \colormath{dgreen}{\Efalse}, \colormath{cyan}{2, -2, \Etrue, \Efalse})}}
   $$ 
  Again, note that the substitution happens for variables highlighted in {\color{dred} red}, while the 
  {\color{dgreen} green} variables are the original program variables.
  Taking the conjunction and quantifying out all unnecessary variables leaves only the variables 
  highlighted in {\color{dgreen} green}, which yields:
  $$
    (\colormath{dgreen}{x_1}, \colormath{dgreen}{x_2}, 
    \colormath{dgreen}{b_{t_1}}, \colormath{dgreen}{b_{t_2}}) = \set{
      \lrangle{(\colormath{dgreen}{1}, \colormath{dgreen}{1}, \colormath{dgreen}{\Efalse}, \colormath{dgreen}{\Etrue}}, 
      \lrangle{(\colormath{dgreen}{2}, \colormath{dgreen}{2}, \colormath{dgreen}{\Etrue}, \colormath{dgreen}{\Efalse})}
    },
  $$
  Which is exactly the set of states that are computed by the given If-Then-Else statement.
\end{example}

  In general, conditionally substituting a variable $\vec{x}[i]$ with 
  $\vec{x}'[i]$, 
  if a condition $\vec{b}[i]$ is false may be performed by replacing all occurences 
  of $\vec{v}[i]$ in a formula with 
  $\Eifthenelse{\vec{b}[i]}{\vec{x}[i]}{\vec{x'}[i]}$: this expression is equivalent to
  $\vec{x'}[i]$ when the branch condition $\vec{b}[i]$ is false, and equivalent to 
  $\vec{x}[i]$ when the expression is true.
  Note that this substitution is possible even if 
  the predicate is used to encode 
  infinite vector-states, as long as the predicate itself is finite.

  \begin{example}[Conditional Substitution Over Infinite Vector-States]
    Consider the infinite vector-state $\forall i. \vec{x}[i] = -1 \wedge  \vec{y}[i] = i$.
    Although this vector-state is infinite, one can conditionally substitute 
    $\vec{x}$ with a new vector-variable $\vec{x'}$ for entries where
    $\vec{b}[i]$ is $\Efalse$ with the expression 
    $\forall i. (\Eifthenelse{\vec{b}[i]}{\vec{x}[i]}{\vec{x'}[i]}) = -1 \wedge  \vec{y}[i] = i$.
    Observe that $\vec{x}[i]$ is now unconstrained for entries 
    where $\vec{b}[i] = \Efalse$, 
    whereas
    $\vec{x'}[i]$ is unconstrained instead for entries where $\vec{b}[i] = \Etrue$.
  \end{example}

\mypar{Loops}
The basic intuition behind the $\whilerule$ rule is identical 
to that in Hoare logic---it requires the existence of an \emph{invariant} $\aI$.
The difference with Hoare logic is that, in our work, $\aI$ must work
as an invariant for \emph{all} possible completions of the loop body
$S$.


Because the invariant $\aI$ spans multiple examples (i.e., vector-states), 
the way the invariant is checked is also 
different from in Hoare logic.
First, we take a similar approach as in $\siterule$, 
and extend the invariant $\aI_B$ with the ghost state $\vec{v_1}$.
This extended invariant is pushed through the loop body $S$regardless of the 
loop guard (in the premise 
$\utriple{\aI_B \wedge \vec{b_{loop} = \vec{b_t}} \wedge \vec{v} = \vec{v_1}}{S}{\aI_B'}$).

Observe that $\aI_B'$ contains examples that were pushed through the loop body even 
when they do not satisfy the loop guard.
The implication on the right of the rule is thus designed to check whether 
$\aI_B' \implies \aI_B$ 
\emph{only on the examples where the loop guard is} $\Etrue$.
To achieve this, a pattern similar to the $\plusrule$ rule is applied: 
the premise of the implication composes $\aI_B'$ and $\aI$ to obtain 
a mixed vector where examples that pass the loop guard are drawn from $\aI_B'$ and 
examples that do not pass the guard are drawn from $\aI$.
This vector is checked against $\exists \vec{b_t}. \aI_B$ 
to establish that $\aI$ is indeed an invariant.

%

Unfortunately, $\whilerule$ is incomplete despite the fact that the original while 
rule for Hoare logic is complete.
Unrealizability logic thus includes an additional rule $\mathsf{While\_Exact}$ to 
ensure (relative) completeness of the logic; 
the postcondition of $\mathsf{While\_Exact}$ \emph{precisely} captures the set of 
vector-states that may occur from executing a loop from $\Ewhile{B}{S}$ on a 
input vector-state from $\aP$.
$\mathsf{While\_Exact}$ does deviate from the standard invariant-based reasoning 
scheme of Hoare logic, but we observe that 
\rone invariant-based reasoning is still possible in Unrealizability logic through the 
standard $\whilerule$ rule, and 
\rtwo even in standard Hoare logic, to achieve complete `invariant-based' reasoning 
one must come up with invariants that are similar to the postcondition of 
$\mathsf{While\_Exact}$ anyways.
This is because the proof of completeness in standard Hoare logic depends on the 
existence of an invariant that precisely captures the sequence of states that may be 
generated by the execution of a loop, similar to $\mathsf{While\_Exact}$.

When writing actual proofs in unrealizability logic, it is often beneficial to 
split the loop bodies into smaller sets, and reason about them separately (as we have done in 
Example~\ref{ex:loops_bad} of \S\ref{Se:MotivatingExample}) through the 
use of the structural rules.

{\small
\begin{figure}[bt!]
  $$
    \begin{array}{c}
      \\[3mm]
      \infer[\weaken]{\Gamma \vdash \utriple{\aP'}{N_1}{\aQ'}}
      {
        \Gamma \vdash \utriple{\aP}{N}{\aQ} \quad 
        \aP' \implies \aP \quad 
        \aQ \implies \aQ' \quad 
        N_1 \subseteq N
      }\quad
      \infer[\conj]{\Gamma \vdash \utriple{\aP}{N}{\aQ_1 \wedge \aQ_2}}
      {
        \Gamma \vdash \utriple{\aP}{N}{\aQ_1} \quad 
        \Gamma \vdash \utriple{\aP}{N}{\aQ_2}
      }\\[3mm]
      \infer[\grmdisj]{\Gamma \vdash \utriple{\aP}{(N_1 \cup N_2)}{\aQ}}
      {
        \Gamma \vdash \utriple{\aP}{N_1}{\aQ} \quad 
        \Gamma \vdash \utriple{\aP}{N_2}{\aQ}
      }\quad
      \infer[\inv]{\Gamma \vdash \utriple{\aP}{N}{\aP}}
      {vars(\aP) \: {\cap} \: vars(N) = \emptyset}\quad
      \\[3mm]
      \infer[\subone]{\Gamma \vdash \utriple{\aP [\vec{y} / \vec{z}]}{N}{\aQ [\vec{y} / \vec{z}]}}
      {
        \Gamma \vdash \utriple{\aP}{N}{\aQ} & 
        \vec{y} \: {\cap} \: vars(N) = \emptyset & 
        \vec{z} \: {\cap} \: vars(N) = \emptyset & \
        (\vec{y}, \vec{z}) \text{ are not } \vec{e_t} \text{ or } \vec{b_t}
      }
      \\[3mm]
      \infer[\subtwo]{\Gamma \vdash \utriple{\aP [\vec{y} / \vec{z}]}{N}{\aQ}}
      {\Gamma \vdash \utriple{\aP}{N}{\aQ} & \vec{z} \: {\cap} \: (vars(N) \: {\cup} \: vars(\aQ)) = \emptyset }\\[3mm]
      \infer[\hp]{\Gamma \vdash \utriple{\aP}{N}{\aQ}}
    {
      \begin{array}{@{\hspace{0ex}}l}
        \Gamma, \utriplenospace{\aP}{N}{\aQ} \vdash \utriplenospace{\aP}{\RHS_1}{\aQ} \\
        \cdots \\
        \Gamma, \utriplenospace{\aP}{N}{\aQ} \vdash \utriplenospace{\aP}{\RHS_n}{\aQ}
      \end{array}
      ~~
      \begin{array}{@{\hspace{0ex}}l@{\hspace{-0.4ex}}}
        \\
        N \rightarrow \RHS_1  \mid  \cdots \mid \RHS_n\\
                \text{is exhaustive}
      \end{array}
    }
		\quad
      \infer[\applyhp]{\Gamma, \utriple{\aP}{N}{\aQ} \vdash \utriple{\aP}{N}{\aQ}}
      {}
    \end{array}
  $$
  \caption{Structural rules in unrealizability logic.
           $\vars(N)$ refers to the entire set of variables that may appear in 
           a program $t_N \in N$, with the addition of the reserved auxiliary variables 
           $\vec{e_t}$ and $\vec{b_t}$.
             }
  \label{fig:struct_rules}
\end{figure}
}

\subsubsection{Structural Rules}
Finally, unrealizability logic has a set of structural 
rules that operate on conditions, hypotheses, and sets of 
programs.
We list them in Figure~\ref{fig:struct_rules}.

\mypar{Weaken, Conj} 
These rules operate like their counterpart in Hoare logic. 
$\weaken$ can
 shrink the precondition or enlarge the postcondition following the principle that 
the postcondition overapproximates the set of all states that the precondition can generate.
$\weaken$ can also shrink the set of  considered programs.
$\conj$ takes the conjunction of the two postconditions.

\mypar{GrmDisj} $\grmdisj$ allows us to split sets of programs: if two sets of programs, $N_1$ and $N_2$, satisfy the same pre- and postcondition,
then their union also satisfies those same pre- and postcondition.
This rule is not required for the completeness of the logic. 
However, it helps greatly
in simplifying actual unrealizability proofs: a clever division of the search space often results in simpler predicates.

\mypar{Inv, SubOne, SubTwo} 
$\inv$, $\subone$, and $\subtwo$ are 
 rules that are necessary for our completeness proof.
$\inv$ states that if a precondition does not share any free variables with any program 
in the set $N$ 
(i.e., $vars(\aP) \cap vars(N) = \emptyset$), then it is an invariant---clearly this holds, because 
any program $N$ will be unable to access or modify the variables used in $\aP$.
$\subone$ performs $\alpha$-renaming of variables inside the 
pre-and postconditions; the side conditions state that one can only rename variables that are 
not `captured' by the set of programs $N$.
$\subtwo$ states that variables from the postcondition that do not appear in 
the set of programs $N$ or the postcondition may be renamed to any new name.
Intuitively, this is because a variable that only occurs in the precondition will remain 
untouched during program execution; see the proof of $\subtwo$ in the full version of the 
paper for details.

\mypar{HP, ApplyHP} 
$\hp$ and $\applyhp$ allow us to 
perform structural induction in the proof tree.
$\hp$ introduces a new hypothesis into the context $\Gamma$, while $\applyhp$ allows us to 
apply an induction hypothesis. (These rules were explained in~\S\ref{SubSe:InfiniteSetsOfPrograms}).

\begin{definition}[Derivability]
  \label{def:derivable}
  Given a precondition over vector-states $\aP$, a postcondition over vector-states $\aQ$, a set of programs $S$, 
  and a set of hypotheses $\Gamma$, 
  we say that a triple $\utriple{\aP}{S}{\aQ}$ is \emph{derivable assuming $\Gamma$}, 
  denoted by $\Gamma \vdash \utriple{\aP}{S}{\aQ}$, 
  if there exists a proof tree deriving $\Gamma \vdash \utriple{\aP}{S}{\aQ}$ using the rules of 
  unrealizability logic in Figures~\ref{fig:exp_rules}, \ref{fig:stmt_rules}, and 
  \ref{fig:struct_rules}.

  We say that a triple $\utriple{\aP}{S}{\aQ}$ is \emph{derivable}, if it can be 
  derived with an empty set of hypotheses.
\end{definition}


\section{Soundness, Completeness, and Other Theoretical Results}
\label{Se:SoundnessAndCompleteness}

Unrealizability logic is a sound logic in the sense that 
any derivable triple $\utriple{\aP}{N}{\aQ}$ is also valid.

\begin{theorem}[Soundness]
  \label{thm:soundness}
  Given a nonterminal $N$ in a grammar $G$, the following property holds:
\[\vdash \utriple{\aP}{N}{\aQ} \implies \models \utriple{\aP}{N}{\aQ}\]
\end{theorem}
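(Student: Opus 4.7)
The plan is to prove soundness by induction, but since the $\hp$ rule introduces a hypothesis matching its own conclusion, a naive induction on the proof tree will not close. The difficulty is analogous to soundness of the recursion rule in Hoare logic for recursive procedures: one must simultaneously perform induction on the derivation and on the structure of programs generated by the grammar. My plan is to strengthen the statement to handle the context $\Gamma$ and then interleave the two inductions.

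First I would generalize the claim to a form suitable for induction: for every derivation of $\Gamma \vdash \utriple{\aP}{N}{\aQ}$, if every hypothesis in $\Gamma$ is semantically valid in the sense of Definition~\ref{def:triple}, then $\models \utriple{\aP}{N}{\aQ}$. With this formulation, most rules dispatch routinely by induction on the proof tree. The axiomatic expression rules ($\mathsf{Zero}$, $\mathsf{One}$, $\mathsf{Var}$, $\mathsf{True}$, $\mathsf{False}$) follow directly from the extended semantics of Definition~\ref{def:alt_semantics}; the compound expression rules ($\mathsf{Not}$, $\mathsf{Bin}$, $\mathsf{Comp}$, $\mathsf{And}$) follow from Definition~\ref{def:vectorsemantics} together with Lemma~\ref{lem:expression_invariant}, where the ghost-variable bookkeeping precisely enforces that pairs of sub-computations come from the same originating vector-state, as illustrated in Example~\ref{ex:correcting_plus}. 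The statement rules $\assignrule$, $\seqrule$, and $\siterule$ are handled by the familiar Hoare-logic arguments lifted to vector-states (with $\siterule$ needing the projection-by-ghost-state idea of Example~\ref{ex:eite_example}). The structural rules $\weaken$, $\conj$, $\grmdisj$, $\inv$, $\subone$, and $\subtwo$ are routine set-theoretic manipulations of $\semantics{T}(\aP)$, where $\inv$, $\subone$, and $\subtwo$ rely on the fact that programs cannot read variables they do not mention. Finally, $\applyhp$ is immediate under the strengthened hypothesis.

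The two genuinely difficult rules are $\hp$ and $\whilerule$. For $\hp$, I would bootstrap the validity of the hypothesis by well-founded induction on program size: define $\models_k \utriple{\aP}{N}{\aQ}$ to mean that the triple holds restricted to programs $t \in L(N)$ with $|t| \leq k$, and prove by induction on $k$ that any derivable triple satisfies $\models_k$. The $\hp$ step then works because, for any program $t \in L(N)$ of size $k{+}1$, $t$ must match some $\RHS_i$ with strictly smaller subterms drawn from $L(N)$; the inductive derivation of $\utriple{\aP}{\RHS_i}{\aQ}$ uses the context hypothesis $\utriple{\aP}{N}{\aQ}$ only on those strict subterms, which the outer induction has already certified as valid via $\models_k$. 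For $\whilerule$, partial correctness is established by induction on the number of loop iterations of each individual example within the vector-state, using the conditional-substitution implication premise to re-establish $\aI_B$ only on the coordinates where the guard held, while $\inv$-style reasoning preserves it on the coordinates where the guard was false.

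The main obstacle is cleanly formalizing the interleaving of induction on $k$ (program size) with induction on the derivation, because the rules for binary expressions, if-then-else, and loops manipulate vector-states through a sequence of vocabulary shifts, conditional substitutions, and existential quantifications; turning each of these syntactic manipulations into a crisp set-theoretic equality $\semantics{\RHS}(\aP) = \aQ$ (or at least inclusion) is the technically delicate step. I expect the $\whilerule$ case to be the most involved, since the single invariant must cover every possible completion of the loop body; here the soundness argument essentially shows that the implication side-condition on $\aI_B'$ and $\aI_B$ corresponds exactly to preservation of $\aI_B$ across one iteration for each coordinate whose guard is $\Etrue$, and this must be combined with a coordinate-wise induction on iteration count to conclude $\vec{b_t} = \Ef$ in the postcondition.
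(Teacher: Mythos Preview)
Your proposal is correct and follows essentially the same skeleton as the paper's proof: structural induction (case analysis) over the inference rules, with the semantic expansions for the expression and statement rules carried out via Definition~\ref{def:vectorsemantics} and the ghost-state bookkeeping, and the structural rules handled by straightforward set-theoretic reasoning.

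The one notable difference is emphasis. The paper spends the bulk of its effort on the semantic expansions for $\mathsf{Bin}$ and $\siterule$, carefully tracking the $\mathsf{ext}/\mathsf{proj}$ operators and the $\oplus$/$\ostar$ machinery to show that the ghost-state matching really does filter to same-origin pairs; it then dismisses $\hp$ and $\applyhp$ in a single line (``by soundness of mathematical structural induction''). You invert this: you sketch the semantic cases and instead spell out the $\hp$ case via an explicit size-indexed validity predicate $\models_k$ and an interleaved induction on program size and derivation depth. Your treatment of $\hp$ is the standard technique from recursive Hoare logic (cf.\ Nipkow, Apt et al.) and is more rigorous than the paper's one-liner; conversely, the paper's detailed unwinding of $\plusrule$ and $\siterule$ is where the real technical work in the vector-state setting lies, and your proposal correctly flags that turning the vocabulary shifts and conditional substitutions into clean set inclusions is ``the technically delicate step.'' Both routes arrive at the same place; your $\models_k$ formulation would in fact strengthen the paper's presentation of the $\hp$ case.
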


Soundness can be proved via structural induction, which shows that the conclusion
triple of each rule is sound given that the premises are sound
(see the Appendix in the full version of this paper~\cite{ularxiv} for the full proof).
While reasoning about some cases is complex due to the vectorized semantics, 
the overall structure of the proof is a simple structural-induction argument.

Surprisingly, unrealizability logic is also relatively complete, 
in the sense that 
all valid triples $\utriple{\aP}{N}{\aQ}$ are derivable, assuming 
that there is an oracle capable of verifying all true sentences in the assertion 
language.
\begin{theorem}[Relative Completeness]
	\label{thm:completeness}
  Let $\aP$ and $\aQ$ be predicates in an assertion language $L$.
  Assuming the existence of an oracle capable of verifying all true sentences in 
  $L$, the following property holds for each nonterminal $N$ in a grammar $G$:
\[\models \utriple{\aP}{N}{\aQ} \implies \vdash \utriple{\aP}{N}{\aQ}\]
\end{theorem}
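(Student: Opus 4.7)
The plan is to adapt Cook's strongest-postcondition argument to unrealizability logic. For each precondition $\aP$ and nonterminal $N$, I define the strongest postcondition $\mathsf{sp}(\aP, N) := \{\sigma' \mid \exists \sigma \in \aP, \exists t \in L(N).\ \semantics{t}(\sigma) = \sigma'\}$. The first step is to argue that $\mathsf{sp}(\aP, N)$ is expressible in the assertion language $L$: since $L$ is at least as expressive as FO-PA, a standard G\"odel-style arithmetic encoding captures both the syntax trees of $L(N)$ and the vector-state-lifted operational semantics $\semantics{\cdot}$ from Definitions~\ref{def:alt_semantics} and~\ref{def:vectorsemantics}. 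For proofs that genuinely require infinite vector-states, I assume $L$ is extended with infinite arrays as discussed earlier in the paper. Expressibility of the strongest postcondition is the one place where a genuine appeal to the assertion-language assumption, and to the oracle, is truly unavoidable.

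Given expressibility, the body of the proof is a derivation of $\vdash \utriple{\aP}{N}{\mathsf{sp}(\aP, N)}$; from this, since $\models \utriple{\aP}{N}{\aQ}$ entails $\mathsf{sp}(\aP, N) \implies \aQ$ (discharged by the oracle), a single $\weaken$ yields the desired triple. I would prove derivability by structural induction on the RTG $G$. For base cases and non-recursive syntactic productions, each rule in Figures~\ref{fig:exp_rules} and~\ref{fig:stmt_rules} has been designed so that its conclusion is the \emph{precise} image of the premises' postconditions: the ghost-state machinery in $\mathsf{Bin}$, $\mathsf{Comp}$, $\mathsf{And}$, and $\siterule$ ensures that combining sub-nonterminal postconditions loses no information, so composing strongest-postcondition triples inductively yields another strongest-postcondition triple. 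For a (mutually) recursive nonterminal $N$, I apply $\hp$ with $\utriple{\aP}{N}{\mathsf{sp}(\aP, N)}$ as the hypothesis and then handle each right-hand side $\RHS_i$ structurally, invoking $\applyhp$ at every occurrence of $N$ and using $\subone$, $\subtwo$, and $\inv$ to reconcile variable scopes with the fresh variables introduced by the forwards-style expression rules.

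The main obstacle is the $\while$ rule. Relative completeness here requires exhibiting a single invariant $\aI$ strong enough to capture the behavior of \emph{every} possible completion of the loop body $S$, and, as remarked in the text preceding the theorem, this is only possible by encoding the choice of loop body into the invariant itself. My $\aI$ will assert, for each example index $i$, the existence of a syntax tree $t_i \in L(S)$ and an iteration count $k_i$ such that $k_i$ unrollings of $\Ewhile{B}{t_i}$ from the corresponding input entry reach the current entry, with the crucial constraint that the \emph{same} tree $t_i$ be used across all iterations on index $i$. This assertion again relies on the arithmetic encoding of $L(S)$ noted above, which is why the FO-PA assumption is essential. The side conditions of $\while$ then reduce, after unfolding the encoding, to the inductive hypothesis for $S$, and conjoining with $\vec{b_t} = \Ef$ gives exactly the strongest partial-correctness postcondition. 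Once this case is discharged, the induction closes and relative completeness follows.
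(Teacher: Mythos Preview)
Your overall Cook-style plan is right in spirit, but two concrete steps would fail as written.

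First, the hypothesis you introduce via $\hp$ is $\utriple{\aP}{N}{\mathsf{sp}(\aP,N)}$ for the \emph{specific} target precondition $\aP$. This hypothesis is not reusable at recursive occurrences of $N$ that sit behind other computation: in $\Eseq{S_1}{N}$, for instance, the precondition reaching $N$ is $\mathsf{sp}(\aP,S_1)$, not $\aP$, so $\applyhp$ does not fire, and $\subone$/$\subtwo$/$\inv$ cannot manufacture a triple with a semantically different precondition. The paper avoids this by making the strongest triple \emph{parametric}: it proves $\utriple{\vec{x}=\vec{z}}{N}{\semantics{N}(\vec{x}=\vec{z})}$ with fresh symbolic $\vec{z}$, so one hypothesis per nonterminal suffices and the $\hp$ loop closes after at most $|G|$ introductions (induction on the size of the context, not ``structural induction on the RTG,'' which is ill-founded for mutually recursive grammars). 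A separate lemma then shows how to recover an arbitrary $\utriple{\aP}{N}{\aQ}$ from this symbolic triple via $\inv$, $\conj$, $\weaken$, $\subone$, $\subtwo$; that derivation is where the real work with the substitution rules happens, and it is not just ``reconciling variable scopes.''

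Second, your $\while$ invariant is unsound for the vector-state semantics: you quantify a syntax tree $t_i$ \emph{per example index}, constraining only that each $t_i$ is fixed across iterations. But the semantics of $\Ewhile{B}{S}$ on a vector-state fixes a \emph{single} $s\in L(S)$ for all indices simultaneously; allowing $t_i\neq t_j$ admits exactly the spurious mixtures the logic is designed to rule out, so your $\aI$ is not implied by the precondition and the $\weaken$ step to the target postcondition would fail. The paper instead takes $\aI=\wp(\Ewhile{B}{S},\aP)$, the weakest liberal precondition of the whole loop set, and checks directly that the side condition of $\while$ holds for this choice.
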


Completeness follows from two lemmas that we state later 
(Lemmata~\ref{lem:strongest_triple} and~\ref{lem:general_triple}), which state that 
\rone a \emph{strongest triple} that precisely describes the behavior of a nonterminal is derivable, 
and \rtwo that any sound triple about a nonterminal can be derived from the strongest 
triple.



In the rest of this section, we give a sketch of the proof of completeness, 
and discuss some other theoretical results.
We will present proof sketches for the theorems listed in this section, for the 
full proofs, we again refer the reader to the full version of the paper~\cite{ularxiv}.

\subsection{A Sketch of the Proof of Completeness}
\label{SubSe:Completeness}

Completeness of unrealizability logic is a two-step process: first we wish to 
prove that the non-structural rules (expression and statement rules) of 
unrealizability logic are complete, in the sense that,
given a set of programs generated by a production $op(N_1, \cdots, N_k)$, 
if all sound premise triples about $N_1, \cdots, N_k$ are derivable, 
then all sound triples about $op(N_1, \cdots, N_k)$ are also derivable.

\begin{lemma}[Completeness of Expression and Statement Rules]
	\label{thm:precision}
  Let $G$ be a regular tree grammar and $N$ be a set of programs
  generated by either a nonterminal or the RHS of a production.
  Then, the expression and statement rules of unrealizability logic
  \emph{preserve completeness}, in the sense that,
  if all sound required premise triples are derivable,
  then all sound triples about $N$ are derivable in unrealizability logic.
\end{lemma}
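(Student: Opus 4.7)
The plan is to proceed by case analysis on the production $op(N_1,\ldots,N_k)$ generating $N$. For each expression or statement rule, I would exhibit ``strongest-postcondition'' forms of the premise triples for each sub-nonterminal $N_i$, apply the rule with these premises to obtain a triple whose postcondition exactly characterizes the semantic image $\semantics{op(N_1,\ldots,N_k)}(\aP)$, and then derive any given sound triple $\utriple{\aP}{N}{\aQ}$ by a single application of $\weaken$. Since by hypothesis every sound triple about $N_i$ is derivable, the strongest premise triples are freely available; the whole argument therefore reduces to verifying, case by case, that the rules are semantically tight.

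The base cases ($\mathsf{Zero}$, $\mathsf{One}$, $\mathsf{Var}$, $\mathsf{True}$, $\mathsf{False}$) are immediate: each has no premise, and its postcondition explicitly overwrites the reserved variable $\vec{e_t}$ (resp.\ $\vec{b_t}$) with the computed value while freezing the rest of the vocabulary via the $\aP[\vec{e_t'}/\vec{e_t}]$ substitution, which is exactly the semantic image. For the composite expression rules ($\mathsf{Not}$, $\mathsf{Bin}$, $\mathsf{Comp}$, $\mathsf{And}$) and for $\assignrule$, $\seqrule$, and $\siterule$, the key observation is that the ghost-state machinery---fresh renames $\vec{v_1},\vec{v_2}$ conjoined as $\vec{v_1} = \vec{v}$ on input and matched by $\vec{v_1} = \vec{v_2}$ on output, followed by existential projection of the auxiliary vocabulary---forces the sub-computations to share a common originating vector-state, reproducing Definition~\ref{def:vectorsemantics} exactly (compare Example~\ref{ex:correcting_plus}). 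For $\siterule$ specifically, the conditional substitution on the indices where $\vec{b_{t_1}}[i] = \Efalse$ (resp.\ $\vec{b_{t_2}}[i] = \Etrue$) is precisely what projects each example into the branch its guard selects. In every composite case, checking that the rule's conclusion, instantiated with strongest-postcondition premises, equals the semantic image is a routine (if verbose) calculation.

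The main obstacle is the $\whilerule$ rule, which demands a single invariant $\aI$ that is sound for every loop body in $L(S)$ simultaneously. The plan is to take $\aI$ to be the assertion-language predicate expressing ``$\sigma$ is obtained from some $\sigma_0 \in \aP$ by a finite sequence of $L(S)$-executions, each applied only to sub-examples whose guard is still $\Etrue$.'' Because the assertion language is at least as expressive as FO-PA (and, when $\aP$ ranges over infinite vector-states, FO-PA extended with infinite arrays), such sequences of body-choices and per-example guard histories can be arithmetized; constructing this encoding is the principal technical hurdle and is exactly the ``highly complex encoding'' flagged in \S\ref{SubSe:NoWhileLogic}. With $\aI$ in hand, the loop-body premise $\utriple{\aI_B \wedge \vec{b_{loop}} = \vec{b_t}}{S}{\aI_B'}$ is a sound triple about $S$ and hence derivable by hypothesis; the side-condition implication reduces to a valid sentence of the assertion language, which the assumed oracle discharges; and a final $\weaken$ step converts $\aI_B \wedge \vec{b_t} = \Ef$ into any sound postcondition $\aQ$. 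Making the reachability-encoding rigorous, rather than any case in the analysis, is where the bulk of the technical work lies.
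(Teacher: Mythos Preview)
For every rule except $\whilerule$, your plan is essentially the paper's: the paper phrases the same argument as a ``reverse-soundness'' lemma (each non-$\whilerule$ rule preserves $\semantics{N}(\aP)\supseteq\aQ$), which together with soundness gives precision when the premises are precise, and then a single $\weaken$ reaches any sound triple. Your strongest-postcondition phrasing is just the precise instantiation of the premises; the two presentations are equivalent.

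For $\whilerule$ the paper takes the dual route. Rather than your forward-reachability invariant, it uses the \emph{weakest liberal precondition} $\aI=\wp(\Ewhile{B}{S},\aP)$, argues that this choice satisfies the rule's premises, derives $\utriple{\wp(N,\aP)}{N}{\aP}$, and then instantiates $\aP:=\semantics{N}(\aQ)$ so that the precondition weakens to $\aQ$ (since $\aQ\Rightarrow\wp(N,\semantics{N}(\aQ))$ holds by definition of $\wp$). Your construction works forward from the given precondition; the paper's works backward from the target postcondition. Both lean on the same expressibility assumption, so neither is more elementary.

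Your $\whilerule$ invariant as stated has a gap. Read literally, ``$\sigma$ is obtained from some $\sigma_0\in\aP$ by a finite sequence of $L(S)$-executions'' allows a \emph{different} body at each step. That set is closed under every $s\in S$, so the side-condition implication goes through, but then $\aI_B\wedge\vec{b_t}=\Ef$ strictly overapproximates $\semantics{\Ewhile{B}{S}}(\aP)$ and the final $\weaken$ can fail. Take $B=\{x<2\}$, $S=\{\Eassign{x}{x{+}1},\,\Eassign{x}{x{+}2}\}$, $\aP=\{x{=}0\}$: both fixed-body loops terminate with $x=2$, so $\aQ=\{x{=}2\}$ is a sound postcondition; yet mixing bodies ($+1$ then $+2$) reaches $x=3$ with the guard false, putting $x=3$ in your exit set but not in $\aQ$. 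Conversely, if you fix a single body $s$ inside $\aI$, then $\aI$ is no longer preserved by $s'\neq s$ and the side-condition implication fails. The ``sequences of body-choices'' you propose to arithmetize must somehow record a \emph{fixed} body while remaining closed under execution by every $s'\in S$; that tension is exactly what the encoding has to resolve, and your sketch does not say how.
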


When creating a proof tree in unrealizability logic, 
one will encounter triples where the center element is 
the right-hand side of a production (e.g., $\circled{1}$ of Figure~\ref{fig:s2_prooftree}).
Consequently, Lemma 4.3 is stated for both nonterminals and right-hand sides of productions.
Lemma~\ref{thm:precision} can be proved by showing that 
the postcondition of the conclusion of each rule 
\emph{precisely} captures the semantics of the corresponding operator.

The most interesting case in the proof of Lemma~\ref{thm:precision} 
is the case with loops; all other cases can be
proved via structural induction. 
For loops, because unrealizability logic contains the rule
$\mathsf{While\_Exact}$, 
which precisely captures the semantics of loops, any sound triple about a loop can be
proved via $\mathsf{While\_Exact}$ and an application of the $\mathsf{Weaken}$ rule.

While Lemma~\ref{thm:precision} would be sufficient as a proof of 
completeness in normal Hoare logic, it is insufficient for 
unrealizability logic because we are working over an RTG;
without the use of the $\hp$ rule, the proof tree will become 
infinite.
The next step in showing completeness remedies this fact
by proving that through use of the $\hp$ rule, we can prove the 
strongest triple for a nonterminal within a finite number of steps.

\begin{lemma}[Derivability of the Strongest Triple]
  \label{lem:strongest_triple}
  Let $N$ be a nonterminal from a RTG $G$ and $\vec{z}$ be a set of auxiliary symbolic variables. 
  Let $\aQ_0\equiv\semantics{N}(\vec{x} = \vec{z})$; that is, 
  $\aQ_0$ is a formula that precisely captures the behavior of the set of programs $L(N)$ on the symbolic vector-state 
  $x_1 = z_1 \wedge  \cdots x_n = z_n$.\footnote{
    We assume that state is defined over a single variable $x$; it is trivial to extend this to multiple-variable states.
  }
  We refer to $\utriple{\vec{x} = \vec{z}}{N}{\aQ_0}$ as the \emph{strongest triple} for $N$.
  Then $\vdash \utriple{\vec{x} = \vec{z}}{N}{\aQ_0}$ is derivable.
\end{lemma}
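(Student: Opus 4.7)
The plan is to prove the strongest triple for every nonterminal in $G$ simultaneously, via nested applications of the $\hp$ rule combined with Lemma~\ref{thm:precision}. Since $G$ has finitely many nonterminals $N_1,\ldots,N_m$, I would pick, once and for all, disjoint vectors of fresh symbolic variables $\vec{z_{N_1}},\ldots,\vec{z_{N_m}}$, and use $\hp$ repeatedly (starting from $N$ itself, then on each other nonterminal encountered as we descend) to introduce the strongest triple $\utriple{\vec{x}=\vec{z_{N_i}}}{N_i}{\semantics{N_i}(\vec{x}=\vec{z_{N_i}})}$ for every $N_i$ into the context $\Gamma^*$. After these nested $\hp$ applications, each premise has the form $\Gamma^* \vdash \utriple{\vec{x}=\vec{z_N}}{\RHS}{\semantics{N}(\vec{x}=\vec{z_N})}$, one for each production $N\to\RHS$ in the grammar.

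Next, for a fixed production $N\to op(M_{i_1},\ldots,M_{i_k})$, I would derive the strongest triple for the RHS itself, namely $\Gamma^* \vdash \utriple{\vec{x}=\vec{z_N}}{op(M_{i_1},\ldots,M_{i_k})}{\semantics{op(M_{i_1},\ldots,M_{i_k})}(\vec{x}=\vec{z_N})}$. By Lemma~\ref{thm:precision}, the corresponding expression or statement rule is strong enough to produce this strongest triple, provided that the required premises about each sub-nonterminal $M_{i_j}$ are derivable under $\Gamma^*$. Those premises have preconditions obtained by augmenting $\vec{x}=\vec{z_N}$ with ghost-state conjuncts (e.g., $\vec{v_1}=\vec{v}$ in $\mathsf{Bin}$ and $\siterule$). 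To obtain each such premise, I would retrieve the hypothesis for $M_{i_j}$ by $\applyhp$, rename $\vec{z_{M_{i_j}}}$ to $\vec{z_N}$ (or other fresh names) via $\subone$ to align vocabularies, and then attach the ghost-state conjuncts as invariants by $\inv$ plus $\conj$, exploiting that the ghost variables lie outside $\vars(M_{i_j})$. Finally, since $\semantics{N}(\vec{x}=\vec{z_N})=\bigcup_{\RHS}\semantics{\RHS}(\vec{x}=\vec{z_N})$, a single $\weaken$ turns the strongest RHS triple into the required $\utriple{\vec{x}=\vec{z_N}}{\RHS}{\semantics{N}(\vec{x}=\vec{z_N})}$.

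The main obstacle will be threading the mutual recursion through the $\hp$ mechanism while simultaneously massaging the hypotheses into the precondition shapes demanded by the composite rules. Each of $\mathsf{Bin}$, $\mathsf{Comp}$, $\andrule$, $\siterule$, and $\while$ introduces ghost-variable bookkeeping in the precondition and existentially quantifies it out in the postcondition, so the hypothesis cannot be invoked by $\applyhp$ alone; the delicate glue is the combined use of $\subone$, $\subtwo$, $\inv$, and $\conj$ to manufacture exactly the premise shape the rule expects, while preserving the equality to $\semantics{M_{i_j}}(\cdot)$. A secondary obstacle is the $\while$ case, where Lemma~\ref{thm:precision} is itself underwritten by the expressibility of the strongest loop invariant in the assertion language; as remarked in \S\ref{SubSe:NoWhileLogic}, this relies on an extensional encoding via the Gödel $\beta$-function to express a single invariant that ranges over all possible loop bodies in $L(S)$, and I would simply invoke this expressibility assumption rather than construct the encoding explicitly.
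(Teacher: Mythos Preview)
Your overall strategy coincides with the paper's: nest applications of $\hp$ to load the strongest triple for every nonterminal into a context $\Gamma^*$, then discharge each $\RHS$ obligation by pulling the sub-nonterminal hypothesis out with $\applyhp$, reshaping it into the premise form the relevant operator rule expects, invoking Lemma~\ref{thm:precision} to obtain the strongest $\RHS$ triple, and finally weakening to $\aQ_{0,N}$ using $\semantics{\RHS}(\vec{x}=\vec{z})\subseteq\semantics{N}(\vec{x}=\vec{z})$. The paper structures the same argument as a downward induction on the number of hypotheses already in the context (base case $|G|-1$, working down to the empty context), but the content is identical.

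There is one genuine gap in your ``glue'' step. You propose to attach the ghost-state conjunct $\vec{v_1}=\vec{v}$ (as required in the premises of $\mathsf{Bin}$, $\mathsf{Comp}$, $\andrule$, $\siterule$) using $\inv$ plus $\conj$, ``exploiting that the ghost variables lie outside $\vars(M_{i_j})$.'' But $\vec{v_1}=\vec{v}$ mentions the \emph{program} variables $\vec{v}$ as well, so $\vars(\vec{v_1}=\vec{v})\cap\vars(M_{i_j})\neq\emptyset$ and $\inv$ does not fire. The paper sidesteps this by not attempting a direct $\inv$/$\conj$ patch: instead it invokes Lemma~\ref{lem:general_triple}, which shows that from the strongest triple $\utriple{\vec{x}=\vec{z}}{M}{\aQ_{0,M}}$ one can derive \emph{any} valid triple $\utriple{\aP}{M}{\aQ}$. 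The construction there first renames the program variables out of $\aP$ (via $[\vec{u}/\vec{z}][\vec{z}/\vec{x}]$) so that $\inv$ \emph{does} apply, conjoins with the strongest triple, then uses $\subtwo$ and $\subone$ to undo the renaming. Once you replace your $\inv$/$\conj$ step by an appeal to Lemma~\ref{lem:general_triple}, your proof goes through and is essentially the paper's.
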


The proof of Lemma~\ref{lem:strongest_triple} performs induction 
over the number of hypotheses in the context: 
the proof relies on the fact that
one only needs to insert the strongest triple for each nonterminal into the context 
(because the strongest triple can derive any other triple), 
thus the $\hp$ rule need only be applied at most $|G|$ times 
(the number of nonterminals inside $G$).
When all $|G|$ hypotheses are established, one relies on Lemma~\ref{thm:precision} to show that 
the strongest hypothesis can actually be derived.

The proof of Lemma~\ref{lem:strongest_triple}, as well as the final step in arguing completeness, 
requires a lemma that shows one can derive any triple from the strongest triple.

\begin{lemma}[Derivability of General Triples]
  \label{lem:general_triple}
  Let $G$ be a grammar and $N$ be any nonterminal in $G$. 
  Given the strongest triple $H_N$ for the nonterminal $N$ 
  (as defined in Lemma~\ref{lem:strongest_triple}),
  if $\Gamma \vdash H_N$,
  then $\Gamma \models \utriple{\aP}{N}{\aQ} \implies \Gamma \vdash \utriple{\aP}{N}{\aQ}$.
\end{lemma}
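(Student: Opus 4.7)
The plan is to derive the target triple $\Gamma \vdash \utriple{\aP}{N}{\aQ}$ from $H_N = \utriple{\vec{x} = \vec{z}}{N}{\aQ_0}$ by strategic use of the structural rules $\inv$, $\conj$, $\weaken$, and $\subtwo$. Without loss of generality---applying $\subone$ to $H_N$ first if necessary---I would assume that the symbolic variables $\vec{z}$ have been chosen fresh, i.e., disjoint from $\vars(N) \cup \vars(\aP) \cup \vars(\aQ)$.

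To bring $\aP$ into the picture, I would derive
\[
  \Gamma \vdash \utriple{\aP[\vec{z}/\vec{x}]}{N}{\aP[\vec{z}/\vec{x}]}
\]
via $\inv$, which is applicable because $\aP[\vec{z}/\vec{x}]$ mentions only variables in $\vec{z}$ and thus shares no variables with $N$. Strengthening the preconditions of both this invariant triple and $H_N$ to the common conjunction $\vec{x} = \vec{z} \wedge \aP[\vec{z}/\vec{x}]$ via $\weaken$, then applying $\conj$, yields
\[
  \Gamma \vdash \utriple{\vec{x} = \vec{z} \wedge \aP[\vec{z}/\vec{x}]}{N}{\aQ_0 \wedge \aP[\vec{z}/\vec{x}]}.
\]

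The crux of the argument is the next $\weaken$ step, which weakens the postcondition from $\aQ_0 \wedge \aP[\vec{z}/\vec{x}]$ to $\aQ$. Since $\aQ_0 \equiv \semantics{N}(\vec{x} = \vec{z})$ exactly captures the set of post-states $\vec{x}$ reachable from initial state $\vec{z}$ via some program in $L(N)$, and the conjunct $\aP[\vec{z}/\vec{x}]$ asserts that this initial state $\vec{z}$ itself satisfies $\aP$, the validity hypothesis $\Gamma \models \utriple{\aP}{N}{\aQ}$ immediately yields the implication $\aQ_0 \wedge \aP[\vec{z}/\vec{x}] \implies \aQ$ at the assertion-language level. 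This is precisely the step that invokes the oracle assumed by Theorem~\ref{thm:completeness}.

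The proof concludes by applying $\subtwo$ with substitution $\vec{x}/\vec{z}$ on the precondition. The side condition $\vec{z} \cap (\vars(N) \cup \vars(\aQ)) = \emptyset$ is satisfied by our freshness choice, and the substitution reduces $\vec{x} = \vec{z} \wedge \aP[\vec{z}/\vec{x}]$ to $\vec{x} = \vec{x} \wedge \aP$, logically equivalent to $\aP$ (a final trivial $\weaken$ cleans this up). The main obstacle is the semantic weakening step: one must carefully argue that $\aQ_0$, viewed as a predicate over $\vec{x}$ and $\vec{z}$, truly behaves as a reachability relation for $L(N)$ indexed by the initial state $\vec{z}$, so that the validity of the target triple translates cleanly into an assertion-level implication. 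Aside from this, the argument is straightforward structural manipulation.
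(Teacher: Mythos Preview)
Your proposal is correct and follows essentially the same route as the paper's proof: both use $\inv$ to freeze $\aP$ over the auxiliary variables $\vec{z}$, combine with $H_N$ via $\conj$ and $\weaken$, discharge the key semantic implication using validity of the target triple, and finish with $\subtwo$ to eliminate $\vec{z}$. The only cosmetic difference is that the paper introduces a second fresh vector $\vec{u}$ (via $\aP[\vec{u}/\vec{z}][\vec{z}/\vec{x}]$ and a final $\subone$) to guard against $\vec{z}$ occurring in $\aP$ or $\aQ$, whereas you achieve the same effect by renaming $\vec{z}$ in $H_N$ upfront; one minor inaccuracy is your claim that $\aP[\vec{z}/\vec{x}]$ ``mentions only variables in $\vec{z}$''---it may still mention auxiliary variables of $\aP$---but since those are disjoint from $\vars(N)$ anyway, the application of $\inv$ remains valid.
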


Lemma~\ref{lem:general_triple} can be proved using the substitution rules $\subone$ and $\subtwo$.
Relative completeness (Theorem~\ref{thm:completeness}) then follows from 
Lemma~\ref{lem:strongest_triple} and Lemma~\ref{lem:general_triple}.

\subsection{Undecidability of Unrealizability and Decidable Fragments}
\label{SubSe:Decidability}

Although we have proved relative completeness, proving unrealizability is 
an undecidable problem, even when limited to synthesis problems 
\emph{without} loops.

\begin{theorem}[Undecidability of Unrealizability]
  \label{thm:undecidability}
  Let $\syu$ be a synthesis problem with a grammar $\gu$ that does \emph{not} 
  contain productions of the form $S::=\Ewhile{B}{S_1}$.
	Checking whether $\syu$ is unrealizable is an undecidable problem.
\end{theorem}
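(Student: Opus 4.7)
The plan is to reduce from Hilbert's Tenth Problem, whose undecidability is the Matiyasevich--Robinson--Davis--Putnam theorem: there is no algorithm that, given a polynomial $p(x_1,\ldots,x_n)$ with integer coefficients, decides whether $p$ has an integer root. Because this problem is undecidable in both directions, reducing to realizability gives undecidability of unrealizability ``for free'', since unrealizability is the complement. I will therefore construct an instance $\syu$ whose realizability is equivalent to the existence of an integer root of $p$, so that unrealizability of $\syu$ is equivalent to the complementary (and equally undecidable) question.

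The construction, given $p$, is to define $\gu$ as the grammar
\[
  \nstart \rightarrow \Eseq{\Eassign{x_1}{E_1}}{\Eseq{\Eassign{x_2}{E_2}}{\Eseq{\cdots}{\Eassign{x_n}{E_n}}}}, \qquad E_i \rightarrow 0 \mid 1 \mid E_i + E_i \mid E_i - E_i.
\]
This grammar contains no production of the form $S \rightarrow \Ewhile{B}{S_1}$, as required. A routine induction on $|k|$ shows that $L(E_i)$ contains a term evaluating to $k$ for every integer $k$, so $L(\nstart)$ is precisely the set of programs that assign a fixed $n$-tuple of integers to $x_1,\ldots,x_n$. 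Let $I$ be the singleton initial state in which all variables are $0$ (so the input is irrelevant), and let the behavioral specification $\psi(\mathit{out}, \mathit{in})$ assert $p(x_1,\ldots,x_n) = 0$ on the output state. Since $G_{imp}$ already provides $\cdot$ and $==$, the formula $\psi$ is a legitimate Boolean specification per Definition~\ref{def:program_synthesis}.

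By construction, a witness $f^* \in L(\nstart)$ exists iff the tuple of integer constants it assigns is a root of $p$. Therefore $\syu$ is realizable iff $p$ has an integer root, and $\syu$ is unrealizable iff $p$ has none. Since the latter is undecidable, checking unrealizability is undecidable---even for synthesis problems with no while-loop productions.

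The main---though ultimately mild---obstacle is conceptual: one might expect that stripping away while loops restricts the search space enough to make unrealizability decidable. The key observation answering this concern is that \emph{grammar-level} recursion through productions such as $E_i \rightarrow E_i + E_i$ already supplies all the unboundedness the reduction needs; arbitrarily large integer constants can be synthesized without any program-level iteration. All other obligations (verifying $L(E_i)$ covers all integers, and that $\psi$ falls within the syntactic class of allowed specifications) are straightforward.
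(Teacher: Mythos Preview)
Your proof is correct and takes a genuinely different route from the paper. The paper reduces from the halting problem for two-counter machines: given a program $P_C$ with instructions at locations $1,\ldots,n$, it introduces a nonterminal $S_k$ per location, with productions that mimic the instruction at $k$ (e.g., $S_k \rightarrow \Eseq{\Eassign{r}{r+1}}{S_{k+1}}$ for an increment) together with a fallback $\Eskip$ production, and a terminal production $S_{n+1} \rightarrow \Eassign{h}{1}$. The specification is the single equality $h = 1$, and the synthesis problem is realizable iff $P_C$ halts. Thus the paper pushes all the complexity into the \emph{grammar}, encoding control flow via mutual recursion among the $S_k$, while keeping the specification trivial.

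Your reduction from Hilbert's Tenth Problem does the opposite: the grammar is essentially trivial (it just enumerates all integer constants via $E_i \rightarrow 0 \mid 1 \mid E_i + E_i \mid E_i - E_i$), and the arithmetic content lives entirely in the specification $\psi \equiv (p(x_1,\ldots,x_n) = 0)$. This is shorter and more elementary; it also sidesteps the paper's small wrinkle of using $\Eskip$, which is not literally in $G_{imp}$. On the other hand, the paper's construction establishes a slightly sharper statement: undecidability holds even when the behavioral specification is a single atomic equality between a variable and a constant. Your reduction needs $\psi$ to express a general polynomial equation, which is permitted by Definition~\ref{def:program_synthesis} and by the FO-PA assertion language, but is a stronger requirement on the specification side. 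Both arguments are valid proofs of the theorem as stated.
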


The proof of Theorem~\ref{thm:undecidability} relies on translating
an arbitrary program for a two-counter machine into a program-synthesis problem 
that is realizable iff the original program halts.
This translation is performed by taking each instruction of the original 
program and translating it into a production that performs the same 
computation.
The goal of the resultant synthesis problem is to set a special variable $h$ to $1$:
by translating only the final $\mathsf{Halt}$ instruction into a production 
that sets $h$ to $1$, 
the translated synthesis problem is realizable iff the original program halts.
As the halting problem for two-counter machines is undecidable, it follows that 
unrealizability is undecidable as well.


Theorem~\ref{thm:undecidability} 
shows that proving unrealizability of synthesis problems is undecidable, 
even when while loops are out of the question.
On the other hand, when one is limited to synthesis over finite domains, 
proving unrealizability becomes decidable---even when the synthesis problem 
may contain an unbound number of loops.

\begin{theorem}[Decidability of Unrealizability over Finite Domains]
  \label{thm:finite_decidability}
  Determining whether a synthesis problem $\syfin$ is unrealizable, where the grammar $\gfin$
  is valid with respect to $G_{impv}$, is \emph{decidable} if the semantics of
	programs in $\gfin$ is defined over a finite domain.
\end{theorem}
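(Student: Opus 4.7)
The plan is to exploit the fact that over a finite domain, the number of distinct state transformers (viewed as partial functions $\mathsf{State} \rightharpoonup \mathsf{State}$, where partiality accounts for non-termination) is finite, bounded by $(|\mathsf{State}|+1)^{|\mathsf{State}|}$. The same is true for the set of value-returning expressions and Boolean-returning expressions. So for each nonterminal $N$ of $\gfin$, I would define $\Sigma_N$ to be the \emph{set} of semantic objects (state transformers, integer-valued functions, or Boolean-valued functions on states, depending on the type of $N$) denoted by programs in $L(N)$; this set is a finite subset of a finite universe.

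Next I would show that $\{\Sigma_N\}_{N \in \nonterminals}$ is computable as the least fixed point of a monotone operator $F$ induced by the productions of $\gfin$. Concretely, for each production $N_0 \to \alpha(N_1,\ldots,N_k)$, one lifts the denotation of $\alpha$ pointwise: $F(\ldots)_{N_0}$ collects all semantic objects obtainable by applying $\semantics{\alpha}$ to every tuple $(f_1,\ldots,f_k) \in \Sigma_{N_1} \times \cdots \times \Sigma_{N_k}$. Since each $\Sigma_N$ lies in a finite lattice, the Kleene iteration $\emptyset \sqsubseteq F(\emptyset) \sqsubseteq F^2(\emptyset) \sqsubseteq \cdots$ stabilizes after finitely many steps, yielding $\{\Sigma_N\}$ effectively. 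Once $\Sigma_{\mathit{Start}}$ is in hand, unrealizability reduces to checking, for each of the finitely many $f \in \Sigma_{\mathit{Start}}$, whether $\forall r \in I.\ \psi(f(r),r)$ holds; with a finite state space both $I$ and the quantifier are finite, so this is decidable.

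The principal obstacle is handling the $\Ewhile{B}{S}$ production, because the operator for While is not just a finitary combinator applied to $\Sigma_B \times \Sigma_S$ in the obvious pointwise way once we worry about termination. My approach is: for each pair $(g, t) \in \Sigma_B \times \Sigma_S$, compute the state transformer $W_{g,t}$ obtained by iterating $t$ until $g$ evaluates to $\Ef$. Because the domain is finite, the orbit of any state $r$ under $t$ eventually cycles; one can decide in time polynomial in $|\mathsf{State}|$ whether the orbit reaches a state on which $g$ is false (giving the output of $W_{g,t}$) or enters a cycle without ever satisfying $\neg g$ (in which case $W_{g,t}$ is undefined on $r$). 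Since the same body $t$ is chosen at every iteration by construction, this faithfully captures the semantics of a fixed while-program, avoiding precisely the record-and-replay pitfall discussed in \S\ref{SubSe:NoWhileLogic}. Thus $\Sigma_{\Ewhile{B}{S}} = \{W_{g,t} \mid g \in \Sigma_B,\ t \in \Sigma_S\}$ is finite and computable, and the monotone-fixed-point argument above goes through uniformly for all production types, completing the decidability proof.
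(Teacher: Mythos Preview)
Your proposal is correct and follows essentially the same strategy as the paper: compute, for each nonterminal, a finite semantic summary via a monotone fixed-point iteration that terminates because the underlying lattice is finite, then check the specification against the summary for $\nstart$. The paper phrases this as grammar-flow analysis computing maps $T_N$ from input (vector-)states to sets of output (vector-)states, whereas you compute the set $\Sigma_N$ of distinct state transformers realized by $L(N)$; these are two presentations of the same finiteness-plus-Kleene-iteration argument, and your explicit treatment of $\Ewhile{B}{S}$ via cycle detection on the finite orbit of each state matches the paper's observation that loops either terminate or exhibit a repeating pattern within finitely many steps.
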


The proof of Theorem~\ref{thm:finite_decidability} relies on the fact that if the 
domain is finite, 
one can perform grammar-flow analysis~\cite{gfa} to obtain the greatest-fixed point 
of values that a nonterminal may generate.
In this paper, because both expressions and statements are of type 
$\mathsf{State} \rightarrow \mathsf{State}$, 
the values that nonterminals generate are pairs of states, representing input-output relations;
this computation 
is guaranteed to terminate 
because the domain is finite.

Furthermore, if $\syfin$ is unrealizable, then there exists a proof of this fact in 
unrealizability logic.

\section{The Power of Unrealizability Logic}
\label{Se:Examples}

In this section, we give some example proofs that 
highlight the capabilities of unrealizability logic.

\subsection{Dealing with an Infinite Number of Examples}
\label{Se:DealingWithAnInfiniteNumberOfExamples}

As stated in \S\ref{Se:MotivatingExample} and \S\ref{Se:UnrealizabilityLogic}, 
one of the major features of unrealizabilty logic is that it is capable 
of dealing with synthesis problems which require an infinite number of examples 
to prove unrealizability.
We illustrate one such example in this section.


\begin{example}[Proof with Infinitely Many Examples]
  \label{ex:inf_example}
  Consider a synthesis problem $\syite$, with the grammar
  $\gite$ given below:
  $$
  \begin{array}{lcl}
    \nstart & \rightarrow & \Eifthenelse{B}{A}{\nstart} \mid A \\
    B       & \rightarrow & y == N \\ 
    N       & \rightarrow & 0 \mid N + 1 \\ 
    A       & \rightarrow & \Eassign{x}{N}
  \end{array}
  $$

  $\syite$ aims to synthesize a function $f$ which takes as input 
  a state $(x, y)$, and return a state where $x$ is equivalent to $y$.

  $\syite$ is unrealizable, and one needs an infinite number of examples to 
  prove this fact.
  This is because a specification over $n$ examples for $\syite$ will be 
  realizable by a term that contains $n$ If-then-Elses.
  However, a term of finite size in $\gite$ can only assign to $x$ a finite 
  number of constants; thus it is impossible to have a function in $\gite$ 
  that sets $x$ to $y$ in the general case.
  $\syite$ is, in fact, an imperative variant of 
  an example often used to show that there are synthesis problems that cannot be 
  proven unrealizable using only a finite number of examples~\cite{nope}; 
  previous
  approaches~\cite{nope, nay, semgus} thus cannot prove $\syite$ unrealizable.
  We show that a proof tree for $\syite$ can be constructed in unrealizability logic, 
    using FO-PA extended with infinite arrays as the assertion language.

  Consider a set of inputs $(x, y)$ given by the predicate 
  $\forall i \in \mathbb{N}. x_i = -1 \wedge y_i = i$.
  That is, the input is an 
  infinite set of examples where $y$ spans over the 
  positive integers, and $x$ is assigned the fixed value $-1$ 
  (this implies $x_i \neq y_i$ for every $i \in \mathbb{N}$).
  The output specification for this input is given as 
  $\forall i. x_i = i \wedge y_i = i$; the infinite vector-state 
  where $x_i = y_i = i$ for the $i$-th example.
  This infinite set of examples does \emph{not} encompass the entire input state; 
  however, it is sufficient to prove unrealizability.

  To prove unrealizability, first negate the postcondition to obtain the 
  triple that we wish to prove:
  $$
    \utriple{\forall i. x_i = -1 \wedge y_i = i}{\nstart}
    {\exists i. x_i \neq i \vee y_i \neq i}
  $$
  Instead of proving this triple directly, we will prove the following 
  triple (from which we can obtain the target triple via $\mathsf{Weaken}$):
  $$
    \bigutriple
    {\forall i. y_i = i \wedge \begin{array}{l} \text{only a finite no. of } i \\ \text{such that } x_i = y_i \end{array}}
    {\nstart}
    {\forall i. y_i = i \wedge \begin{array}{l} \text{only a finite no. of } i \\ \text{such that } x_i = y_i \end{array}}
  $$
  Let $\aI$ denote the condition `$\forall i. y_i = i \wedge \text{only a finite no. of } i \ \text{such that } x_i = y_i$'.
  To see the implication, observe that: \rone $\forall i. x_i = -1 \wedge y_i = i \implies \aI$ as there are $0$ (a finite number) 
  $i$s for which $x_i = y_i$; 
  \rtwo $\aI \implies \exists i. x_i \neq i \vee y_i \neq i$, because if $x_i = y_i$ for only a finite number of $i$, then 
  there must exist some $i$ for which $x_i \neq i$.

  We will prove that $\utriple{\aI}{\nstart}{\aI}$ holds by introducing this triple as a hypothesis for $\nstart$ via 
  the $\hp$ rule.
  Let $\Gamma_{\aI}$ denote the context containing only the triple $\utriple{\aI}{\nstart}{\aI}$.
  $$
    \infer[\hp]{\vdash \utriple{\aI}{\nstart}{\aI}}
    {
      \Gamma_{\aI} \vdash \utriple{\aI}{A}{\aI} \quad
      \Gamma_{\aI} \vdash \utriple{\aI}{\Eifthenelse{B}{A}{\nstart}}{\aI}
    }
  $$
  Consider first the base case $\utriple{\aI}{A}{\aI}$, where $A$ is the simple assignment $\Eassign{x}{N}$.
  The hypothesis in this case can be proved simply via $\assignrule$ and $\weaken$:
  $$
    \infer[\weaken]{\Gamma_{\aI} \vdash \utriple{\aI}{A}{\aI}}{
      \infer[\weaken]{\Gamma_{\aI} \vdash \utriple{\aI}{A}
      {\exists \vec{x'}. \aI[\vec{x'} / \vec{x}] \wedge \exists k. k \geq 0 \wedge \forall i. x_i = k}}{
      \infer[\assignrule]{\Gamma_{\aI} \vdash \utriple{\aI}{A}
      {\exists \vec{x'}. \aI[\vec{x'} / \vec{x}] \wedge \exists k. k \geq 0 \wedge \vec{e_t} = k \wedge \vec{x} = \vec{e_t}}}{
        \infer[\weaken]{\Gamma_{\aI} \vdash \utriple{\aI}{N}{\aI \wedge \exists k. k \geq 0 \wedge \vec{e_t} = k}}{
        \infer[\hp]{\Gamma_{\aI} \vdash \utriple{\aI}{N}{
          \exists \vec{e_t'}. \aI[\vec{e_t'} / \vec{e_t}] \wedge \exists k. k \geq 0 \wedge \vec{e_t} = k }}
        {\cdots\vspace{2pt}}
    }}}}
  $$
  The ellipsis over the top-level $\hp$ rule indicates that we can prove that $N$ results in a value 
  above $0$ via the $\hp$ rule, through 
  the induction hypothesis 
  $\utriple{\aI}{N}{\exists \vec{e_t'}. \aI[\vec{e_t'} / \vec{e_t}] \wedge \exists k. k \geq 0 \wedge \vec{e_t} = k}$
  (the exact reasoning is omitted).
  The final application of $\weaken$ works as, if all $x_i = k$ for some $k \geq 0$, then there is at most 
  one (a finite number) $i$ for which $x_i = y_i$.

  The inductive case requires an application of the $\siterule$ rule.
  We first write:
{\small
  $$
  \infer[\weaken]{\Gamma_{\aI} \vdash \utriple{\aI}{\Eifthenelse{B}{A}{\nstart}}{\aI}}{
    \infer[\siterule]{\Gamma_{\aI} \vdash \bigutriple{\aI}{\Eifthenelse{B}{A}{\nstart}}
    {
      \exists \vec{v_1}, \vec{v_2}, \vec{v_1'}, \vec{v_2'}.
      \hspace{-1mm}
      \begin{array}{l}
        \aI[\vec{v_1'}[i] / \vec{v}[i] \ \text{where} \ \vec{b_{t_1}}[i] = \Efalse] \wedge \\
        \aI[\vec{v_2'}[i] / \vec{v}[i] \ \text{where} \ \vec{b_{t_2}}[i] = \Etrue]
      \end{array}
      \hspace{-2mm}\wedge (\vec{v_1} = \vec{v_2})
    }}
    {
      \Gamma_{\aI} \vdash \utriple{\aI}{B}{\aI} \quad 
      \Gamma_{\aI} \vdash \utriple{\aI \wedge \vec{v_1} = \vec{v}}{A}{\aI} \quad 
      \Gamma_{\aI} \vdash \utriple{\aI \wedge \vec{v_2} = \vec{v}}{\nstart}{\aI} \quad
    }
  }
  $$
}
  Observe that the first premise $\utriple{\aI}{B}{\aI}$ seems to do nothing.
  This is the result of weakening the postcondition generated by $B$, and effectively `forgetting' 
  the value of the branch condition as done in the following proof tree, where the postcondition 
  of the premise represents the exact postcondition one would have obtained through a precise reasoning of $\eqrule$:
  {\small
  $$
    \infer[\weaken]{\Gamma_{\aI} \vdash \utriple{\aI}{B}{\aI}}{
      \infer[\eqrule]{\Gamma_{\aI} \vdash \bigutriple{\aI}{B}
      {\exists \vec{b_t'}, \vec{v_1}, \vec{v_2}, \vec{v_1'}, \vec{v_2'}. 
        \aI \wedge 
        \begin{array}{l}
        \aI[\vec{v_1'} / \vec{v}] \wedge  \vec{e_{t_1}'} = \vec{y} \: \wedge  \\
        \aI[\vec{v_2'} / \vec{v}] \wedge  \exists k. \vec{e_{t_2}'} = k 
        \end{array} \wedge
        \begin{array}{l}
          (\vec{v} = \vec{v_1} = \vec{v_2}) \: \wedge \\
        \vec{b_t} = (\vec{e_{t_1}'} == \vec{e_{t_2}'})
        \end{array}
      }
      }
    {
      \cdots
    }
  }
  $$
  }
  We take such an approach to take advantage of the fact that the branch condition is actually irrelevant to 
  the proof of unrealizability; this example shows that sometimes triples inside the proof tree 
  need not be precise to prove unrealizability (which can, as shown, greatly simplify predicates).

  The second premise is an instance of $A$; one can prove this premise by using the same proof tree
  as we used to prove $\Gamma_{\aI} \vdash \utriple{\aI}{A}{\aI}$ in the base case, and apply an
  additional $\weaken$ to it.
  $$
    \infer[\weaken]{\Gamma_{\aI} \vdash \utriple{\aI \wedge \vec{v_1} = \vec{v}}{A}{\aI}}
    {
      \infer[]{\Gamma_{\aI} \vdash \utriple{\aI}{A}{\aI}}
      {\cdots}
    }
  $$

  Finally, the third premise can be derived with a simple combination of $\applyhp$ and $\weaken$, where 
  $\applyhp$ applies the \emph{induction hypothesis} that $\utriple{\aI}{\nstart}{\aI}$.
  $$
    \infer[\weaken]{\Gamma_{\aI} \vdash \utriple{\aI \wedge \vec{v_2} = \vec{v}}{\nstart}{\aI}}
    {
      \infer[\applyhp]{\Gamma_{\aI} \vdash \utriple{\aI}{\nstart}{\aI}}
      {}
    }
  $$
  
  Returning to the application of $\siterule$, observe that $\aI$ lacks any occurence of 
  $\vec{v_1}$ and $\vec{v_2}$.
  This is again because our triples for $A$ and $\nstart$ in the premises were not exact; $\aI$ is an 
  \emph{overapproximation} of the set of states that may occur when, e.g., executing $\nstart$ over the 
  input precondition $\aI \wedge \vec{v_2} = \vec{v}$.
  Although the derived postcondition is thus also not exact, it is still a sound overapproximation that is 
  \emph{precise enough} for the proof.
  To see this, 
  observe that mixing examples from two states where 
  `$\forall i. y_i = i \wedge \text{only a finite no. of } i \ \text{such that } x_i = y_i$' holds (the postcondition 
  of the conclusion of $\siterule$)
  still results in a state where `$\forall i. y_i = i \wedge \text{only a finite no. of } i \ \text{such that } x_i = y_i$' holds.
  Thus the final application of $\weaken$ that derives 
  $\utriple{\aI}{\Eifthenelse{B}{A}{\nstart}}{\aI}$ is a valid application.

  By proving that $\Gamma \vdash \utriple{\aI}{A}{\aI}$ and $\Gamma \vdash \utriple{\aI}{\Eifthenelse{B}{A}{\nstart}}{\aI}$, 
  we have proven the induction hypothesis; 
  thus $\utriple{\aI}{\nstart}{\aI}$ holds, which can further be 
  weakened down into the target triple 
  $\utriple{\forall i. x_i = -1 \wedge y_i = i}{\nstart}{\exists i. x_i \neq i \vee y_i \neq i}$.
  Hence $\syite$ is \emph{unrealizable}, and we have proved this fact quite 
  simply via some imprecise reasoning.
\end{example}

In Example~\ref{ex:inf_example}, we used predicates such as 
$\forall i. x_i = i \wedge y_i = i$, or 
`$\text{only a finite no. of } i \ \text{such that } x_i = y_i$'; 
whether or not such predicates are supported 
is dictated by the choice of the assertion language.
As mentioned in \S\ref{Se:UnrealizabilityLogic}, the choice of assertion language is 
parametric in unrealizability logic, as long as it contains the operators 
used in the rules.

\subsection{Loops and Expressing Proof Strategies from Other Frameworks}

In this section, we give an example of how loops are dealt with in 
unrealizability logic, and also show that the reasoning behind other frameworks 
for proving unrealizability (e.g., Nay~\cite{nay} or MESSY~\cite{semgus}) can 
be captured as a \emph{proof strategy} for completing a unrealizability logic proof
tree.

\begin{example}[Proof with Loops]
  \label{ex:easy_loop}
  Consider a synthesis problem $\sysum$ where the goal is to synthesize a function 
  $f$ that takes as input a state $(x, y)$, and assigns to $y$ the sum of all integers 
  between $1$ and $x$.
  Let as assume that $\sysum$ is given the following grammar $\gsum$:
  $$
  \begin{array}{lcl}
    \nstart & \rightarrow & \Ewhile{B}{S} \\
    B       & \rightarrow & E < E \\ 
    S       & \rightarrow & \Eassign{x}{E} \mid \Eassign{y}{E} \mid \Eseq{S}{S} \\ 
    E       & \rightarrow & x \mid y \mid E + E \mid E - E
  \end{array}
  $$
  Then, the problem $\sysum$ is unrealizable because 
  if $x$ and $y$ are both even, then $E$ can only produce even values 
  (note that $E$ does not contain productions such as $E + 1$).
  This fact conflicts with cases where, e.g., $x = 2$, in which case $y$ must be assigned $3$.
  $\sysum$ is introduced as an unrealizable problem in~\citet{semgus}, where 
  the solver MESSY exploits this fact to prove that $\sysum$ is 
  indeed unrealizable when given the single input example $(x, y) = (2, 0)$.
  We show that this argument can be mimicked directly as a proof tree for unrealizability logic.

  Let $\aI$ denote the condition $\smod{x}{0}{2} \wedge \smod{y}{0}{2}$, i.e., that $x$ and $y$ are 
  both even (because we use a single example, we temporarily drop the vector notation).
  We wish to use $\aI$ as an invariant for the loop $\Ewhile{B}{S}$.
  Begin with an application of $\whilerule$ for $\nstart$, 
  where, similar to Example~\ref{ex:inf_example}, 
  the loop condition is irrelevant to the proof and thus can be skipped:
  $$
    \infer[\whilerule]{ \vdash \utriple{\aI}{\Ewhile{B}{S}}{\aI \wedge \vec{b_t} = \Efalse}}
    {
      \vdash \utriple{\aI}{B}{\aI} \quad 
      \vdash \utriple{\aI}{S}{\aI}
    }
  $$
  The first premise can be proved via $\weaken$, as we did in 
  Example~\ref{ex:inf_example}.
  The implication condition of the $\whilerule$ rule has been omitted, as in this case 
  $\aI_B' \equiv \aI_B \equiv \aI$ and thus the implication is trivial.

  We wish to prove that $\vdash \utriple{\aI}{S}{\aI}$; let us introduce 
  $\utriple{\aI}{S}{\aI}$ as a hypothesis.
  We denote the context containing only this triple as $\Gamma_S$.
  We then have the proof obligation:
  $$
    \infer[\hp]{\vdash \utriple{\aI}{S}{\aI}}
    {
      \Gamma_S \vdash \utriple{\aI}{\Eassign{x}{E}}{\aI} \quad 
      \Gamma_S \vdash \utriple{\aI}{\Eassign{y}{E}}{\aI} \quad 
      \Gamma_S \vdash \utriple{\aI}{\Eseq{S}{S}}{\aI}
    }
  $$
  The first two premises can be proved by introducing a hypothesis
  $\utriple{\smod{x}{0}{2} \wedge \smod{y}{0}{2}}{E}{\smod{e_t}{0}{2}}$.
  The third premise can be proved via two applications of $\applyhp$.
  This completes that $\utriple{\aI}{S}{\aI}$, and therefore that 
  $\utriple{\aI}{\Ewhile{B}{S}}{\aI \wedge \vec{b_t} = \Efalse}$.

  Finally, we apply $\mathsf{Weaken}$ to 
  $\utriple{\aI}{\Ewhile{B}{S}}{\aI \wedge \vec{b_t} = \Efalse}$ to
  obtain the triple 
  $\utriple{x = 2 \wedge y = 0}{\nstart}{y \neq 3}$.
  We have thus proved that $\sysum$ is unrealizable, mainly by using an invariant 
  that is valid across the entire set of possible loop bodies.
\end{example}

In \S\ref{Se:MotivatingExample}, specifically Example~\ref{ex:basic_unreal}, we 
highlighted the importance of finding good hypotheses and invariants 
for completing a proof in 
unrealizability logic.
In Example~\ref{ex:easy_loop} above,
knowing the invariant $\smod{x}{0}{2} \wedge \smod{y}{0}{2}$ resulted in a very 
simple proof tree.
The algorithms implemented by external solvers (such as Spacer~\cite{spacer} for MESSY, 
or the semilinear-set approach from Nay~\cite{nay})
may be thought of as \emph{proof strategies} 
for finding these hypotheses or invariants: 
they remain parametric of the logic itself, 
while providing critical information to complete the proof trees.
Although not the focus of this paper, we hope that this view will 
allow future work in automating unrealizability logic to draw from a significant 
body of work in grammar-flow analysis~\cite{gfa} and other program/constraint-solving techniques.

\subsection{Working with Symbolic Examples and Auxiliary Variables}

In \S\ref{SubSe:InputOutputPairs}, we saw that auxiliary variables are 
insufficient to model unrealizability.
However, auxiliary variables can still be used 
to set the input examples of an unrealizability triple to be \emph{symbolic}.
A triple derived in unrealizability logic using symbolic examples 
indicates that the said triple must hold for \emph{any instantiation} of 
the symbolic examples.
In unrealizability logic, this can be used to avoid having to search 
for hard-to-find examples, instead completing a symbolic proof tree 
and checking at the end whether there are examples that can be 
used to show unrealizability.

\begin{example}[Proof with Symbolic Variables]
  \label{ex:auxiliary}
  Recall the synthesis problem $\syite$ from Example~\ref{ex:inf_example}, where the goal 
  is to synthesize a function that assigns to $x$ the (initial) value of $y$.
  We will consider a variant of $\syite$, named $\syconst$, where the goal is the same 
  but the supplied grammar 
  $\gconst$ is different:
  $$
  \begin{array}{lcl}
    \nstart & \rightarrow & \Eifthenelse{B}{A}{\nstart} \mid A \\
    B       & \rightarrow & y == E \\ 
    E       & \rightarrow & 0 \mid E + 1 \\
    N       & \rightarrow & 1 \mid 2 \mid \cdots \mid 999 \\ 
    A       & \rightarrow & \Eassign{x}{N}
  \end{array}
  $$
  In $\gconst$, the nonterminal $N$ may generate only a fixed set of integers from 
  $1$ to $999$ (as opposed to $\ginf$, which could generate any positive integer).
  $\syconst$ is unrealizable, and this time only requires one example to prove so: 
  for example, $(x, y) = (-1, 1000)$.
  However, this specific example may be difficult to find, because it uses large 
  constants.
  We show that one can avoid having to explicitly find this example 
  by completing a proof tree in unrealizability logic using a single \emph{symbolic example}---which 
  may then be instantiated (perhaps using a constraint solver)---to 
  find a concrete example for which $\syconst$ is unrealizable.

  Our goal this time is to prove the following triple (we again drop the vector-subscripts here 
  as we only have one example):
  $$
    \utriple{x = -1 \wedge y = \yaux}{\nstart}{x < 1000 \wedge y = \yaux}
  $$
  This triple states that: starting from a \emph{single} example $(x, y) = (-1, \yaux)$, we will 
  only be able to reach states in which $x < 1000$.
  Note that the given single example is made symbolic 
  through the use of the auxiliary variable $\yaux$.
  This time, we will use the following hypothesis for $\nstart$:
  \[
    \utriple
    {x < 1000 \wedge y = \yaux}
    {\nstart}
    {x < 1000 \wedge y = \yaux}
  \]
  Denote the triple above as $\aI$.
  In a similar process to the one used in Example~\ref{ex:inf_example}, 
  one can see that the hypothesis holds for the base case (assignment) 
  (we omit the application of the $\hp$ rule):
  $$
    \infer[\weaken]{\Gamma_{\aI} \vdash \utriple{\aI}{A}{\aI}}{
      \infer[\weaken]{\Gamma_{\aI} \vdash \utriple{\aI}{A}
      {\exists x'. \aI[x' / x] \wedge x < 1000}}{
      \infer[\assignrule]{\Gamma_{\aI} \vdash \utriple{\aI}{A}
      {\exists x'. \aI[x' / x] \wedge \exists k. k < 1000 \wedge e_t = k \wedge x = e_t}}{
        \infer[\weaken]
        {\Gamma_{\aI} \vdash \utriple{\aI}{N}{\aI \wedge \exists k.  k < 1000 \wedge e_t = k}}{
        \infer[\hp]{\Gamma_{\aI} \vdash \utriple{\aI}{N}{
          \exists e_t'. \aI[e_t' / e_t] \wedge \exists k. 0 < k < 1000 \wedge e_t = k }}
        {\cdots}
    }}}}
  $$
  And also for the inductive case, If-Then-Else (where we omit the reasoning for the premises):
{\small
  $$
  \infer[\weaken]{\Gamma_{\aI} \vdash \utriple{\aI}{\Eifthenelse{B}{A}{\nstart}}{\aI}}{
    \infer[\siterule]{\Gamma_{\aI} \vdash \bigutriple{\aI}{\Eifthenelse{B}{A}{\nstart}}
    {
      \exists v_1, v_2, v_1', v_2'.
      \begin{array}{l}
        \aI[v_1'[i] / v[i] \ \text{where} \ b_{t_1}[i] = \Efalse] \wedge \\
        \aI[v_2'[i] / v[i] \ \text{where} \ b_{t_2}[i] = \Etrue]
      \end{array}
      \wedge (v_1 = v_2)
    }}
    {
      \Gamma_{\aI} \vdash \utriple{\aI}{B}{\aI} \quad 
      \Gamma_{\aI} \vdash \utriple{\aI \wedge v_1 = v}{A}{\aI} \quad 
      \Gamma_{\aI} \vdash \utriple{\aI \wedge v_2 = v}{\nstart}{\aI} \quad
    }
  }
  $$
}
  Thus $
    \utriple{x < 1000 \wedge y = \yaux}
            {\nstart}
            {x < 1000 \wedge y = \yaux}
  $ is a valid triple.
  
  At this point, observe that 
  $x < 1000 \wedge y = \yaux$ does not immediately imply 
  the negation of the specification, i.e., $x \neq \yaux \vee y \neq \yaux$.
  However, when setting $\yaux = 1000$, it does become clear that 
  $x < 1000 \wedge y = \yaux \implies x \neq 1000 \vee y \neq 1000$.

  To understand this more formally, recall that as shown in 
  Example~\ref{ex:univ_problem}, the quantification 
  for auxiliary variables happens \emph{outside} the triple:
  $$
    \forall \yaux. 
    \utriple{x < 1000 \wedge y = \yaux}
            {\nstart}
            {x < 1000 \wedge y = \yaux}
  $$
  This, in turn, means that the triple holds for all configurations of $\yaux$, 
  \emph{each of which constitutes a different example}.
  For synthesis, the program must work for \emph{all} examples: 
  thus it is sufficient that there \emph{exists} an 
  example for which the synthesis problem is unrealizable, i.e., the derived postcondition implies 
  the negation of the specification.
  Thus to check unrealizability in this scenario where we have used an auxiliary variable, 
  we can check whether the 
  following formula is valid:
  $$
    \exists \yaux. 
    (x < 1000 \wedge y = \yaux \implies x \neq 1000 \vee y \neq 1000)
  $$
  And this formula is certainly valid, as witnessed by $\yaux = 1000$.
  Thus $\syconst$ is unrealizable.

\end{example}

In general, when using a specification with symbolic examples denoted using the variables 
$\vec{\vaux}$, one can check whether $\exists \vec{\vaux}. \: \aQ \implies \lnot \mathcal{O}$ 
is a valid formula, where $\aQ$ is the derived postcondition and 
$\mathcal{O}$ denotes the desired postcondition of the synthesis problem.
As shown in Example~\ref{ex:auxiliary}, the existential quantifier over $\vec{\vaux}$ asks 
whether there \emph{exists} a concrete instantiation of the symbolic examples
such that $\aQ \implies \lnot \mathcal{O}$.
Although this approach will not be able to prove unrealizability if the supplied number 
of symbolic examples is \emph{less} than the number of examples required to show unrealizability 
(as in Example~\ref{ex:univ_problem}),
it will succeed in proving unrealizability if the supplied number of examples is sufficient.
This can be very useful if the examples required to prove unrealizability are difficult to 
find, as in Example~\ref{ex:auxiliary}.

Note that it is still \emph{sound} if one decides to drop the existential quantifier and 
simply ask whether $\aQ \implies \lnot \mathcal{O}$; adding the existential simply makes 
the final query more precise.


\section{Related and Future Work}
\label{Se:RelatedWork}

\mypar{Unrealizability}
There has been limited work that focuses on proving the unrealizability of synthesis problems.
\nay~\cite{nay} and \nope~\cite{nope} can prove unrealizability for syntax-guided synthesis (\sygus) 
problems where the input grammar only contains expressions.
Both \nay and \nope reduce an unrealizability problem to a 
program-verification problem, and present techniques for solving the reduced problem.
\citet{KampP21} use some of the ideas presented in \nope to design specialized unrealizability-checking 
algorithms for problems involving bit-vector arithmetic.
When a grammar is not given as part of the input, CVC4~\cite{cvc4} is also capable of detecting unrealizability.
These works only focus on expression-synthesis problems. 
\messy~\cite{semgus} proposes a general algorithm for proving whether a given \semgus problem is unrealizable. 
\semgus is a general framework for specifying synthesis problems, 
  which also allows one to define synthesis problems involving imperative constructs. 
\messy is currently the only tool that can prove unrealizability for problems involving imperative programs.
\citet{unrealwitness} have a technique for proving unrealizability, which they use as part of a 
synthesis technique; however, their technique is limited to a very 
specific class of functional programs and specifications.

While previous approaches all provide ways to solve variants of unrealizability problems, 
these approaches are embedded within a specific system that employs a fixed solver or algorithm.
With the exception of \nay and its use of grammar-flow analysis, these
tools do not produce a proof artifact that can be separately verified.
For example, in \messy, one is at the mercy of an external constraint solver, 
which makes it difficult for researchers to develop new solvers
tailored towards unrealizability, or even understand why certain problems can be proved 
unrealizable while others cannot.
In contrast, unrealizability logic provides a general, logical system for (both human and machine-based) reasoning about 
unrealizability. 
While \nay can produce proof artifacts, it is limited to \sygus problems over expressions;
similar to what we showed in Example~\ref{ex:easy_loop}, the GFA algorithm of \nay may be 
understood as a proof strategy to discover hypotheses over nonterminals 
(where in this case, the assertion language is over semilinear sets).

\mypar{Hyperproperties}
The way we synchronize between multiple examples in unrealizability logic, 
using vector-states, is similar to
the concept of \emph{hyperproperties}, which are, 
in essence, sets of properties~\cite{hyperproperties}.
Hyperproperties are used to model, 
for example, $k$-safety properties, which are properties that should 
hold over $k$ separate runs of a program~\cite{cartesian} 
(for example, transitivity is a 3-safety property).

There is a subtle but fundamental difference between properties in unrealizability logic 
and standard hyperproperties: 
properties in unrealizability logic are `properties over vector-states'; 
that is, `properties over (sets of states)', whereas 
standard hyperproperties are `sets of (properties over states)'.
The difference between the two is highlighted when considering 
nondeterminism:
when verifying $k$-safety properties for a nondeterministic program, 
one will want to let different states execute on different paths.
In contrast, unrealizability logic introduces vector-states to synchronize 
the paths---corresponding to one specific program within the set $S$---that 
each constituent state in a vector-state follows.
For example, given a grammar
$E \rightarrow \Eassign{x}{x + 1} \mid \Eassign{x}{x + 2}$, the unrealizability triple
$\utriple{x_1 = 0 \wedge x_2 = 10}{E}{(x_1 = 1 \wedge x_2 = 11) \vee (x_1 = 2 \wedge x_2 = 12)}$ 
is derivable.
However, a standard hyperproperty approach would likely wish to treat the above triple as 
invalid (taking a nondeterministic-program interpretation of the different productions of $E$).


\mypar{Hoare Logic for Recursive Procedures}
The rules of unrealizability logic have 
much in common with the rules of Hoare logic extended towards 
recursive procedures.
However, as discussed in \S\ref{SubSe:WhyNotHoare}, one requires many 
features to fully support the range of features in a synthesis problem; 
for example, nondeterminism, mutual recursion, both local and global variables, 
and infinite data structures.
Combinations of some of these features have been studied previously, such as 
local variables and mutual recursion~\cite{varrecursion}, 
or nondeterminism and recursion~\cite{nipkow}.
There is a vast amount of work on variants of Hoare logic; 
\citet{50years} provides a survey of how the original Hoare logic~\cite{hoare} 
has evolved throughout the years.

Despite this body of work, we are unaware of a study of a system 
that contains \emph{all} of the features listed above, and proves
soundness, completeness, and decidability results as we have.
Also as discussed in \S\ref{SubSe:WhyNotHoare}, even though one does have 
an extended Hoare logic, such a logic would hide the fact that we are 
dealing with synthesis problems and trying to prove unrealizability---whereas 
unrealizability logic takes advantage of this fact through rules like 
$\grmdisj$.

\mypar{Supporting Nondeterministic Statements}
As previously discussed in this section and \S\ref{SubSe:WhyNotHoare}, 
there is a similarity between program-synthesis problems and 
nondeterministic, recursive procedures that has been exploited in previous 
approaches to proving unrealizability~\cite{nope}.
A natural question that this similarity leads to is: 
can nondeterministic statements
be supported in unrealizability logic as well?

In general, nondeterminism in program synthesis varies according to 
whether one takes an \emph{angelic} interpretation, 
where the specification is considered met if there exists a nondeterministic 
execution of the synthesized program that satisfies the specification, 
or a \emph{demonic} interpretation, where all executions of the synthesized program 
must meet the specification.

Supporting nondeterministic statements in unrealizability logic is simpler for 
angelic nondeterminism, in which case one can merely add a rule that generates the 
set of all possible vector-states generated by the nondeterministic statement.
For example, given a statement $\Eassign{x}{\mathsf{nondet}(V)}$, which nondeterministically 
assigns a value from the set $V$ to $x$, a (simplified) rule for nondeterminism might be 
(where $|\vec{x}|$ denotes the length of the vector-state $\vec{x}$):
$$
  \infer[\mathsf{Nondet}]{\Gamma \vdash \utriple{\aP}{\Eassign{x}{\mathsf{nondet}(V)}}
    {\exists \vec{x'}. \aP[\vec{x'} / \vec{x}] \wedge \vec{x} \in V^{|\vec{x}|}}}
  {}
$$
The postcondition of the $\mathsf{Nondet}$ rule is an overapproximation of all states that may 
be generated by $\mathsf{nondet}$.
Thus, using $\mathsf{Nondet}$ in a proof tree in tandem with other unrealizability-logic rules
will also derive an overapproximation of the set of reachable states; by showing that this set 
does not intersect with the desired specification, 
one can guarantee that the synthesis problem is unrealizable under the angelic interpretation.

However, $\mathsf{Nondet}$ is incomplete when used to prove unrealizability under a demonic 
interpretation: the demonic interpretation only requires that there \emph{exists} a nondeterministic 
choice that fails to meet the specification, 
but a proof generated using $\mathsf{Nondet}$ will report unrealizability only when 
\emph{all} possible nondeterministic choices fail to meet the specification.
To prove unrealizability in the demonic sense, one would require 
a mechanism for reasoning about the 
set of produced vector-states 
simultaneously---for example, another level of vectors.

\mypar{Unrealizability and Program Synthesis}
As briefly discussed in \S\ref{Se:Introduction}, one of the ultimate goals of 
studying unrealizability is to \emph{prune the search space} of a synthesis problem, 
by showing that certain subsets of the search space do not contain a desired solution.
Although this work is the first to distill the fundamental concepts behind unrealizability 
into a logical system, there exist synthesizers that have already 
utilized the concept of pruning unrealizable parts of the search space to some extent: 
for example, Neo~\cite{conflict} discovers unrealizable subsets of the search space 
by analyzing conflicts, and avoids traversing these subsets during the search procedure.

We hope that the logical characterization of unrealizability provided in this paper 
will lay the foundation for future attempts to utilize unrealizability towards 
synthesizing programs.

%

\section{Conclusion}
\label{Se:Conclusion}

We presented \emph{unrealizability logic}, the first proof system for 
overapproximating the execution of an infinite set of programs.
Unrealizability logic is both sound and relatively complete; 
it is also the first approach that 
allows one to prove unrealizability for 
synthesis problems that require infinitely many inputs to be proved unrealizable.
We believe unrealizability  logic will prove to be essential in further developments
having to do with unrealizability.

A natural question that follows from this paper is the design
of a \emph{realizability logic}: 
``Can a similar logic be constructed for program synthesis, i.e.,
for proving \emph{realizability}?''
Because program synthesis requires a guarantee that a certain (vector-)state is 
\emph{reachable}, one must
devise suitable
principles of \emph{underapproximation}
(like those discussed in  reverse-Hoare (aka incorrectness) logic~\cite{reverse,incorrectness})
instead of overapproximation as used in this paper.
We believe the results presented in this paper will also be useful in designing a \textit{realizability logic}.




\begin{acks}
Supported, in part,
by a gift from Rajiv and Ritu Batra;
by ONR under grant N00014-17-1-2889;
by NSF under grants CCF-\{1750965, 1763871, 1918211, 2023222, 2211968, 2212558\}; 
by a Facebook Research Faculty Fellowship, 
by a Microsoft Research Faculty Fellowship, 
and a grant from the Korea Foundation of Advanced Studies.
Any opinions, findings, and conclusions or recommendations
expressed in this publication are those of the authors,
and do not necessarily reflect the views of the sponsoring
entities.
\end{acks}

\bibliography{reference}

 \newpage
 \appendix

 \section{Proofs}
\label{App:Proofs}

We will start by proving lemmas related to using unrealizability logic 
(Theorem~\ref{thm:unreal_hoare}, Lemma~\ref{lem:expression_invariant}).

\begin{reptheorem}{thm:unreal_hoare}
  The \emph{unrealizability triple} $\utriple{P}{S}{Q}$ for a precondition $P$,
  postcondition $Q$, and set of programs $S$, 
  holds iff for each program $\stmt \in S$,
  the Hoare triple $\triple{P}{\stmt}{Q}$ holds.
\end{reptheorem}

\begin{proof}
  Note here that $P, Q$ are not predicates over vector-states, but ordinary predicates.
  For an unrealizability triple, this may be interpreted as vector-states of length 1; the
  vector-state semantics are identical to usual semantics in this case.
  The proof then follows directly from the definitions.

  $\Rightarrow$: By definition of validity of $\utriple{P}{S}{Q}$,
  $\semantics{S}(P) = \bigcup_{\stmt \in S} \semantics{\stmt}(P) \subseteq Q$.
  Clearly $\semantics{\stmt}(P) \subseteq \bigcup_{\stmt \in S} \semantics{\stmt}(P)$ for any $\stmt \in S$,
  thus by definition of $\semantics{\stmt}(P)$, $\semantics{\stmt}(p) \in Q$ for all $\stmt \in S$ and $p \in P$.
  Thus $\triple{P}{\stmt}{Q}$ holds.

  $\Leftarrow$: $\forall \stmt \in S. \: \triple{P}{\stmt}{Q}$ implies that
  $\forall \stmt \in S. (\forall p \in P. \semantics{\stmt}(p) \in Q)$.
  Thus $\forall \stmt \in S. \semantics{\stmt}(P) \subseteq Q$, and then $\semantics{S}(P) \subseteq Q$.
  Thus $\utriple{P}{S}{Q}$ holds.
\end{proof}

\begin{replemma}{lem:expression_invariant}
  Given an expression nonterminal $E$,
  if an unrealizability triple $\Gamma \vdash \utriple{\aP}{E}{\aQ}$ is derivable using the rules of unrealizability logic
  under some context $\Gamma$,
  then for every $\exp\in L(E)$ and for every $\sigma \in \aP$,
  the formula $\aQ[\semantics{\exp}(\sigma)/\vec{e_t}]$ is true
  assuming all triples in $\Gamma$ are true
  (where in this case, $\semantics{\exp}(\sigma)$ refers to the standard semantics that evaluates
  $\exp$ to a value).
\end{replemma}
\begin{proof}
  By induction on the length of the proof tree, and case analysis on the final rule used to derive $\utriple{\aP}{E}{\aQ}$.
  For example, for proof trees of length $1$ where the final applied rule is $\mathsf{Zero}$, the rule holds as the postcondition
  assigns $\vec{0}$ to $\vec{e_t}$.
  For longer proof trees, for example where the final rule applied is $\plusrule$, the induction hypothesis holds for the premises;
  then, examining the postcondition of the conclusion shows that the
  induction hypothesis still holds (as $\vec{e_t} = \vec{e_{t1}} + \vec{e_t'}$.
  (For constituent state that are not executed, $\vec{e_t}$ is unbound and thus $\aQ[\semantics{e} / \vec{e_t}]$ still holds).
\end{proof}

We will now move to proving soundness (Theorem~\ref{thm:soundness}).

\begin{reptheorem}{thm:soundness}[Soundness]
  Given a regular tree grammar $G$ and a set of programs $N$ 
  (which is either the language of a nonterminal in $G$ or the set of programs generated by the 
   RHS of a production in $G$), 
  the following is true: $\vdash \utriple {\aP}{N}{\aQ} \implies \models \utriple{\aP}{N}{\aQ}$.
\end{reptheorem}

\begin{proof}
  Soundness is proved by performing structural induction on each inference rule in unrealizability 
  logic: we will show soundness by showing that the conclusion of each rule is sound assuming the 
  premises are sound.

  The soundness proof will also entail that some of the rules are \emph{precise}: that is, 
  if $\utriple{\aP}{N}{\aQ}$ is the conclusion of an inference rule, that $\semantics{\aP}(N) = \aQ$.
  Although preciseness of the rules is not required in the proof of soundness, we will refer 
  these parts of the soundness proof when proving completeness later.

  We proceed by case analysis, on expression rules first.
  Throughout the proof, we will use $\pi$ to denote states from the precondition $\aP$, and
  $\sigma$ to denote states from the postcondition $\aQ$.

  $\zero, \one, \varrule, \truerule, \falserule$: These serve as the base cases for the 
  entire set of rules.
  We take $\varrule$ as an example.

	For $\varrule$, $N$ is the singleton set $\set{x}$. 
  Recall that our semantics for expressions are not to simply compute a set of values, but to 
  assign the computed values to a reserved auxiliary (vector-)variable $\vec{e_t}$. 
  Then by definition of the vectorized semantics
 	and the extended expression semantics,
  \begin{align*}
    \semantics{N}(\aP) & = \semantics{x}(\aP) \\
                       & = \bigcup_{\pi \in \aP} \semantics{x}(\pi) \\
                       & = \bigcup_{\pi \in \aP} \lrangle{\semantics{x}(\state_1), \cdots} 
                       \tag{$\state_1$ is a single example-state in $\pi$}\\
                       & = \bigcup_{\pi \in \aP} \lrangle{\state_1[e_{t_1} \mapsto x_1], \cdots }\\
                       & = \bigcup_{\pi \in \aP} {\pi[\vec{e_t} \mapsto \vec{x}]}\\
                       & = \exists \vec{e_t'}. \aP[\vec{e_t'} / \vec{e_t}] \wedge \vec{e_t} = \vec{x}
  \end{align*}
  Since $\semantics{N}(\aP) \in \exists \vec{e_t'}. \aP[\vec{e_t'} / \vec{e_t}] \wedge \vec{e_t} = \vec{x}$, 
  the postcondition of the conclusion, this rule is sound.
  Also note that the sequence is an equality; the rule is also \emph{precise}.
  $\zero, \one, \truerule,$ and $\falserule$ can be proved sound and precise in a similar manner.

  $\notrule$: Assuming the premise $\utriple{\aP}{B}{\aQ}$ is sound (via the induction hypothesis), 
  $\notrule$ can be proved sound through the following equation:
  \begin{align*}
    \semantics{N}(\aP) & = \semantics{!B}(\aP) \\
                       & = \bigcup_{\bexp \in B} \bigcup_{\pi \in \aP} \semantics{! \bexp}(\pi) 
 													 \tag{$!\bexp \in !B \iff \bexp \in B$}\\
                       & = \bigcup_{\bexp \in B} \bigcup_{\pi \in \aP} \lrangle{\semantics{!\bexp}(\state_1), \cdots } \\
                       & = \bigcup_{\bexp \in B} \bigcup_{\pi \in \aP} 
                           \lrangle{\semantics{b}(\state_1)[b_{t_1} \mapsto ! \semantics{\bexp}(\state_1)[b_{t_1}]], \cdots } \\
                       & \subseteq \bigcup_{\sigma \in \aQ} \lrangle{\sigma[\vec{b_t} \mapsto ! \sigma[\vec{b_t}]]}
                           \tag{$\bigcup_{\bexp \in B} \bigcup_{\pi \in \aP} \semantics{\bexp}(\pi) \subseteq \aQ$, \text{ by I.H.}}\\
                       & = \exists \vec{b_t'}. \aQ[\vec{b_t'} / \vec{b_t}] \wedge \vec{b_t} = \lnot \vec{b_t'}
  \end{align*}
  $\mathsf{Bin}$, $\mathsf{Comp}$, $\mathsf{And}$: These are rules for binary operators; as discussed in 
  \S\ref{Se:UnrealizabilityLogic}, these rules require extra synchronization between the postconditions of the 
  premises.

  We take $\plusrule$ as an example, assuming that the two premises are sound via the induction hypothesis.
  We will first compute the set of states that may arise from $E_1 + E_2$ (without committing the update to 
  $\vec{e_t}$).
  First observe that:
  \begin{align*}
    \semantics{N}(\aP) & = \semantics{E_1 + E_2}(\aP) \\
                       & = \bigcup_{e_1 \in E_1, e_2 \in E_2} \bigcup_{\pi \in \aP} \semantics{e_1 + e_2}(\pi) \\
                       & = \bigcup_{e_1 \in E_1, e_2 \in E_2} \bigcup_{\pi \in \aP} \semantics{e_1}(\pi) + \semantics{e_2}(\pi) \\
  \end{align*}


  Observe that it is impossible to say that simply 
  \begin{align*}
    \bigcup_{e_1 \in E_1, e_2 \in E_2} \bigcup_{\pi \in \aP} \semantics{e_1}(\pi) + \semantics{e_2}(\pi) & = 
    \bigcup_{e_1 \in E_1} \bigcup_{\pi \in \aP} \semantics{e_1}(\pi) + \bigcup_{e_2 \in E_2} \bigcup_{\pi \in \aP} \semantics{e_2}(\pi)
  \end{align*}
  Because as noted multiple times, the equation on the RHS fails to capture that the same $\pi$ enters $e_1$ 
  and $e_2$.

  To remedy this fact, we first introduce the ghost state discussed in \S\ref{Se:MotivatingExample}.
  Given two (vector-)states over disjoint vocabularies $\pi$ and $\sigma$, define $\ext{\pi}{\sigma}$ to be 
  the extension of $\pi$ and $\sigma$, i.e., a state where if a variable $v \in \pi$ or $v \in \sigma$, then 
  $v \in \ext{\pi}{\sigma}$ and $\ext{\pi}{\sigma}(v) = \pi(v)$ (or $\sigma(v)$).
  Conversely, we define two kinds of projections: 
  the exclusive projection of $\pi$ on a vocabulary $V(\sigma)$, denoted as $\proj{\pi}{V(\sigma)}$, 
  as the state obtained by \emph{excluding all} variables in $V(\sigma)$ from $\pi$.
  We also define an inclusive projection of $\pi$ on a vocabulary $V(\sigma)$, denoted as 
  $\proji{\pi}{V(\sigma)}$, as the state obtained by 
  \emph{including only} variables in $V(\sigma)$ from $\pi$.

  Now observe that:
  {\small
  \begin{align*}
    \bigcup_{e_1 \in E_1, e_2 \in E_2} \bigcup_{\pi \in \aP} \semantics{e_1}(\pi) + \semantics{e_2}(\pi)
    = \bigcup_{\tiny \begin{array}{c} e_1 \in E_1 \\ e_2 \in E_2 \end{array}} \!\!\!\!
    \bigcup_{\tiny \begin{array}{c} \pi \in \aP \\ \pi_1 = \pi[\vec{v_1} / \vec{v}] \\ \pi_2 = \pi[\vec{v_2} / \vec{v}]\end{array}}
      \!\!\!\!
      \proj{\semantics{e_1}(\ext{\pi}{\pi_1}}{\vec{v_1}} + \proj{{e_2}(\ext{\pi}{\pi_2}}{\vec{v_2}}
  \end{align*}
  }
  We slightly abuse notation and write $\vec{v_1}$ to indicate $V(\sigma_1)$; same goes for 
  $\vec{v_2}$.

  We now wish to pull the addition outside of the set union.
  Define an extended plus operator, $\oplus$, that works on sets of extended vector-states as follows: 
  $\aP_1 \oplus \aP_2 = \aQ$ iff for every $\ext{\sigma}{\sigma'} \in \aQ$, there exists 
  $\ext{\pi_1}{\pi_1'} \in \aP_1$ and $\ext{\pi_2}{\pi_2'} \in \aP_2$ such that $\pi_1 + \pi_2 = \sigma$ (according to 
  the extended assignment semantics) and 
  $\proji{\ext{\pi_1}{\pi_1'}}{\vec{v_1}} = 
  \proji{\ext{\pi_2}{\pi_2'}}{\vec{v_2}}[\vec{v_1} / \vec{v_2}] = \sigma'[\vec{v_1} / V(\sigma)]$.
  Intuitively speaking, $\oplus$  wishes to add vector-states 
  where the extended part---$\pi_1'$ and $\pi_2'$, i.e., the `ghost state'---is identical.
  However, observe that $\vec{v_1}$ and $\vec{v_2}$ are fixed sets of variables,
  and have, in general, 
  no relation at all with the vocabulary 
  of $\pi_1, \pi_1', \pi_2, $ and $\pi_2'$, which are arbitrary states 
  (one may treat $\vec{v_1}$ and $\vec{v_2}$ as being baked in to the definition of $\oplus$).
  Thus checking the equivalence of the `ghost part' will only take effect when $\pi_1'$ and $\pi_2'$ are defined over the 
  vocabulary $\vec{v_1}$ and $\vec{v_2}$; otherwise, $\oplus$ will take a imprecise cartesian-product style addition.\footnote{
    It is tempting to check simply that $\pi_1 = \pi_2$, but this will fail if the premise triples are 
    imprecise and decide to forget the extended part of the state, as we have done in Example~\ref{ex:inf_example}.
  }

  Now observe that:
   {\small
  \begin{align*}
    & \bigcup_{\tiny \begin{array}{c} e_1 \in E_1 \\ e_2 \in E_2 \end{array}} \!\!\!\!
    \bigcup_{\tiny \begin{array}{c} \pi \in \aP \\ \pi_1 = \pi[\vec{v_1} / \vec{v}] \\ \pi_2 = \pi[\vec{v_2} / \vec{v}]\end{array}}
      \!\!\!\!
      \proj{\semantics{e_1}(\ext{\pi}{\pi_1}}{\vec{v_1}}) + \proj{\semantics{e_2}(\ext{\pi}{\pi_2}}{\vec{v_2}}) \\
    = \ \ & \proj{\bigcup_{e_1 \in E_1} \!\!\!\!
    \bigcup_{\tiny \begin{array}{c} \pi \in \aP \\ \pi_1 = \pi[\vec{v_1} / \vec{v}] \end{array}} \!\!\!\!
      \semantics{e_1}(\ext{\pi}{\pi_1})
      \oplus
    \bigcup_{e_2 \in E_2} \!\!\!\!
    \bigcup_{\tiny \begin{array}{c} \pi \in \aP \\ \pi_2 = \pi[\vec{v_2} / \vec{v}] \end{array}} \!\!\!\!
      \semantics{e_2}(\ext{\pi}{\pi_2})}{\vec{v_1} \cup \vec{v_2}}
  \end{align*}
  }
  The equality holds because at this point, $\pi_1$ and $\pi_2$ are actually defined over the 
  vocabularies $\vec{v_1}$ and $\vec{v_2}$.

  At this point, we may finally invoke the induction hypothesis: observe that the union over possible precondition 
  states $\pi \in \aP, \pi_1 = \pi[\vec{v_1} / \vec{v}]$ is identical to the predicate-form precondition of the premise 
  $\aP \wedge \aP[\vec{v_1} / \vec{v}] \wedge (\vec{v_1} = \vec{v})$.
  We consider a state $\sigma' \in \aQ$ is guaranteed to be an extended state of form 
  $\ext{\sigma}{\sigma_1}$ (even if $\aQ$ does not refer to variables such as $\vec{v_1}$, 
  one may still treat the extended part as being unbound).
  This yields:
    {\small
  \begin{align*}
  & \proj{\bigcup_{e_1 \in E_1} \!\!\!\!
    \bigcup_{\tiny \begin{array}{c} \pi \in \aP \\ \pi_1 = \pi[\vec{v_1} / \vec{v}] \end{array}} \!\!\!\!
      \semantics{e_1}(\ext{\pi}{\pi_1})
      \oplus
    \bigcup_{e_2 \in E_2} \!\!\!\!
    \bigcup_{\tiny \begin{array}{c} \pi \in \aP \\ \pi_2 = \pi[\vec{v_2} / \vec{v}] \end{array}} \!\!\!\!
      \semantics{e_2}(\ext{\pi}{\pi_2})}{\vec{v_1} \cup \vec{v_2}} \\ 
    \subseteq \ \ & 
     \proj{
      \bigcup_{\ext{\sigma_1}{\sigma_1'} \in \aQ_1} \ext{\sigma_1}{\sigma_1'} \oplus 
      \bigcup_{\ext{\sigma_2}{\sigma_2'} \in \aQ_2} \ext{\sigma_2}{\sigma_2'}
    }{\vec{v_1} \cup \vec{v_2}}
  \end{align*}
  }
  We now wish to move the $\oplus$ back inside the union operator as ordinary $+$.
  Recall that $\oplus$ has the variables $\vec{v_1}$ and $\vec{v_2}$ baked into the definition;
  thus we have:
   \begin{align*}
     & \proj{
      \bigcup_{\ext{\sigma_1}{\sigma_1'} \in \aQ_1} \ext{\sigma_1}{\sigma_1'} \oplus 
      \bigcup_{\ext{\sigma_2}{\sigma_2'} \in \aQ_2} \ext{\sigma_2}{\sigma_2'}
    }{\vec{v_1} \cup \vec{v_2}} \\
     = \ \ & \proj{
      \bigcup_{\tiny
        \begin{array}{c}
          \ext{\sigma_1}{\sigma_1'} \in \aQ_1\\
          \ext{\sigma_2}{\sigma_2'} \in \aQ_2\\
          \proji{\ext{\sigma_1}{\sigma_1'}}{\vec{v_1}} = \proji{\ext{\sigma_2}{\sigma_2'}}{\vec{v_2}}[\vec{v_1} / \vec{v_2}] 
        \end{array}
      }\!\!\!\!
      \ext{\sigma_1}{\sigma_1'}+ \ext{\sigma_2}{\sigma_2'}
    }{\vec{v_1} \cup \vec{v_2}}
  \end{align*}
  Let us simply write 
  $\proji{\ext{\sigma_1}{\sigma_1'}}{\vec{v_1}} = \proji{\ext{\sigma_2}{\sigma_2'}}{\vec{v_2}}[\vec{v_1} / \vec{v_2}]$  
  as $\vec{v_1} = \vec{v_2}$: note that the two are equivalent.

  The expression above computes the set of states that we will draw the vector-variable $\vec{e_t}$ from; 
  now we will update the original state $\pi \in \aP$ with the new vector variable.
  Note that it is sound (and precise!) to perform the update only when 
  $\vec{v} = \vec{v_1} = \vec{v_2}$ 
  (that is, when $\pi = \proji{\ext{\sigma_1}{\sigma_1'}}{\vec{v_1}}[V(\pi) / \vec{v_1}] = 
  \proji{\ext{\sigma_2}{\sigma_2'}}{\vec{v_2}}[V(\pi) / \vec{v_2}]$), 
  as one only wants to update an original state with a value 
  obtained from executing that original state.
  Factoring the update in, we obtain:
  {\small
   \begin{align*}
     & \proj{
      \bigcup_{\tiny
        \begin{array}{c}
          \pi \in \aP \\
          \ext{\sigma_1}{\sigma_1'} \in \aQ_1\\
          \ext{\sigma_2}{\sigma_2'} \in \aQ_2\\
          \vec{v} = \vec{v_1} = \vec{v_2} 
        \end{array}
      }\!\!\!\!
      \pi[\vec{e_t} \mapsto
        \ext{\sigma_1}{\sigma_1'} + \ext{\sigma_2}{\sigma_2'}[e_t]]
      }{\vec{v_1} \cup \vec{v_2}} \\ 
     = \ \ & \proj{
      \bigcup_{\tiny
        \begin{array}{c}
          \pi \in \aP \\
          \ext{\sigma_1}{\sigma_1'} \in \aQ_1\\
          \ext{\sigma_2}{\sigma_2'} \in \aQ_2\\
          \vec{v_1} = \vec{v_2} 
        \end{array}
      }\!\!\!\!
      \pi[\vec{e_t} \mapsto
        \ext{\sigma_1}{\sigma_1'}[\vec{v_1'} / \vec{v}][\vec{e_{t_1}'}] + \ext{\sigma_2}{\sigma_2'}[\vec{v_2'} / \vec{v}][\vec{e_{t_2}'}]]
      }{\vec{v_1'} \cup \vec{v_2'} \cup \vec{v_1} \cup \vec{v_2}} 
  \end{align*}
  }
  The equality holds because although the variables have been renamed; the set of states (i.e., when interpreted as a simple ordered tuple) 
  renames the same.
  The second formula also uses the `standard' semantics of $+$ on simple values instead of our extended semantics, 
  hence the reference to $\vec{e_{t_1}}'$ and $\vec{e_{t_2}}'$ before the +.

  Finally, expressing the last line as a formula, we obtain: 
  $$
    \exists \vec{e_t', v_1', v_2', v_1, v_2}. 
      (\aP \wedge \aQ_1[\vec{v_1'} / \vec{v}] \wedge \aQ_2[\vec{v_2'} / \vec{v}] \wedge (\vec{v} = \vec{v_1} = \vec{v_2}))[\vec{e_t'} / \vec{e_t}]
      \wedge \vec{e_t} = \vec{e_{t_1}'} + \vec{e_{t_2}'}
  $$
  Which is the postcondition of $\plusrule$.

  $\assignrule$: Assuming the premise $\utriple{\aP}{E}{\aQ}$ is precise, 
  \begin{align*}
    \semantics{N}(\aP) & = \semantics{\Eassign{x}{E}}(\aP) \\
                       & = \bigcup_{\exp \in E} \bigcup_{\pi \in \aP} \semantics{\Eassign{x}{\exp}}(\pi) \\
                       & = \bigcup_{\exp \in E} \bigcup_{\pi \in \aP} \lrangle{\semantics{\Eassign{x}{\exp}}(\state_1), \cdots} \\
                       & = \bigcup_{\exp \in E} \bigcup_{\pi \in \aP} \lrangle{\semantics{\exp}(\state_1)
                           [x \mapsto \semantics{\exp}(\state_1)[e_{t_1}]], \cdots } \\
                       & \subseteq \bigcup_{\sigma \in \aQ} \lrangle{\sigma[\vec{x} \mapsto \sigma[\vec{e_t}]], \cdots}
                           \tag{$\bigcup_{\exp \in E} \bigcup_{\pi \in \aP} \semantics{\exp}(\pi) \subseteq \aQ$, \text{ by I.H.}} \\
                       & = \exists \vec{x'}. \aQ[\vec{x'} / \vec{x}] \wedge \vec{x} = \vec{e_t}
  \end{align*}
	Thus $\assignrule$ is sound.

  $\seqrule$: Assume (by the induction hypothesis) that the two premises $\utriple{\aP}{S_1}{\aQ}$ and 
	$\utriple{\aQ}{S_2}{\aR}$ are sound. Then
  \begin{align*}
    \semantics{N}(\aP) & = \semantics{\Eseq{S_1}{S_2}}(\aP) \\
                       & = \bigcup_{\stmt_1 \in S_1} \bigcup_{\stmt_2 \in S_2} \bigcup_{\pi \in \aP} 
													 \semantics{\Eseq{\stmt_1}{\stmt_2}}(\pi) \\
											 & = \bigcup_{\stmt_1 \in S_1} \bigcup_{\stmt_2 \in S_2} \bigcup_{\pi \in \aP}
													 \semantics{\stmt_2}(\semantics{\stmt_1}(\pi)) \\
                       & \subseteq \bigcup_{\stmt_2 \in S_2} \bigcup_{\sigma \in \aQ} \semantics{\stmt_2}(\sigma)
                       \tag{$\bigcup_{\stmt_1 \in S_1} \bigcup_{\pi \in \aP} \semantics{\stmt_1} \subseteq \aQ$, \text{ by I.H.}} \\
                       & \subseteq \aR
                       \tag{$\bigcup_{\stmt_2 \in S_2} {\bigcup \sigma \in \aQ} \semantics{\stmt_2} \subseteq \aR$, by I.H.}
  \end{align*}
  Thus $\seqrule$ is sound.

  $\siterule$: Like $\plusrule$, $\siterule$ requires re-composing states that have 
  gone through different nonterminals.
  We start with:

  \begin{align*}
    \semantics{N}(\aP) & = \!\!\!\! \bigcup_{\tiny \begin{array}{c} b \in B \\ s_1 \in S_1 \\ s_2 \in S_2 \end{array}} \!\!\!\! 
      \bigcup_{\pi \in \aP} \semantics{\Eifthenelse{b}{s_1}{s_2}}(\pi)
  \end{align*}
  Under our extended expression semantics, the branch condition $b$ 
  first executes to assign a value to the auxiliary variable $\vec{b_t}$, 
  then i) the updated state executes through the two branches, 
  ii) the result gets merged according to the branch condition.

  As we did in the case for $\plusrule$, the merge operation can be defined using $\mathsf{ext}$ and $\mathsf{proj}$.
  Consider the following formula:
  \begin{align*}
    \mathsf{ext}( & {\proj{\tbranch[\vec{v_1'}[i] / \vec{v}[i] \ \text{where} \ \semantics{b}(\pi) [\vec{b_t}][i] = \Efalse}
    {\vec{v_1'}}}, \\
                  & {\proj{\fbranch[\vec{v_2'}[i] / \vec{v}[i] \ \text{where} \ \semantics{b}(\pi) [\vec{b_t}][i] = \Etrue}
                  {\vec{v_2'}}})
  \end{align*}
  The first $\mathsf{proj}$ in this formula 
  executes $\tbranch$ (i.e., the $\Etrue$ branch), then \emph{projects out} those cases where $\vec{b_t}[i] = \Efalse$---that is, 
  the examples that should not have gone through the $\Etrue$ branch.\footnote{
    The conditional substitution can be defined simply using if-then-else, e.g., 
    $\Eifthenelse{\vec{b_t}[i] = \Efalse}{\vec{v_1'}[i]}{\vec{v}[i]}$.
  }
  Similarly, the second $\mathsf{proj}$ will execute the $\Efalse$ branch and project out examples that should go through the 
  $\Etrue$ branch instead.
  The $\mathsf{ext}$ operation then reunites these examples together---one can quickly see how this set of states 
  is equivalent to the `standard' accepted states generated by If-Then-Else, i.e., 
  $\lrangle{\semantics{\Eifthenelse{b}{s_1}{s_2}(\state_1)}, \cdots}$.
  We will denote this operator as $\star$; e.g., $\tbranch \star \fbranch$ 
  is equivalent to the formula above.
  We now have:
  \begin{align*}
   \!\!\!\! \bigcup_{\tiny \begin{array}{c} b \in B \\ s_1 \in S_1 \\ s_2 \in S_2 \end{array}} \!\!\!\! 
     \bigcup_{\pi \in \aP} \semantics{\Eifthenelse{b}{s_1}{s_2}}(\pi) 
   = \bigcup_{b \in B} \!\! \bigcup_{\tiny \begin{array}{c} s_1 \in S_1 \\ s_2 \in S_2 \end{array}} \!\!\!\! 
     \bigcup_{\pi \in \aP} \tbranch \star \fbranch\\
  \end{align*}
  We now wish to invoke the induction hypothesis over the first premise triple $\utriple{\aP}{B}{\aP_B}$ 
  (which is why we pulled the union over $b \in B$ out).
  This yields:
  \begin{align*}
  \bigcup_{b \in B} \!\! \bigcup_{\tiny \begin{array}{c} s_1 \in S_1 \\ s_2 \in S_2 \end{array}} \!\!\!\! 
  \bigcup_{\pi \in \aP} \tbranch \star \fbranch 
    \subseteq 
    \bigcup_{\tiny \begin{array}{c} s_1 \in S_1 \\ s_2 \in S_2 \end{array}} \bigcup_{\beta \in \aP_B} 
      \tbranchbeta \star \fbranchbeta
  \end{align*}
  Observe that in the RHS, $\star$ is expanded to: 
  \begin{align*}
    \mathsf{ext}( & {\proj{\tbranchbeta[\vec{v_1'}[i] / \vec{v}[i] \ \text{where} \ \beta[\vec{b_t}][i] = \Efalse]}
    {\vec{v_1'}}}, \\
                  & {\proj{\fbranchbeta[\vec{v_2'}[i] / \vec{v}[i] \ \text{where} \ \beta[\vec{b_t}][i] = \Etrue]}
                  {\vec{v_2'}}})
  \end{align*}
  This is due to the induction hypothesis being applied to the conditional substitution part as well.

  We now wish to bring the $\star$ operator outside of the set union.
  To do so, we must take a similar approach as we did with $\plusrule$, and first introduce two additional ghost states 
  (one for each branch).
  Observe that:
  \begin{align*}
    & \bigcup_{\tiny \begin{array}{c} s_1 \in S_1 \\ s_2 \in S_2 \end{array}} \bigcup_{\beta \in \aB} 
      \tbranchbeta \star \fbranchbeta \\
  = & \!\!\!\! \bigcup_{\tiny \begin{array}{c} s_1 \in S_1 \\ s_2 \in S_2 \end{array}} \!\!\!\! 
    \bigcup_{ \tiny \begin{array}{c} \beta \in \aP_B \\ 
                                     \beta_1 = \beta[\vec{v_1} / \vec{v}] \\ \beta_2 = \beta[\vec{v_2} / \vec{v}] \end{array}} 
    \tbranchext \star \fbranchext\\
  \end{align*} 
  At this point, observe that: 
   \begin{align*}
     \mathsf{ext}( & {\proj{\tbranchext[\vec{v_1'}[i] / \vec{v}[i] \ \text{where} \ \beta[\vec{b_t}][i] = \Efalse]}
    {\vec{v_1'}}}, \\
                  & {\proj{\fbranchext[\vec{v_2'}[i] / \vec{v}[i] \ \text{where} \ \beta[\vec{b_t}][i] = \Etrue]}
    {\vec{v_2'}}}) \\ 
     = \mathsf{ext}( & {\proj{\tbranchext[\vec{v_1'}[i] / \vec{v}[i] \ \text{where} \ \beta_1[\vec{b_{t_1}}][i] = \Efalse]}
    {\vec{v_1'}}}, \\
                  & {\proj{\fbranchext[\vec{v_2'}[i] / \vec{v}[i] \ \text{where} \ \beta_2[\vec{b_{t_2}}][i] = \Etrue]}
    {\vec{v_2'}}}) \\ 
  \end{align*}
  Because $\beta = \beta_1[\vec{v_1} / \vec{v}]$.
  We will shift to this latter definition of $\star$ here.

  Like we did with $\oplus$, define an extended version of $\star$, 
  denoted $\ostar$, that will compute $\star$ over two extended states 
  $\ext{\pi}{\pi_1}$ and $\ext{\pi}{\pi_2}$ iff 
  $\proji{\ext{\pi_1}{\pi_1'}}{\vec{v_1}}[\vec{v_1} / \vec{v_2}] = \proji{\ext{\pi_2}{\pi_2'}}{\vec{v_2}}$.

  Then the following holds:
  {\small
  \begin{align*}
    & \!\!\!\! \bigcup_{\tiny \begin{array}{c} s_1 \in S_1 \\ s_2 \in S_2 \end{array}} \!\!\!\! 
    \bigcup_{ \tiny \begin{array}{c} \beta \in \aP_B \\ 
                    \beta_1 = \beta[\vec{v_1} / \vec{v}] \\ \beta_2 = \beta[\vec{v_2} / \vec{v}] \end{array}} 
    \tbranchext \star \fbranchext\\
    = & 
    \bigcup_{s_1 \in S_1} \!\!\!\! \bigcup_{\tiny \begin{array}{c} \beta \in \aP_B \\ 
                                   \beta_1 = \beta[\vec{v_1} / \vec{v}]\end{array}} \!\!\!\! \tbranchext \ostar
      \bigcup_{s_2 \in S_2} \!\!\!\! \bigcup_{\tiny \begin{array}{c} \beta \in \aP_B \\ 
                                   \beta_2 = \beta[\vec{v_2} / \vec{v}]\end{array}} \!\!\!\! \fbranchext
  \end{align*}
  }
  Observe that $\semantics{s_1}(\ext{\beta}{\beta_1}) = \ext{\semantics{s_1}(\beta)}{\beta_1}$ 
  (same for $s_2$ and $\beta_2$).
  Realign the extension operator back into the semantics, then apply the induction hypotheses for the statements to obtain:
  \begin{align*}
  &\bigcup_{s_1 \in S_1} \!\!\!\! \bigcup_{\tiny \begin{array}{c} \beta \in \aP_B \\ \beta_1 = \beta[\vec{v_1} / \vec{v}]\end{array}} \!\!\!\! 
    \proj{\ext{\semantics{s_1}(\beta)}{\beta_1}}{\vec{v_1}} \ostar
   \bigcup_{s_2 \in S_2} \!\!\!\! \bigcup_{\tiny \begin{array}{c} \beta \in \aP_B \\ \beta_2 = \beta[\vec{v_2} / \vec{v}]\end{array}} \!\!\!\! 
     \proj{\ext{\semantics{s_2}(\beta)}{\beta_2}}{\vec{v_2}} \\ 
  \subseteq & 
    \bigcup_{\ext{\sigma_1}{\sigma_1'} \in \aQ_1} \proj{\ext{\sigma_1}{\sigma_1'}}{\vec{v_1}} \ostar \bigcup_{\ext{\sigma_2}{\sigma_2'} \in \aQ_2} \proj{\ext{\sigma_2}{\sigma_2'}}{\vec{v_2}}
  \end{align*}
  Move $\ostar$ back inside the set union, and pull the $\mathsf{proj}$ operation outside of the set union to obtain:
   \begin{align*}
     \proj{\!\!\!\! \bigcup_{\tiny \begin{array}{c} \ext{\sigma_1}{\sigma_1'} \in \aQ_1 \\ \ext{\sigma_2}{\sigma_2'} \in \aQ_2 \\ \vec{v_1} = \vec{v_2} \end{array}} \!\!\!\! 
      \ext{\sigma_1}{\sigma_1'} \star \ext{\sigma_2}{\sigma_2'}}
      {\vec{v_1} \cup \vec{v_2}}
  \end{align*}
  Re-expand $\star$ to obtain:
   \begin{align*}
     \mathsf{proj}( \bigcup_{\tiny \begin{array}{c} \ext{\sigma_1}{\sigma_1'} \in \aQ_1 \\ \ext{\sigma_2}{\sigma_2'} \in \aQ_2 \\ \vec{v_1} = \vec{v_2} \end{array}}
       \mathsf{ext}( & {\proj{\ext{\sigma_1}{\sigma_1'}[\vec{v_1'}[i] / \vec{v}[i] \ \text{where} \ \vec{b_{t_1}}[i] = \Efalse]}{\vec{v_1'}}}, \\
                  & {\proj{\ext{\sigma_2}{\sigma_2'}[\vec{v_2'}[i] / \vec{v}[i] \ \text{where} \ \vec{b_{t_2}}[i] = \Etrue]}{\vec{v_2'}}})
                  , { \vec{v_1} \cup \vec{v_2}})
  \end{align*}
  Move all $\mathsf{proj}$ operations to the front, and expand $\mathsf{ext}$ into its predicate form to obtain the desired postcondition: 
  $$
      \exists \vec{v_1}, \vec{v_2}, \vec{v_1'}, \vec{v_2'}.
      \aQ_1[\vec{v_1'}[i] / \vec{v}[i] \ \text{where} \ \vec{b_{t_1}}[i] = \Efalse] \wedge
      \aQ_2[\vec{v_2'}[i] / \vec{v}[i] \ \text{where} \ \vec{b_{t_2}}[i] = \Etrue]
      \wedge (\vec{v_1} = \vec{v_2}) 
  $$

  $\whilerule$: 
Let us recall the rule:
{\small
$$  
\infer[\mathsf{While}] {\Gamma \vdash \utriple{\aI}{\Ewhile{B}{S}}{\aI_B \wedge \vec{b_t} = \Ef}}{
      \begin{array}{l}
      \vec{b_{loop}}, \vec{v_1}, \vec{v_2} \ \text{fresh} \\
      \Gamma \vdash \utriple{\aI}{B}{\aI_B} \\
      \Gamma \vdash \utriple{\aI_B \wedge \vec{b_{loop}} =
        \vec{b_t} \wedge \vec{v} = \vec{v_1}}{S}{\aI_B'}
      \end{array} \quad
      \begin{array}{l}
        (\exists \vec{v_1}, \vec{v_2}, \vec{v_1'}, \vec{v_2'}, \vec{b_t}. \aI_B'[\vec{v_{1}'}[i] / \vec{v}[i] \
\text{where} \  \vec{b_{loop}}[i] = \Efalse] \wedge \\
        (\aI_B \wedge \vec{b_{loop}} = \vec{b_t}\wedge \vec{v} = \vec{v_2})
        [\vec{v_2'}[i] / \vec{v}[i] \ \text{where} \ \vec{b_{loop}}[i] = \Etrue] \wedge \\
        (\vec{v_1} = \vec{v_2}))
        \implies
        (\exists \vec{b_t}. \aI_B)
      \end{array}
}
$$
}

  Here, $\aI, \aI_B, \aI_B'$ may be treated as invariants over the vector-states.
  This rule is highly similar to the standard Hoare logic rule for loops:
  it checks whether executing the set of loop bodies $S$, preserves the invariant encoded by 
  $\aI_B$.
  The precondition of executing $S$ only allows in states where the loop condition has evaluated to 
  $\Etrue$.
  The complex part is checking whether $\aI_B'$, the result of executing $S$ on $\aI_B$, satisfies an 
  invariant relation with $\aI_B$.

  Checking that $\aI_B'$ and $\aI_B$ satisfy an invariant relation is similar to what we did for 
  $\siterule$: we create a 'merged invariant' where states that have passed the loop guard $B$ 
  are drawn from $\aI_B'$, while states that have failed to pass the loop guard (and thus should not
  execute the loop body) are drawn from $\aI_B$ instead.
  The premise of the implication check on the right encodes this operation as a formula, in a similar 
  encoding to the postcondition for $\siterule$.
  Then the implication merely checks whether this merged invariant implies $\aI_B$, the original invariant,
  minus the auxiliary variable $b_t$.
  Under this intuition, it is easy to see the soundness of this rule:
  the correctness of the premise of the implication can be shown in a manner identical to the $\siterule$,
  which in turn guarantees that $\aI_B$ is an actual invariant for the set of loop bodies $S$.
  Since $\aI_B$ is an actual invariant, it follows that $\aI_B \wedge \vec{b_t} = \Ef$ is a valid overapproximation 
  of the possible post-states 
  (although, unlike in Hoare logic, this rule is not precise).

  $\mathsf{While\_Exact}$:
  The soundness of $\mathsf{While\_Exact}$ follows from the fact that the postcondition of 
  $\mathsf{While\_Exact}$ is \emph{precisely} the set of vector-states that may occur when 
  executing a pre-vector-state $\sigma_{\mathit{start}} \in \aP$.
  This is because a post-state $\sigma$ is in the post-condition if and only if it satisfies the 
  following conditions:
  \begin{itemize}
    \item There exists a corresponding $\sigma_{\mathit{start}} \in \aP$, such that, 
    \item For each $j$ in the vector-length of $\sigma$,
    \item There exists $b \in B$ and $s \in S$ such that 
      $\sem{\Ewhile{b}{s}}(\sigma_{\mathit{start}}[j]) = \sigma[j]$ 
      (which is captured by the formula inside the quantifiers).
  \end{itemize}
  Thus the postcondition of $\mathsf{While\_Exact}$ is actually merely a generalization of the 
  formula that expresses the semantics of single loops for single states~\cite{winskel} 
  towards sets of programs 
  and vector-states, and it thus follows that this postcondition precisely captures 
  the set of possible post vector-states.

	We now move onto structural rules in the logic.

  $\weaken$: 
  Soundness follows from the fact that 
  $\semantics{N_1}(\aP') \subseteq \semantics{N}(\aP') \subseteq \semantics{N}(\aP) \subseteq \aQ \subseteq \aQ'$ 
  if $\aP \subseteq \aP'$, $N_1 \subseteq N$, and $\aQ' \subseteq \aQ$.

  $\conj$: 
  From $\semantics{N}(\aP) \! \subseteq \! \aQ_1 \wedge \semantics{N}(\aP) \! \subseteq \! 
  \aQ_2 \Rightarrow \semantics{N}(\aP) \subseteq \aQ_1 \wedge \aQ_2$.

  $\grmdisj$: From $\semantics{(N_1 \cup N_2)}(\aP) = \semantics{N_1}(\aP) \cup \semantics{N_2}(\aP) \subseteq \aQ$.

  $\inv$: Clearly, $\semantics{N}(\aP) = \aP$ if $N$ does not modify variables in $\aP$.

  $\subone$: By defining the triple as a formula, e.g., $\semantics{N}(\aP) \subseteq \aQ$, the entire formula can be
  alpha-renamed.
  That $\semantics{N}(\aP) \subseteq \aQ$ implies that $(\semantics{N}(\aP) \subseteq \aQ) [\vec{y} / \vec{z}]$; 
  since $N$ does not have intersecting variables with
  $\vec{y}$ and $\vec{z}$, the substitution leaves $N$ unchanged and thus leads 
  to $\semantics{N}(\aP[\vec{y} / \vec{z}]) \subseteq \aQ [\vec{y} / \vec{z}]$, which in turn is equivalent to 
 	$\utriple{\aP[\vec{y} / \vec{z}]}{N}{\aQ[\vec{y} / \vec{z}]}$. 

  $\subtwo$: $\subtwo$ is easier to prove by dividing cases.
  First consider that $\vec{z} \cap (\vars(N) \cup \vars(\aQ)) = \phi$.
  \begin{itemize}
    \item If $\vec{z} \cap \vars(\aP) = \phi$, then $\utriple{\aP}{N}{\aQ} \equiv \utriple{\aP[\vec{y} / \vec{z}]}{N}{\aQ}$ and is thus sound.
    \item If $\vec{y} \cap \vars(N) = \phi$, then $\utriple{\aP[\vec{y} / \vec{z}]}{N}{\aQ[\vec{y} / \vec{z}]} \equiv 
      \utriple{\aP[\vec{x} / \vec{z}]}{N}{\aQ}$ is a valid triple by alpha-renaming; thus the rule
      is sound.
    \item If $\vec{z} \cap \vars(\aP) \neq \phi$ and $\vec{y} \cap \vars(N) \neq \phi$: 
      Let $\pi$ be a vector-state in $\aP[\vec{y} / \vec{z}]$.
      We show that there exists a $\sigma \in \aQ$ where $\semantics{N}(\pi) = \sigma$.
      First observe that there exists some $\pi' \in \aP$ such that $\pi' = \pi$ on all variables 
      not in $\vec{z}$: this is because $\aP[\vec{y} / \vec{z}]$ is stronger than $\aP$ on variables not in $\vec{z}$.
      Thus there exists $\sigma' \in \aQ$ such that $\sigma'$ is `partially correct': 
      $\semantics{N}(\pi) = \sigma'$ on all variables not in $\vec{z}$.

      For variables $z \in \vec{z}$ of $\pi$, because $\vec{z} \cup \vars{N} = \phi$, $N$ leaves these variables 
      untouched; also because $\vec{z} \cup \vars{N} = \phi$, if 
      there exists $\sigma' \in \aQ$ for some specific configuration of non-$\vec{z}$ variables, 
      there must also exist $\sigma \in \aQ$ such that $\sigma = \pi$ on variables in $\vec{z}$.
      Thus there exists $\sigma \in \aQ$ such that $\semantics{N}(\pi) = \sigma$ for every 
      $\pi \in \aP[\vec{y} / \vec{z}]$; thus the rule is sound.
  \end{itemize}

	$\hp, \applyhp$: By soundness of mathematical structural induction.

  This covers all rules in unrealizability logic, and thus concludes the proof of soundness.
\end{proof}

Having proved soundness, we will now start to prove completeness---starting 
from Lemma~\ref{thm:precision}.
As stated, the proof of Lemma~\ref{thm:precision} will rely on a structural 
induction proof similar to the proof of Theorem~\ref{thm:soundness}, with 
the $\while$ rule as an exception.

\begin{replemma}{thm:precision}[Completeness of Expression and Statement Rules]
  Let $G$ be a regular tree grammar and $N$ be a set of programs 
  generated by the RHS of a production.

  Then the expression and statement rules of unrealizability logic 
  \emph{preserve completeness}, in the sense that, 
  if all (sound) required premise triples are derivable, 
  then all (sound) triples about $N$ are derivable in unrealizability logic.
\end{replemma}
\begin{proof}

  To prove preservation of completeness, we will rely on a somewhat counterintuitive lemma:
  \begin{lemma}
    Let us say that a triple $\utriple{\aP}{N}{\aQ}$ is reverse-sound if 
    $\semantics{\aP}(N) \supseteq \aQ$.
    Then the expression and statement rules (bar $\whilerule$) preserve reverse-soundness; that is, 
    if the premise triples are reverse-sound, then the conclusion is also reverse-sound.
  \end{lemma}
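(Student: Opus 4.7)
The plan is to mirror the structural induction used in the proof of Theorem~\ref{thm:soundness}, inverting the direction of every set inclusion. I would perform case analysis on the final rule used to derive $\utriple{\aP}{N}{\aQ}$, with the inductive hypothesis that each premise triple $\utriple{\aP_i}{N_i}{\aQ_i}$ satisfies $\aQ_i \subseteq \semantics{N_i}(\aP_i)$, and aim to establish that the conclusion satisfies $\aQ \subseteq \semantics{N}(\aP)$.

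The observation that makes this tractable is that essentially every step of each chain in the soundness proof was already a genuine equality, and the \emph{only} $\subseteq$ step was the single application of the induction hypothesis. Hence the same chains, read with the direction of the I.H.\ applications reversed, yield the $\supseteq$ chains needed here. For the base cases ($\zero$, $\one$, $\varrule$, $\truerule$, $\falserule$) the conclusion is already an equality with no premise involved, so reverse-soundness holds trivially. Similarly, $\notrule$, $\assignrule$, and $\seqrule$ go through by directly reversing the single I.H.\ application in each of their soundness chains.

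The main obstacle will be the binary rules ($\mathsf{Bin}$, $\mathsf{Comp}$, $\mathsf{And}$) and the $\siterule$ rule, since these involve the ghost-state synchronization operators $\oplus$ and $\ostar$ introduced in the soundness proof. I would need to argue that these operators are themselves exact: when the extended states in $\aQ_1$ and $\aQ_2$ each \emph{include} (rather than merely overapproximate) the true extended semantic image, then their ghost-matched combination also includes the true combined image. This is precisely what the soundness proof established implicitly through its chains of equalities; the matching condition $\vec{v_1} = \vec{v_2}$ ensures no pair of extended states is composed unless it arose from a common pre-state, so no spurious elements are introduced and no genuine ones are lost beyond what the I.H.\ itself underapproximates. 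Once the exactness of $\oplus$ and $\ostar$ is verified, the conclusion follows by substituting the reversed I.H.\ applications into each premise slot.

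Finally, I would note that $\whilerule$ is rightly excluded, because an invariant-based rule is fundamentally not exact: different invariants yield different, incomparable postconditions for the same loop. This does not weaken Lemma~\ref{thm:precision}, since the loop case there is treated separately by appealing to the expressibility of the weakest precondition in the assertion language, as discussed in \S\ref{SubSe:NoWhileLogic}.
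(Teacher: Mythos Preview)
Your proposal is correct and follows essentially the same approach as the paper: both recognize that the soundness chains consist entirely of equalities except at the single I.H.\ step, so reversing the direction of the I.H.\ immediately yields reverse-soundness. Your added discussion of the exactness of $\oplus$ and $\ostar$ is more explicit than the paper's treatment (which simply asserts ``the only place where the semantic expansion is a subset is when applying the induction hypothesis''), but this is elaboration of the same argument rather than a different route.
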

  \begin{proof}
    The proof of reverse-soundness is almost identical to the proof of soundness in Theorem~\ref{thm:soundness}:
    in particular, observe that the expansion of the semantics for each case is done via equality except for 
    the lines in which we invoke the induction hypothesis.
    The only difference comes from the direction of the subset relation due to the induction hypothesis; the rest of the 
    expansion is identical.

    $\zero, \one, \varrule, \truerule, \falserule$: Take $\varrule$ as an example.
    As shown in the base case section for the proof of Theorem~\ref{thm:soundness}, 
    $\semantics{N}(\aP) = \semantics{x}(\aP) = \exists  \vec{e_t'}. \aP[\vec{e_t'} / \vec{e_t}] \wedge \vec{e_t} = \vec{x}$, 
    so this rule is clearly precise (the base case does not invoke the induction hypothesis).

    $\notrule$: In this case, the induction hypothesis states that the premise triple $\utriple{\aP}{B}{\aQ}$ 
    is reverse-sound (instead of sound).
    Thus the semantic expansion becomes:
  \begin{align*}
    \semantics{N}(\aP) & = \semantics{!B}(\aP) \\
                       & = \bigcup_{\bexp \in B} \bigcup_{\pi \in \aP} \semantics{! \bexp}(\pi)
                           \tag{$!\bexp \in !B \iff \bexp \in B$}\\
                       & = \bigcup_{\bexp \in B} \bigcup_{\pi \in \aP} \lrangle{\semantics{!\bexp}(\state_1), \cdots } \\
                       & = \bigcup_{\bexp \in B} \bigcup_{\pi \in \aP}
                           \lrangle{\semantics{b}(\state_1)[b_{t_1} \mapsto ! \semantics{\bexp}(\state_1)[b_{t_1}]], \cdots } \\
                       & \supseteq \bigcup_{\sigma \in \aQ} \lrangle{\sigma[\vec{b_t} \mapsto ! \sigma[\vec{b_t}]]}
                           \tag{$\bigcup_{\bexp \in B} \bigcup_{\pi \in \aP} \semantics{\bexp}(\pi) \subseteq \aQ$, \text{ by I.H.}}\\
                       & = \exists \vec{b_t'}. \aQ[\vec{b_t'} / \vec{b_t}] \wedge \vec{b_t} = \lnot \vec{b_t'}                        
	\end{align*}
	And this shows that $\notrule$ preserves reverse-soundness.

    $\mathsf{Bin}$, $\mathsf{Comp}$, $\andrule$: Like $\notrule$, the only place where the semantic expansion is a subset is when 
    applying the induction hypothesis; assuming reverse-soundness via the induction hypothesis yields the 
    desired result.

  $\assignrule, \seqrule, \siterule$: The same reasoning applies for the statement nonterminals as well: 
  assume that the premise is reverse-sound by the induction hypothesis, then apply the 
	same kind of semantic expansion in the soundness argument
	for the respective cases to show that the rules are reverse-sound.
  \end{proof}
  The fact that the rules are reverse-sound and that they are also sound implies that the rules are \emph{precise}: 
  that is, if the provided premise triples are such that $\semantics{N}(\aP) = \aQ$, then 
  the conclusion will also have that $\semantics{N}(\aP) = \aQ$.
  From this, one can apply $\weaken$ to derive any desired triple.

  The overarching idea behind Theorem~\ref{thm:precision} is that we wish to prove
  completeness of \emph{only} the expression and statement 
  rules of unrealizability logic, \emph{without} having to reason about nonterminals
  or recursive structures in the grammar---this is why we proved that the rules 
  (bar $\whilerule$) are precise.
  Note that this theorem alone does \emph{not} imply the completeness of 
  unrealizability logic; 
  this theorem will later be used in the proof of Lemma~\ref{lem:strongest_triple}, 
  which performs the induction over the structure of the grammar, 
  to prove (relative) completeness of the whole logic.

  The only constructor left to consider is $\mathsf{While}$: 
  the preciseness of loops follows from the rule $\mathsf{While\_{Exact}}$, which 
  \rone does not have any premises and 
  \rtwo as shown in the proof of Theorem~\ref{thm:soundness}, 
  the rule is precise.
  Thus we will always be able to construct $\aQ$ such that $\sem{\Ewhile{B}{S}}(\aP) = \aQ$, 
  for some given $\aP$ and $\Ewhile{B}{S}$.
\end{proof}

Following the proof of Lemma~\ref{thm:precision}, we proceed to 
prove Lemma~\ref{lem:strongest_triple} by induction on the number of 
hypotheses in the context.

\begin{replemma}{lem:strongest_triple}[Derivability of the Strongest Triple]
  Let $N$ be a nonterminal from a RTG $G$ and $\vec{z}$ be a set of auxiliary symbolic variables.
  Let $\aQ_0 \equiv \semantics{N}(\vec{x} = \vec{z})$; that is,
  $\aQ_0$ is a formula that precisely captures the behavior of the set of programs $L(N)$ on the symbolic vector-state
  $x_1 = z_1 \wedge \cdots x_n = z_n$.
  We refer to $\utriple{\vec{x} = \vec{z}}{N}{\aQ_0}$ as the \emph{strongest triple} for $N$.
  Then $\vdash \utriple{\vec{x} = \vec{z}}{N}{\aQ_0}$ is derivable.
\end{replemma}

\begin{proof}
  To prove derivability of the strongest triple within a finite number of steps, we fix a particular
  \emph{proof strategy} that we take.
  The gist of the proof is that a strongest triple can be proved by inserting a finite number of
  hypotheses into the context; thus, one can perform induction on the number of hypotheses inserted,
  i.e., the number of $\hp$ rules applied to the proof tree at a certain point.

  The proof strategy is that upon encountering a new nonterminal $N'$
  for the \emph{first} time in a proof tree, we immediately apply $\hp$ to $N$ and insert
  the strongest triple $\utriple{\vec{x} = \vec{z}}{N'}{\aQ_{0, N'}}$ as a hypothesis into the context $\Gamma$.
  (We refer to this triple as the strongest triple, because as later shown in Theorem~\ref{lem:general_triple}, 
  one can derive any triple starting from the strongest triple).
  This is the only time we will introduce hypotheses into the context, and thus
  the maximum number of hypotheses that may be introduced is $|G|$ (the number of nonterminals in $G$).

  We inductively show that this construction guarantees a finite proof tree that
  proves the strongest triple for any new $N' \in G$ (and thus, $N$ as well).
  (To give a simpler proof, we again assume that $\fb$ is fixed to $\vec{\Et}$; it is again
  straightforward to extend the proof to consider other $\fb$).

  \textbf{Base case: $|G| - 1$ hypotheses in $\Gamma$.}
  The base case is when $N'$ is the last new nonterminal that has been encountered, and hypotheses for all other
  nonterminals have already been inserted into the context $\Gamma$.
  (The induction does not start at $0$, because we are constructing the proof tree bottom-up.)
  Then, following the proof strategy, one must introduce the strongest triple for $N'$ as a hypothesis to produce
  a new context $\Gamma_{N'}$; $\Gamma_{N'}$ will have strongest triples for all $|N'|$ nonterminals as hypotheses.
  For the sake of example, let us assume that $N' \rightarrow \Eassign{x}{E_1} \mid \cdots$.
  Then the application of the $\hp$ rule is:
  $$
    \infer[\hp]{\Gamma \vdash \utriple{\vec{x} = \vec{z}}{N'}{\aQ_{0, N'}}}
    {
      \Gamma_{N'} \vdash \utriple{\vec{x} = \vec{z}}{\Eassign{x}{E_1}}{\aQ_{0, N'}} \quad
      \cdots
    }
  $$
  At this point, observe the following:
  \begin{itemize}
    \item To derive a triple for $\Eassign{x}{E_1}$, one must rely on the $\assignrule$ rule, which
      has as a premise a triple for $E_1$.
    \item Since $E_1$ is a nonterminal in $G$, the precise triple for $E_1$ is already in $\Gamma_{N'}$,
      we may apply $\applyhp$ to resolve this premise.
    \item As shown in Theorem~\ref{thm:precision}, the rule $\assignrule$ preserves completeness: thus,
      one can derive a precise triple for the behavior of $\Eassign{x}{E_1}$ as well.
    \item Then, due to Lemma~\ref{lem:general_triple}, we can prove any valid triple for $\Eassign{x}{E_1}$.
      Of these is the target triple $\utriple{\vec{x} = \vec{z}}{\Eassign{x}{E_1}}{\aQ_{0, N'}}$; we have shown that
      this is derivable.
  \end{itemize}
  The same reasoning applies for other possible productions of $N'$: since precise hypotheses about all
  nonterminals are present in $\Gamma_{N'}$, and the non-structural rules of unrealizability logic are precise,
  one can always derive $\utriple{\vec{x} = \vec{z}}{RHS}{\aQ_{0, N'}}$ for an $RHS$ of $N'$.

  Thus the strongest hypothesis for $N'$ is derivable when there are $|G| - 1$ hypotheses in $\Gamma$.

  \textbf{Inductive case: $0 \leq n < |G| - 1$ hypotheses in $\Gamma$.}
  The inductive case is when $N'$ is not the last nonterminal to be encountered, and there are
  only $n$ nonterminals in the context.
  In this case, one can apply reasoning identical to the base case, with the addition that
  strongest triples for nonterminals \emph{not} in the context can be derived within a finite number
  of steps due to the induction hypothesis.

  The minimum number of hypotheses is when $\Gamma$ is the empty context: thus, we have proved that
  there always exists a finite proof tree deriving the strongest triple for any nonterminal $N$,
  starting from an empty context.
\end{proof}

We now prove Lemma~\ref{lem:general_triple}, which, in conjunction with 
Lemma~\ref{lem:strongest_triple}, implies relative completeness.

\begin{replemma}{lem:general_triple}[Derivability of General Triples]
  Let $G$ be a grammar and $N$ be any nonterminal in $G$.
  Given the strongest triple $H_N$ for the nonterminal $N$
  (as defined in Lemma~\ref{lem:strongest_triple}),
  if $\Gamma \vdash H_N$,
  then $\Gamma \models \utriple{\aP}{N}{\aQ} \implies \Gamma \vdash \utriple{\aP}{N}{\aQ}$.
\end{replemma}

\begin{proof}
  Again, we prove the case where the states for $N$ are defined over a single variable $x$; it is easy to
  extend the proof toward multiple-variable states.

  Let $\vec{x}$ be the vector-variable for $x$, and $\vec{z}$, $\vec{u}$ be fresh vectors of
  symbolic variables (of the same length as $\vec{x}$).
  From $\inv$, one has that
  $\utriple{\aP[\vec{u} / \vec{z}][\vec{z} / \vec{x}]}{N}{\aP[\vec{u} / \vec{z}][\vec{z} / \vec{x}]}$
  for any $\aP$.
  Also, by assumption, one can derive the strongest triple for $N$, $H_N = \utriple{\vec{x} = \vec{z}}{N}{\aQ_0}$.

  By $\conj$ and $\weaken$, one can derive:
  $$
    \Gamma \vdash \utriple{\vec{x} =
    \vec{z} \wedge \aP[\vec{u} / \vec{z}][\vec{z} / \vec{x}]}{N}{\aQ_0 \wedge \aP[\vec{u} / \vec{z}][\vec{z} / \vec{x}]}
  $$
  Let $\aQ_{in}$ be the postcondition $\aQ_0 \wedge \aP[\vec{u} / \vec{z}][\vec{z} / \vec{x}$] of this intermediate triple.
  We show that $\aQ_{in}$ implies $\aQ[\vec{u} / \vec{z}]$.
  Notice that, as $\aQ_0$ is exact, for every $\sigma \in \aQ_{in}$,
  there must exist $\pi \in \vec{x} = \vec{z}$ and $\stmt \in L(N)$ such that
  $\semantics{\stmt}(\pi) = \sigma$.

  Now, if $\pi \not \in \aP[\vec{u} / \vec{z}][\vec{z} / \vec{x}]$,
  then $\sigma \not \in \aP[\vec{u} / \vec{z}][\vec{z} / \vec{x}]$ as well due to the soundness of $\mathsf{Inv}$,
  which is a contradiction.
  Thus $\pi \in (\vec{x} = \vec{z}) \wedge \aP[\vec{u} / \vec{z}][\vec{z} / \vec{x}]$,
  and as $(\vec{x} = \vec{z}) \wedge \aP[\vec{u} / \vec{z}][\vec{z} / \vec{x}] \implies \aP[\vec{u} / \vec{z}]$
  (as the second sub has no effect), $\pi \in \aP[\vec{u} / \vec{z}]$.

  Recall that $\semantics{\stmt}(\pi) = \sigma$ for some $\sigma$.
  Because $\pi \in \aP[\vec{u} / \vec{z}]$, and the soundness of $\subone$,
  the assumption that $\models \utriple{\aP}{N}{\aQ}$ indicates that
  $\utriple{\aP[\vec{u} / \vec{z}]}{N}{\aQ[\vec{u} / \vec{z}]}$ is valid, and thus $\sigma \in \aQ[\vec{u} / \vec{z}]$.
  This holds for arbitrary $\sigma \in \aQ_{in}$, thus $\aQ_{in} \implies \aQ[\vec{u} / \vec{z}]$.

  Thus, by $\weaken$, one can derive:
  $$
    \Gamma \vdash \utriple{\vec{x} = \vec{z} \wedge \aP[\vec{u} / \vec{z}][\vec{z} / \vec{x}]}{N}{\aQ[\vec{u} / \vec{z}]}
  $$
  From $\subtwo$, by substituting $\vec{z}$ for $\vec{x}$, one gets:
  $$
    \Gamma \vdash \utriple{\aP[\vec{u} / \vec{z}]}{N}{\aQ[\vec{u} / \vec{z}]}
  $$
  Finally, apply $\subone$ to substitute $\vec{u}$ for $\vec{z}$;
  this shows that $\Gamma \vdash \utriple{\aP}{N}{\aQ}$ is derivable.
\end{proof}

The proof of Theorem~\ref{lem:strongest_triple} and Theorem~\ref{lem:general_triple}
is similar to the proof of completeness for recursive Hoare logic detailed in~\cite{10years}, 
which also relies on a context to store hypotheses about recursive procedures.

Finally, completeness follows directly from Lemma~\ref{lem:strongest_triple} and Lemma~\ref{lem:general_triple}.

\begin{reptheorem}{thm:completeness}[Completeness]
  Given a grammar $G$ and a nonterminal $N \in G$, $\models \utriple{\aP}{N}{\aQ} \implies \vdash \utriple{\aP}{N}{\aQ}$.
\end{reptheorem}

\begin{proof}
  Because the strongest triple $H_N$ for $N$ is derivable (Lemma~\ref{lem:strongest_triple}),
  completeness follows from Lemma~\ref{lem:general_triple}.
\end{proof}

We now move on to proving theorems related to decidability.

\begin{reptheorem}{thm:undecidability}[Undecidability]
  Let $\syu$ be a synthesis problem with a grammar $\gu$ that does \emph{not}
  contain productions of the form $S::=\Ewhile{B}{S_1}$.
  Checking whether $\syu$ is unrealizable is an undecidable problem.
\end{reptheorem}

\begin{proof}
  Let $C$ be a two-counter machine with registers labeled $x$ and $y$, which supports the instructions
  $\cinc{r}$ (increment the register $r$), $\cdec{r}$ (decrement the register $r$),
  and $\cjez{r}{l}$ (jump to the location $l$ if the value stored in $r = 0$).
  We proceed by reducing the halting problem for $C$ (which is an undecidable problem) to realizability for $\syu$.

  Let $P_C$ be an arbitrary two-counter machine program, defined as a list of instruction-location pairs
  $(i_1, 1), (i_2, 2), \cdots (i_n, n)$ (i.e., the location of the $k$-th instruction is $k$) and an
  initial configuration for the registers, $x_c$ and $y_c$.
  From $P_C$, one can construct a grammar $G_C$,
  by translating each pair $(i_k, l_k)$ as follows.

  \begin{itemize}
    \item First, introduce a new nonterminal $S_k$ in $G_C$.
    \item If $i_k \equiv \cinc{r}$, add
      $S_k \rightarrow \Eseq{\Eassign{r}{r + 1}}{S_{k + 1}} \mid \Eskip$ to $G_C$.
    \item If $i_k \equiv \cdec{r}$, add
      $S_k \rightarrow \Eseq{\Eassign{r}{r - 1}}{S_{k + 1}} \mid \Eskip$ to $G_C$.
    \item If $i_k \equiv \cjez{r}{l}$, add
      $S_k \rightarrow \Eifthenelse{x = 0}{S_l}{S_{k + 1}} \mid \Eskip$ to $G_C$.
    \item Note that each $S_k$ up to this point has two productions, one that goes to $\Eskip$ and
      one that does not.
      We name each non-$\Eskip$ production as $Pr_k$.
    \item Finally, add the nonterminal $S_{n + 1}$ and the productions
      $S_{n + 1} \rightarrow \Eassign{h}{1}$ and $\nstart \rightarrow S_1$ to $G_C$.
  \end{itemize}
  Define the synthesis problem $sy_C$ over three variables $x$, $y$, and $c$, with the grammar as $G_C$,
  the input specification as $(x, y, h) = (x_c, y_c, 0)$, and the output specification as
  $h = 1$ (we do not care about the values of $x$ and $y$ in the output).

  Given a trace of locations $L = [l_1, l_2, \cdots]$
  obtained by executing some $P_C$, one can always find a corresponding, valid series of productions in $G_C$.
  This can be seen inductively: the starting production is $\nstart \rightarrow S_1$, so we start at $S_1$,
  while $P_C$ starts at the location $1$.
  The inductive argument is as follows:
  \begin{itemize}
    \item Assume that we are executing instruction $i_k$ at location $k$, and the current nonterminal is $S_k$.
    \item If $i_k$ is $\cinc{r}$, the next instruction to execute is $k + 1$.
      In addition, $S_k \rightarrow \Eseq{\Eassign{r}{r +1}}{S_{k + 1}} \mid \Eskip$; take the
      non-skip production, $Pr_k$, and generates $S_{k + 1}$ as the next nonterminal.
      (Similar for $\cdec{r}$.)
    \item If $i_k$ is $\cjez{r}{l}$, the non-skip production
      $Pr_k$ is $\Eifthenelse{r = 0}{S_l}{S_{k + 1}}$.
      If the execution takes the true branch, send $S_{k + 1}$ to $\Eskip$ and continue inductively on $S_l$.
      If the execution takes the false branch, send $S_{l}$ to $\Eskip$ and continue inductively on $S_{k + 1}$.
  \end{itemize}

  From this correspondence, the following lemma holds:

  \begin{lemma}
    \label{lem:real_if_halts}
    $sy_C$ is realizable iff $P_C$ halts.
  \end{lemma}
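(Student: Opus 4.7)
The plan is to prove the two directions of the biconditional separately by exploiting a tight structural correspondence between derivation trees in $G_C$ and execution traces of $P_C$. The key observation is that the assignment $\Eassign{h}{1}$ is introduced only by the production $S_{n+1} \rightarrow \Eassign{h}{1}$, so any program $t \in L(G_C)$ that produces $h = 1$ on input $(x_c, y_c, 0)$ must execute that unique assignment; reaching it along a runtime path forces a spine of non-Skip productions $Pr_{k_1}, Pr_{k_2}, \ldots$ in the derivation of $t$ that faithfully mirrors an execution of the two-counter machine.

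\textbf{($P_C$ halts $\Rightarrow$ $sy_C$ realizable.)} Suppose $P_C$, started with $(x, y) = (x_c, y_c)$, visits the finite sequence of locations $1 = l_1, l_2, \ldots, l_m$ before falling off the last instruction. I would build a witness $t$ by induction on $j$ from $1$ to $m$. At step $j$, if $i_{l_j}$ is $\cinc{r}$ (resp.\ $\cdec{r}$), expand $S_{l_j}$ using $Pr_{l_j}$ to get $\Eseq{\Eassign{r}{r+1}}{S_{l_{j+1}}}$ (resp.\ with $r-1$) and recurse into $S_{l_{j+1}}$. If $i_{l_j}$ is $\cjez{r}{l}$, expand to $\Eifthenelse{r = 0}{S_l}{S_{l_j+1}}$; recurse into whichever of the two nonterminals matches $l_{j+1}$, and expand the other branch to $\Eskip$. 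When the recursion reaches $S_{n+1}$, expand it via $S_{n+1} \rightarrow \Eassign{h}{1}$. The resulting $t$ is finite because $m$ is, and by construction executing $t$ on $(x_c, y_c, 0)$ drives $x$ and $y$ through the same updates as $P_C$ and finishes by assigning $h := 1$, witnessing realizability.

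\textbf{($sy_C$ realizable $\Rightarrow$ $P_C$ halts.)} Conversely, let $t \in L(G_C)$ be a witness to realizability, so that executing $t$ on $(x_c, y_c, 0)$ yields $h = 1$. Since $h$ is assigned only by the unique production at $S_{n+1}$, the runtime trace of $t$ must execute some $\Eassign{h}{1}$ that was introduced by expanding an occurrence of $S_{n+1}$ in the derivation tree. I would walk the runtime path from $\nstart$ down to that $S_{n+1}$ occurrence: every nonterminal $S_k$ visited along this path must have been expanded by its non-Skip production $Pr_k$ (otherwise the runtime would execute $\Eskip$ and never reach $S_{n+1}$), and for each $\cjez{r}{l}$ node on the path, the branch actually selected at runtime is the one carrying the continuation. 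By the construction of $G_C$, the sequence of visited nonterminals $S_{k_1}, S_{k_2}, \ldots, S_{k_{m-1}}, S_{n+1}$ is exactly the sequence of locations of an execution of $P_C$ from $(x_c, y_c)$, and the register values at each step agree by a straightforward induction on $j$. Hence $P_C$ reaches its final instruction in $m - 1$ steps and halts.

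\textbf{Main obstacle.} The only delicate point is the treatment of $\cjez{r}{l}$ in both directions: the forward construction must fill the dead branch of each if-then-else with $\Eskip$ to keep $t$ finite and well-formed, while the backward argument must identify the branch that is actually evaluated at runtime as the one carrying the $S_{n+1}$-reaching subtree. Once this correspondence between runtime-executed derivation spines and two-counter-machine traces is formalized by a direct induction on the step count $j$, the equivalence between halting of $P_C$ and realizability of $sy_C$ is immediate, and combined with the undecidability of the halting problem for two-counter machines it yields Theorem~\ref{thm:undecidability}.
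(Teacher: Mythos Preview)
Your proposal is correct and follows essentially the same approach as the paper. The only cosmetic difference is in the backward direction: the paper first normalizes the witness $t_C$ to a variant $t_C'$ in which all untaken if-then-else branches are replaced by $\Eskip$ and then reads off the subscripts of the non-$\Eskip$ productions as the execution trace, whereas you directly follow the runtime-executed spine of the derivation; both arguments extract the same trace and rely on the same key observation that $\Eassign{h}{1}$ is produced only at $S_{n+1}$.
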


  \begin{proof}
    $sy_C$ is realizable $\implies$ $P_C$ halts:
    If $sy_C$ is realizable through a program $t_C \in G_C$, notice that there is a variant of $t_C$, $t_C'$, also
    in $G_C$, for which the `untaken' branches of if-then-elses are sent to $\Eskip$s instead.

    This is because i) $sy_C$ is a single-example synthesis problem, and thus examples only pass through one branch; and
    ii) all productions except those for $S_{n + 1}$ have $\Eskip$ as an option.
    Then the subscripts of the non-skip productions used to derive $t_C'$ serve as a witness execution trace that
    $P_C$ terminates.

    $P_C$ halts $\implies$ $sy_C$ is realizable:
    As $P_C$ halts, the trace of executions $L = [l_1 \cdots l_m]$ is finite; in addition, $l_m$ is always $n$.
    Following the construction above, at location $n$, the current nonterminal is $S_n$.
    If $i_n$ is $\cinc{r}$ or $\cdec{r}$, then $S_n$ takes the non-skip production and the next active nonterminal is $S_{n + 1}$.
    If $i_n$ is $\cjez{l}{r}$, note that it cannot take the branch as the trace $l_m$ is the final instruction;
    thus, according to the construction above, one sends $S_l$ to $\Eskip$ and continues with $S_{n + 1}$.

    In both cases, $S_{n + 1}$ may only produce $\Eassign{h}{1}$, which is the final statement in the program,
    so $sy_C$ is realizable.
  \end{proof}
  From Lemma~\ref{lem:real_if_halts}, one can see that realizability is undecidable,
  because the halting problem for two-counter machines is also undecidable.
\end{proof}

We now give the proof that when limited to finite domains, proving unrealizability becomes 
a decidable problem (even those containing while loops).

\begin{reptheorem}{thm:finite_decidability}[Decidability of Unrealizability over Finite Domains]
  Determining whether a synthesis problem $\syfin$ is realizable, where the grammar $\gfin$
  is valid with respect to $G_{impv}$, is \emph{decidable} if the semantics of
  programs in $\gfin$ is defined over a finite domain $D$.
\end{reptheorem}

\begin{proof}
  Let $|D|$ denote the size of the domain, and $|\gfin|$ the number of nonterminals in 
  $\gfin$.
  Given the starting nonterminal $\nstart \in \gfin$, we argue that 
  performing grammar-flow analysis on $\gfin$ to compute the set of producible values 
  as a greatest fixed point terminates in a finite number of steps.

  The GFA procedure is simple: for each nonterminal $N \in \gfin$, construct a map
  $T_N$ that maps an input (vector-)states to the set of all possible output (vector-)states.
  $T_N$ is guaranteed to have a finite number of entries, because 
  the domain is finite: thus there are at most a finite number of distinct vector-states 
  ($|D|^v$ for $v$ variables) that may be input.
  Likewise, the set $T_N(\pi)$ for an arbitrary input vector-state $\pi$ is also 
  guaranteed to be finite.

  Then it is easy to see the GFA procedure terminates: starting from the single input 
  example and $0$-ary operators, compute the possible set of outputs, 
  and use them as possible inputs for computing the results of other productions recursively.
  Note that at this point, computing the results of productions is also a decidable 
  process that terminates, because while loops either i) terminate or ii) display a fixed 
  pattern of states (implying nontermination) within a finite number of steps.
  The process is guaranteed \emph{not} to remove any possible sets from the output; 
  thus at each step, the process is guaranteed to i) add nothing to an output set, or 
  ii) add something to an output step.

  If there comes a point where nothing is added to an output set for all nonterminals,
  then the output sets have reached a fixed point and GFA may terminate.
  Otherwise, GFA will keep adding to output sets until the output set has saturated: 
  this is guaranteed to terminate in a finite amount of steps, as the output set 
  has at most finite cardinality.
  Thus GFA will terminate in a finite amount of steps, and the output sets computed 
  in this method are also precise: thus unrealizability amounts to checking whether 
  the desired output is in $T_{\nstart}(I)$ (where $I$ is the vectorized input example).
  Thus unrealizability is decidable.
 %

\end{proof}

\end{document}